\newtheorem{rrule}{R-Rule}
\newtheorem{definition}{Definition}
\newtheorem{prop}{Proposition}
\newtheorem{lemma}{Lemma}
\newtheorem{theorem}{Theorem}
\begin{document}

\begin{frontmatter}

    \title{Further Improvements for SAT in Terms of Formula Length}

    \author{Junqiang Peng}
    \ead{jqpeng0@foxmail.com}
    \author{Mingyu Xiao\corref{correspondingauthor}}
    \cortext[correspondingauthor]{Corresponding author}
    \ead{myxiao@gmail.com}
    \address{School of Computer Science and Engineering, \\
        University of Electronic Science and Technology of China, China}

    \begin{abstract}
        In this paper, we prove that the general CNF satisfiability problem can be solved in $O^*(1.0638^L)$ time, where $L$ is the length of the input CNF-formula (i.e., the total number of literals in the formula), which improves the previous result of $O^*(1.0652^L)$ obtained in 2009.
        Our algorithm was analyzed by using the measure-and-conquer method.
        Our improvements are mainly attributed to the following two points: we carefully design branching rules to deal with degree-5 and degree-4 variables to avoid previous bottlenecks; we show that some worst cases will not always happen, and then we can use an amortized technique to get further improvements.
        In our analyses, we provide some general frameworks for analysis and several lower bounds on the decreasing of the measure to simplify the arguments.
        These techniques may be used to analyze more algorithms based on the measure-and-conquer method.
    \end{abstract}

    \begin{keyword}
        \texttt{Satisfiability} \sep
        \texttt{Parameterized Algorithms} \sep
        \texttt{Measure-and-Conquer} \sep
        \texttt{Amortized Analysis} \sep
        \texttt{Worst-Case Analysis}
    \end{keyword}

\end{frontmatter}

\section{Introduction}
Propositional Satisfiability is the problem of determining, for a formula of the propositional calculus, if there is an assignment of truth values to its variables for which that formula evaluates to true.
By SAT, we mean the problem of propositional satisfiability for formulas in conjunctive normal form (CNF)~\cite{DBLP:conf/dimacs/CookM96}.
SAT is the first problem proved to be NP-complete~\cite{DBLP:conf/stoc/Cook71}, and it plays an important role in computational complexity, artificial intelligence, and many others~\cite{DBLP:series/faia/2009-185}.
There are numerous investigations on this problem in different fields, such as approximation algorithms, randomized algorithms, heuristic algorithms, and exact and parameterized algorithms.
In this paper, we study parameterized algorithms for SAT parameterized by the input length.

In order to measure the running time bound for SAT, there are three frequently used parameters: the number of variables $n$, the number of clauses $m$, and the input length $L$. The input length $L$ is defined as the sum of the number of literals in each clause.
The number of variables $n$ should be the most basic parameter. The simple brute force algorithm to try all $2^n$ possible assignments of the $n$ variables will get the running time bound of $O^*(2^n)$.\footnote{The $O^*$ notation supervises all polynomial factors, i.e., $f(n)=O^*(g(n))$ means $f(n)=O(g(n)n^{O(1)})$.}
After decades of hard work, no one can break this trivial bound.
The Strong Exponential Time Hypothesis (SETH) conjectures that SAT cannot be solved in time $O^*(c^n)$ for some constant $c<2$~\cite{DBLP:journals/jcss/ImpagliazzoP01}.
For a restricted version, the $k$-SAT problem (the length of each clause in the formula is bounded by a constant $k$) can be solved in $O^*(c(k)^n)$ time for some value $c(k)<2$ depending on $k$. There are two major research lines: one is deterministic local search algorithms like Sch{\"{o}}ning's algorithm~\cite{DBLP:conf/focs/Schoning99,DBLP:conf/icalp/Liu18} and the other one is randomized algorithms based on PPZ~\cite{DBLP:journals/cjtcs/PaturiPZ99} or PPSZ~\cite{DBLP:journals/jacm/PaturiPSZ05, DBLP:conf/focs/Scheder21}.
%Currently, the best upper bounds are obtained by~\cite{DBLP:conf/focs/Scheder21} and~\cite{DBLP:conf/icalp/Liu18} respectively.
For example, 3-SAT can be deterministically solved in $O^*(1.3279^n)$ time~\cite{DBLP:conf/icalp/Liu18} and 4-SAT can be deterministically solved in $O^*(1.4986^n)$ time~\cite{DBLP:conf/icalp/Liu18}.

When it comes to the parameter $m$, Monien \textit{et al.}~\cite{monien1981upper} first gave an algorithm with time complexity $O^*(1.260^m)$ in 1981. Later, the bound was improved to $O^*(1.239^m)$ by Hirsch~\cite{DBLP:conf/soda/Hirsch98} in 1998, and then improved to $O^*(1.234^m)$ by Yamamoto~\cite{DBLP:conf/isaac/Yamamoto05} in 2005. Now the best result is $O^*(1.2226^m)$ obtained by Chu, Xiao, and Zhang~\cite{DBLP:journals/tcs/ChuXZ21}.

The input length $L$ is another important and frequently studied parameter. It is probably the most precise parameter to describe the size of the input CNF-Formula.
From the first algorithm with running time bound $O^*(1.0927^L)$ by Van Gelder~\cite{DBLP:journals/iandc/Gelder88} in 1988, the result was improved several times. In 1997, the bound was improved to $O^*(1.0801^L)$ by Kullmann and Luckhardt~\cite{kullmann1997deciding}. Later, the bound was improved to $O^*(1.0758^L)$ by Hirsch~\cite{DBLP:conf/soda/Hirsch98} in 1998, and improved again by Hirsch~\cite{DBLP:journals/jar/Hirsch00a} to $O^*(1.074^L)$ in 2000.
Then Wahlstr{\"{o}}m~\cite{DBLP:conf/esa/Wahlstrom05} gave an $O^*(1.0663^L)$-time algorithm in 2005.
In 2009, Chen and Liu~\cite{DBLP:conf/wads/ChenL09} achieved a bound $O^*(1.0652^L)$ by using the measure-and-conquer method.
In this paper, we further improve the result to $O^*(1.0638^L)$.
We list the major progress and our result in Table \ref{tb-result-L}.

\begin{table}[t]\label{tb-result-L}
\begin{center}
\caption{Previous and our upper bound for SAT}
\begin{tabular}{l|c}
    \hline
    Running time bounds & References \\
    \hline
    $O^*(1.0927^L)$ & Van Gelder~1988~\cite{DBLP:journals/iandc/Gelder88} \\
    $O^*(1.0801^L)$ & Kullmann and Luckhardt~1997~\cite{kullmann1997deciding} \\
    $O^*(1.0758^L)$ & Hirsch~1998~\cite{DBLP:conf/soda/Hirsch98}\\
    $O^*(1.074^L)$ & Hirsch~2000~\cite{DBLP:journals/jar/Hirsch00a} \\
    $O^*(1.0663^L)$ & Wahlstr{\"{o}}m~2005~\cite{DBLP:conf/esa/Wahlstrom05} \\
    $O^*(1.0652^L)$ & Chen and Liu~2009~\cite{DBLP:conf/wads/ChenL09}\\
    $O^*(1.0638^L)$ & \textbf{This paper} \\
    \hline
\end{tabular}
\end{center}
\end{table}

It is also worth mentioning the maximum satisfiability problem (MaxSAT), which is strongly related to the SAT problem.
SAT asks whether we can give an assignment of the variables to satisfy all clauses, while MaxSAT asks us to satisfy the maximum number of clauses.
For MaxSAT, researchers usually consider the decision version of it: whether we can satisfy at least $k$ clauses.
Thus, $k$ is a natural parameter in parameterized algorithms. The current best result for MaxSAT parameterized by $k$ is $O^*(1.325^k)$~\cite{DBLP:journals/tcs/ChenXW17}.
The three parameters $n$, $m$, and $L$ mentioned above are also frequently considered for MaxSAT.
For the number of variables $n$, similar to SAT, the bound $O^*(2^n)$ obtained by a trivial algorithm is still not broken so far, and it is impossible to break under SETH.
In terms of the number of clauses $m$, very recently, the running time bound of MaxSAT was improved to $O^*(1.2886^m)$~\cite{DBLP:conf/ijcai/Xiao22}.
When it comes to the length of the formula $L$, the bound was also recently improved to $O^*(1.0927^L)$~\cite{DBLP:conf/aaai/AlferovB21}.

Our algorithm, as well as most algorithms for SAT-related problems, is based on the branch-and-search paradigm.
The idea of the branch-and-search algorithm is simple and practical: for a given CNF-formula $\mathcal{F}$, we iteratively branch on a variable or literal $x$ into two branches by assigning value 1 or 0 to it. Let $\mathcal{F}_{x=1}$ and $\mathcal{F}_{\overline{x}=1}$ be the resulting CNF-formulas by assigning value 1 and 0 to $x$, respectively. It holds that $\mathcal{F}$ is satisfiable if and only if at least one of $\mathcal{F}_{x=1}$ and $\mathcal{F}_{\overline{x}=1}$ is satisfiable. To get a running time bound, we need to analyze how much the parameter $L$ can decrease in each branch. To break bottlenecks in direct analysis, some references~\cite{DBLP:conf/esa/Wahlstrom05,DBLP:conf/wads/ChenL09} analyzed
the algorithms based on new measures and gave the relation between the new measures and $L$.
The measure-and-conquer method is one of the frequently used techniques.
Our algorithm in this paper will also adopt the measure-and-conquer method and deal with variables from high degree to low degree.
We compare our algorithm with the previous algorithm~\cite{DBLP:conf/wads/ChenL09} based on the measure-and-conquer method.
The algorithm in~\cite{DBLP:conf/wads/ChenL09} carefully analyzed branching operations for variables of degree $4$ and used a simple and uniform rule to deal with variables of degree at least 5. Their bottlenecks are some cases to deal with degree-4 variables. Our algorithm will carefully analyze the branching operation for degree-$5$ variables and use simple rules for degree-$4$ variables. In the conference version of this paper~\cite{DBLP:conf/sat/PengX21}, we have shown that these modifications led to improvements.
In this version, we further show that the bottlenecks will not always happen, and we can use an amortized technique to combine bottleneck cases with the following good cases to get a further improved average bound.
To do this, we also need to modify some steps of the algorithm in the conference version~\cite{DBLP:conf/sat/PengX21} and carefully design and analyze the branching operations for some special structures to satisfy the requirement of amortization. Finally, we are able to improve the result to $O^*(1.0638^L)$.
In our algorithm, to simplify some case analyses and arguments, we provide some general frameworks for analysis and establish several lower bounds on the decreasing of the measure. The lower bounds may reveal some structural properties of the problem, and the analysis framework may be used to analyze more algorithms based on the measure-and-conquer method.

\section{Preliminaries}

% This part is the same as the version of SAT-2021
Let $V = \{x_1, x_2, ..., x_n\}$ denote a set of $n$ boolean \emph{variables}.
Each variable $x_i$ ($i \in \{1, 2, ..., n\}$) has two corresponding \textit{literals}: the positive literal $x_i$ and the negative literal $\overline{x_i}$ (we use $\overline{x}$ to denote the negation of a literal $x$, and $\overline{\overline{x}}=x$).
A \textit{clause} on $V$ consists of some literals on $V$.
Note that we allow a clause to be empty.
A clause containing literal $z_1,z_2,\dots, z_q$ is simply written as $z_1z_2\dots z_q$. Thus, we use $zC$ to denote the clause containing literal $z$ and all literals in clause $C$. We also use $C_1C_2$ to denote the clause containing all literals in clauses $C_1$ and $C_2$.
We use $\overline{C}$ to denote a clause that contains the negation of every literal in clause $C$. That is, if $C=z_1z_2...z_q$, then $\overline{C} = \overline{z_1}\overline{z_2}...\overline{z_q}$.
A \emph{CNF-formula} on $V$ is the conjunction of a set of clauses $\mathcal{F} = C_1\wedge C_2\wedge  ...\wedge C_m$. When we say a variable $x$ is contained in a clause (resp., a formula), it means that the clause (resp., at least one clause of the formula) contains the literal $x$ or its negation $\overline{x}$.

An assignment for $V$ is a map $A: V\rightarrow \{0, 1\}$. A clause $C_j$ is \textit{satisfied} by an assignment if and only if there exists at least one literal in $C_j$ such that the assignment makes its value 1. A CNF-formula is \emph{satisfied} by an assignment $A$ if and only if each clause in it is satisfied by $A$. We say a CNF-formula is \textit{satisfiable} if it can be satisfied by at least one assignment.
We may assign value $0$ or $1$ to a literal, which is indeed to assign a value to its variable to make the corresponding literal 0 or 1.

A literal $z$ is called an \textit{$(i, j)$-literal} (resp., an \textit{$(i^+, j)$-literal} or \textit{$(i^-, j)$-literal}) in a formula $\mathcal{F}$ if $z$ appears $i$ (resp., at least $i$ or at most $i$) times and $\overline{z}$ appears $j$ times in the formula $\mathcal{F}$. Similarly, we can define $(i, j^+)$-literal, $(i, j^-)$-literal, $(i^+, j^+)$-literal, $(i^-, j^-)$-literal, and so on.
Note that literal $z$ is an $(i, j)$-literal if and only if literal $\overline{z}$ is a $(j, i)$-literal.
A variable $x$ is an \textit{$(i, j)$-variable} if the positive literal $x$ is an $(i, j)$-literal.

For a variable or a literal $x$ in formula $\mathcal{F}$, the \textit{degree} of it, denoted by $deg(x)$, is the number of $x$ appearing in $\mathcal{F}$ plus the number of $\overline{x}$ appearing in $\mathcal{F}$, i.e., $deg(x)=i+j$ for an $(i, j)$-variable or $(i, j)$-literal $x$.
A \textit{$d$-variable} (resp., \emph{$d^+$-variable} or \emph{$d^-$-variable}) is a variable with the degree exactly $d$ (resp., at least $d$ or at most $d$). The degree of a formula $\mathcal{F}$ is the maximum degree of all variables in $\mathcal{F}$. For a clause or a formula $C$, the set of variables whose literals appear in $C$ is denoted by $var(C)$.

The $\textit{length}$ of a clause $C$, denoted by $|C|$, is the number of literals in $C$. A clause is a \textit{$k$-clause} or \textit{$k^+$-clause} if the length of it is $k$ or at least $k$. We use $L(\mathcal{F})$ to indicate the length of a formula $\mathcal{F}$. It is the sum of the lengths of all clauses in $\mathcal{F}$, which is also the sum of the degrees of all variables in $\mathcal{F}$. A formula $\mathcal{F}$ is called \textit{$k$-CNF formula} if each clause in $\mathcal{F}$ has a length of at most $k$.

In a formula $\mathcal{F}$, a literal $x$ is called a \emph{neighbor} of a literal $z$ if there is a clause containing both $z$ and $x$. The set of neighbors of a literal $z$ in a formula $\mathcal{F}$ is denoted by \textit{$N(z, \mathcal{F})$}. We also use $N^{(k)}(x, \mathcal{F})$ (resp., $N^{(k+)}(z, \mathcal{F})$) to denote the neighbors of $z$ in $k$-clauses (resp., $k^+$-clauses) in $\mathcal{F}$, i.e., for any $z'\in N^{(k)}(z, \mathcal{F})$ (resp., $z'\in N^{(k+)}(z, \mathcal{F})$), there exists a $k$-clause (resp., $k^+$-clause) containing both $z$ and $z'$. If we say a variable $y$ appears in $N(x, \mathcal{F})$, it means that literal $y$ or literal $\overline{y}$ is in $N(x, \mathcal{F})$.

\section{Some Techniques}

\subsection{Branch-and-Search Algorithms}
Our algorithm is a standard branch-and-search algorithm, which first applies some reduction rules to reduce the instance as much as possible and then searches for a solution by branching. The branching operations may exponentially increase the running time.
We will use a measure to evaluate the size of the search tree generated in the algorithm.
For the SAT problem, the number of variables or clauses of the formula is a commonly used measure.
More fundamentals of branching heuristics about the SAT problem can be found in~\cite{DBLP:series/faia/2009-185}.

Let $\mu$ be a measure of an instance. We use $T(\mu)$ to denote the number of leaves of the search tree generated by the algorithm for any instance with the measure being at most $\mu$. For a branching operation that branches on the current instance into $l$ branches with the measure decreasing by at least $a_i$ in the $i$-th branch, we get a recurrence relation
\[T(\mu)\leq T(\mu-a_1)+T(\mu-a_2)+\dots + T(\mu-a_l).\]
The recurrence relation can also be simply represented by a \emph{branching vector} $[a_1,a_2,\dots,a_l]$.
The largest root of the function $f(x) = 1-\sum_{i=1}^lx^{-a_i}$, denoted by $\tau(a_1, a_2, \dots, a_l)$, is called the \emph{branching factor} of the recurrence.
If the maximum branching factor for all branching operations in the algorithm is at most $\gamma$, then $T(\mu)=O(\gamma^\mu)$.
If the algorithm runs in polynomial time on each node of the search tree, then the total running time of the algorithm is $O^*(\gamma^\mu)$.
More details about analyzing recurrences can be found in the monograph~\cite{DBLP:series/txtcs/FominK10}.

In the analysis, we need to find the largest branching factor in the algorithm.
Let $\mathbf{a}=[a_1,a_2,\dots,a_l]$ and $\mathbf{b}=[b_1,b_2,\dots,b_l]$ be two branching vectors.
If $\tau(a_1,a_2,\dots,a_l)>\tau(b_1,b_2,\dots,b_l)$, then we say $\mathbf{a}$ \emph{covers} $\mathbf{b}$.
Usually, it is hard to compare two branching vectors directly. In this paper, we will frequently use some special cases.
If it holds that $a_i\leq b_i$ for all $i=1,2\dots,l$, then the branching factor of $\mathbf{b}$ is not smaller than that of $\mathbf{a}$, i.e., $\mathbf{a}$ covers $\mathbf{b}$.
Let $i$ and $j$ be positive reals such that $0<i<j$, for any $0<\epsilon<\frac{j-i}{2}$, it holds that $\tau(i,j)>\tau(i+\epsilon, j-\epsilon)$.
For this case, we have that the branching vector $[i,j]$ covers $[i+\epsilon, j-\epsilon]$.

\subsection{Shift}

In some cases, the worst branch in the algorithm will not always happen.
In order to deal with this situation, one can use an amortization technique to get improved results.
This technique has been used in several previous papers, see~\cite{DBLP:conf/iwpec/Iwata11, DBLP:journals/algorithmica/ChenKX05, DBLP:books/daglib/Gasper2010,DBLP:journals/iandc/XiaoN17}.
In this paper, we will follow the notation ``shift'' in~\cite{DBLP:journals/iandc/XiaoN17} to implement the amortized analysis.

Consider two branching operations $A$ and $B$ with branching vectors $\mathbf{a}=[a_1, a_2]$ and $\mathbf{b}=[b_1, b_2]$ (with recurrences $T(\mu)\leq T(\mu-a_1) + T(\mu-a_2)$ and $T(\mu)\leq T(\mu-b_1) + T(\mu-b_2)$) such that the branching operation $B$ has a smaller branching factor than $A$ does, where $\mathbf{a}$ may be the bottleneck in the running time analysis of the algorithm ($\mathbf{a}$ has the maximum branching factor among all branching vectors).
Suppose branching operation $B$ is always applicable to the sub-instance $\mathcal{F}_1$ generated by the first sub-branch of $A$.
In this case, we can get a better branching vector $\mathbf{c} = [a_1+b_1,a_1+b_2,a_2]$ by combining the branching operation $A$ and the branching operation $B$ applied to $\mathcal{F}_1$.
Note that the branching operation $B$ may also be applicable to the sub-instances generated by several branching operations other than $A$.
In order to ease such an analysis without generating all combined branching vectors, we use a notion of ``shift"~\cite{DBLP:journals/iandc/XiaoN17}.
We transfer some amount from the measure decreases in the recurrence for $B$ to that for $A$ as follows.
We save an amount $\sigma > 0$ of measure decreases from $B$ by evaluating the branching operation $B$ with branching vectors
\[
    [b_1-\sigma, b_2-\sigma],
\]
which leads to a larger branching factor than its original branching vector. The saved measure decrease $\sigma$ will be included in the branching vectors for branching operation $A$ to obtain
\[
    [a_1+\sigma, a_2].
\]
The saved amount $\sigma$ is also called a \textit{shift}, where the best value for $\sigma$ will be determined so that the maximum branching factor is minimized. Clearly, we can get branching vector $[a_1+b_1,a_1+b_2,a_2]$ by combining the above two branching vectors.
In our analysis, we will use one shift $\sigma$.

\subsection{Measure and Conquer}
The measure-and-conquer method \cite{DBLP:journals/jacm/FominGK09} is a powerful tool for analyzing the branch-and-search algorithms.
The main idea of the method is to adopt a new measure in the analysis of the algorithm. For example, instead of using the number of variables as the measure, it may set weights to different variables and use the sum of all variable weights as the measure. This method may be able to catch more structural properties and then get further improvements. Nowadays, the fastest exact algorithms for many NP-hard problems were designed by using this method~\cite{DBLP:journals/iandc/XiaoN17, DBLP:journals/algorithmica/XiaoN16, DBLP:journals/dam/RooijB11}. In this paper, we will also use the measure-and-conquer method.

We introduce a weight to each variable in the formula according to the degree of the variable, $w\colon \mathbb{Z}^+ \rightarrow \mathbb{R}^+$, where $\mathbb{Z}^+$ and $\mathbb{R}^+$ denote the sets of nonnegative integers and nonnegative reals, respectively. Let $w_i$ denote the weight of a variable with degree $i$.
A variable with a lower degree will not receive a higher weight. i.e., $w_i\geq w_{i-1}$.
In our algorithm, the measure of a formula $\mathcal{F}$ is defined as
\begin{equation}\label{measure}
    \mu(\mathcal{F}) = \sum_{x} w_{deg(x)}.
\end{equation}
In other words, $\mu(\mathcal{F})$ is the sum of the weight of all variables in $\mathcal{F}$.
Let $n_i$ denote the number of $i$-variables in $\mathcal{F}$. Then we also have that
\[
    \mu(\mathcal{F})=\sum_{i} w_in_i.
\]

One important step is to set the value of weight $w_i$. Different values of $w_i$ will generate different branching vectors and factors. We need to find a good setting of $w_i$ so that the worst branching factor is as small as possible.
We will get the value of $w_i$ by solving a quasiconvex program after listing all our branching vectors. However, we pre-specify some requirements of the weights to simplify arguments. Some similar assumptions were used in the previous measure-and-conquer analysis. We set the weight such that
\begin{equation}\label{weight1}
    \begin{aligned}
         & w_1=w_2=0,                       \\
         & 0<w_3< 2, w_4= 2w_3, ~\text{and} \\
         & w_i = i  ~\text{for }~ i\geq 5.
    \end{aligned}
\end{equation}
We use $\delta_i$ to denote the difference between $w_i$ and $w_{i-1}$ for $i > 0$, i.e., $\delta_i=w_i-w_{i-1}$.
By (\ref{weight1}), we have \begin{equation}\label{eq-w3-d3-d4}
    w_3=\delta_3=\delta_4,
\end{equation}
and $2w_5> 5w_3\Rightarrow 2w_5-4w_3>w_3\Rightarrow 2w_5-2w_4>w_3$, which implies
\begin{equation}\label{ineq-2d5>w3}
    2\delta_5>w_3.
\end{equation}

We also assume that
\begin{equation}\label{weight2}
    \begin{aligned}
         & w_3 \geq \delta_5,  \\
         & 1\leq \delta_i \leq \delta_{i-1} ~\text{for}~i\geq 3, ~\text{and}\\
         & w_3-\delta_5 < 1.
    \end{aligned}
\end{equation}
Under these assumptions, it holds that $w_i\leq i$ for each $i$. Thus, for any formula $\mathcal{F}$, it always holds that
\begin{equation}\label{measure-leq-L}
    \mu(\mathcal{F}) \leq L(\mathcal{F}).
\end{equation}
This tells us that if we can get a running time bound of $O^*(c^{\mu(\mathcal{F})})$ for a real number $c$, then we also get a running time bound of $O^*(c^{L(\mathcal{F})})$ for this problem.
To obtain a running time bound in terms of the formula length $L(\mathcal{F})$, we consider the measure $\mu(\mathcal{F})$ and show how much the measure $\mu(\mathcal{F})$ decreases in the branching operations of our algorithm and find the worst branching factor among all branching vectors.

\section{The Algorithm}

We will first introduce our algorithm and then analyze its running time bound by using the measure-and-conquer method.
Our algorithm consists of reduction operations and branching operations. When no reduction operations can be applied anymore, the algorithm will search for a solution by branching. We first introduce our reduction rules.
\subsection{Reduction Rules}
We have ten reduction rules. Most are well-known and frequently used in the literature (see ~\cite{DBLP:conf/esa/Wahlstrom05,DBLP:conf/wads/ChenL09} for examples).
We introduce the reduction rules in the order as stated, and a reduction rule will be applied in our algorithm only when all the previous reduction rules do not apply to the instance.

\begin{rrule}[Elimination of duplicated literals]\label{rule-dup}
    If a clause $C$ contains duplicated literals $z$, remove all but one $z$ in $C$.
\end{rrule}

\begin{rrule}[Elimination of subsumptions]\label{rule-subsump}
    If there are two clauses $C$ and $D$ such that $C\subseteq D$, remove clause $D$.
\end{rrule}

\begin{rrule}[Elimination of tautology]\label{rule-tauto}
    If a clause $C$ contains two opposite literals $z$ and $\overline{z}$, remove clause $C$.
\end{rrule}

\begin{rrule}[Elimination of $1$-clauses and pure literals]\label{rule-pure}
    If there is a $1$-clause $\{x\}$ or a $(1^+, 0)$-literal $x$, assign $x=1$.
\end{rrule}

\textit{Davis-Putnam Resolution}, proposed in~\cite{DBLP:journals/jacm/DavisP60}, is a classic and frequently used technology for SAT.
Let $\mathcal{F}$ be a CNF-formula and $x$ be a variable in $\mathcal{F}$.
Assume that clauses containing literal $x$ are $xC_1,xC_2,...,xC_a$ and clauses containing literal $\overline{x}$ are $\overline{x}D_1, \overline{x}D_2, ..., \overline{x}D_b$. A \emph{Davis-Putnam resolution} on $x$ is to construct a new CNF-formula $DP_{x}(\mathcal{F})$ by the following method: initially $DP_{x}(\mathcal{F}) = \mathcal{F}$;
add new clauses $C_iD_j$ for each $1\leq i\leq a$ and  $1\leq j\leq b$; and remove $xC_1,xC_2,...,xC_a, \overline{x}D_1, \overline{x}D_2,...,\overline{x}D_b$ from the formula. It is known that
\begin{prop}[\cite{DBLP:journals/jacm/DavisP60}]
    A CNF-formula $\mathcal{F}$ is satisfiable if and only if $DP_{x}(\mathcal{F})$ is satisfiable.
\end{prop}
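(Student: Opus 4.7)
The plan is to prove both directions by constructing satisfying assignments explicitly, taking advantage of the fact that $DP_x(\mathcal{F})$ contains no occurrences of $x$ (every clause mentioning $x$ or $\overline{x}$ has been removed).

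For the easy direction, I would start from a satisfying assignment $A$ of $\mathcal{F}$ and show that its restriction to $V \setminus \{x\}$ satisfies $DP_x(\mathcal{F})$. The clauses of $DP_x(\mathcal{F})$ that were inherited from $\mathcal{F}$ are trivially still satisfied. The only nontrivial clauses are the newly added resolvents $C_iD_j$. Here I would split on the value of $A(x)$: if $A(x)=1$ then each clause $\overline{x}D_j$ of $\mathcal{F}$ must have been satisfied through $D_j$, so some literal of $D_j$ is true under $A$, and hence $C_iD_j$ is satisfied; symmetrically, if $A(x)=0$, each $C_i$ contains a true literal, so again every $C_iD_j$ is satisfied. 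In either case, no literal of $x$ was needed, so the restricted assignment works.

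For the reverse direction, given a satisfying assignment $A$ of $DP_x(\mathcal{F})$, I would extend it to $x$ by a case analysis on whether the clauses $D_j$ are all already satisfied by $A$. If every $D_j$ has a true literal under $A$, I set $x=1$: the clauses $\overline{x}D_j$ are then satisfied via $D_j$, and the clauses $xC_i$ are satisfied by $x$ itself. Otherwise, fix some $D_{j^*}$ all of whose literals are false under $A$; then for every $i$, since $A$ satisfies the resolvent $C_iD_{j^*}$, some literal of $C_i$ must be true. So in this case I set $x=0$, which satisfies every $\overline{x}D_j$ through $\overline{x}$ and every $xC_i$ through the established true literal in $C_i$. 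The other clauses of $\mathcal{F}$ appear verbatim in $DP_x(\mathcal{F})$ and are therefore satisfied by $A$.

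There is no real obstacle; the only subtlety worth flagging is a sanity check that Reduction Rules \ref{rule-dup} and \ref{rule-tauto} take care of the degenerate shapes of the resolvents $C_iD_j$ (duplicated literals, or pairs $z,\overline{z}$ coming from $C_i$ and $D_j$), so that the construction truly produces a well-formed CNF-formula in the sense used throughout the paper. This does not affect the semantic equivalence proved above, but it is the one place where one might want a brief remark before invoking the proposition inside later branching arguments.
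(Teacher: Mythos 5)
Your proof is correct and complete; it is the canonical argument for soundness of Davis--Putnam resolution. The paper itself supplies no proof here --- the proposition is stated with a citation to Davis and Putnam's 1960 paper --- so there is no in-text argument to compare against. Your two-sided case analysis is exactly the standard one: in the forward direction you split on $A(x)$ and observe that whichever of $C_i$ or $D_j$ carried the satisfaction of the original clause also satisfies the resolvent $C_iD_j$; in the reverse direction you split on whether some $D_{j^*}$ is falsified, use the resolvents with $D_{j^*}$ to force a true literal into each $C_i$, and then extend $A$ to $x$ accordingly. The closing remark about R-Rules~\ref{rule-dup} and \ref{rule-tauto} cleaning up degenerate resolvents is fine as a sanity check but is not load-bearing: the paper already declares that trivial resolvents are deleted when performing the resolution, and neither tautological clauses nor repeated literals change satisfiability, so the semantic equivalence you establish holds with or without that normalization.
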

In the resolution operation, each new clause $C_iD_j$ is called a \emph{resolvent}. A resolvent is \emph{trivial} if it contains both
a literal and the negation of it. Since trivial resolvents will always be satisfied, we can simply delete trivial resolvents from the instance directly. So when we do resolutions, we assume that all trivial resolvents will be deleted.

\begin{rrule}[Trivial resolution]\label{rule-trivial-resol}
    If there is a variable $x$ such that the degree of each variable in $DP_{x}(\mathcal{F})$ is not greater than that in $\mathcal{F}$, then apply resolution on $x$.
\end{rrule}

\begin{rrule}[\cite{DBLP:conf/wads/ChenL09}]\label{rule-2cls-1neg}
    If there are a $2$-clause $z_1z_2$ and a clause $C$ containing both $z_1$ and $\overline{z_2}$,
    then  remove $\overline{z_2}$ from $C$.
\end{rrule}

\begin{rrule}\label{rule-1liter-neg}
    If there are two clauses $z_1z_2C_1$ and $z_1\overline{z_2}C_2$, where literal $\overline{z_2}$ appears in no other clauses, then remove $z_1$ from clause $z_1z_2C_1$.
\end{rrule}
\begin{lemma}\label{proof-Rule7}
    Let $\mathcal{F}$ be a CNF-formula and $\mathcal{F}'$ be the resulting formula after applying R-Rule~\ref{rule-1liter-neg} on $\mathcal{F}$. Then $\mathcal{F}$ is satisfiable if and only if $\mathcal{F}'$ is satisfiable.
\end{lemma}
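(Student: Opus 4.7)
The plan is to prove the two directions of the equivalence separately. Let $\mathcal{F}'$ denote the formula obtained from $\mathcal{F}$ by replacing the clause $z_1z_2C_1$ with $z_2C_1$; every other clause of $\mathcal{F}$, including $z_1\overline{z_2}C_2$, is preserved.

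The direction ``$\mathcal{F}'$ satisfiable $\Rightarrow$ $\mathcal{F}$ satisfiable'' should be essentially a one-liner: any assignment making $z_2 \vee C_1$ true also makes the superset $z_1 \vee z_2 \vee C_1$ true, so any satisfying assignment of $\mathcal{F}'$ satisfies $\mathcal{F}$ clause-by-clause.

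For the converse, I would start from an arbitrary satisfying assignment $A$ of $\mathcal{F}$ and modify it only if necessary. If $A$ already satisfies $z_2C_1$, which is the only clause that changed, then $A$ satisfies $\mathcal{F}'$ directly. Otherwise $A(z_2)=0$ and every literal of $C_1$ is false under $A$; since $A$ must satisfy the original $z_1z_2C_1$, this forces $A(z_1)=1$. The key construction is to let $A'$ agree with $A$ on every variable except $z_2$, which is flipped to $1$.

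The step I expect to require the most care is verifying that this flip does not break any clause of $\mathcal{F}'$. Flipping $z_2$ from $0$ to $1$ can only endanger clauses containing the literal $\overline{z_2}$, and the hypothesis of R-Rule~\ref{rule-1liter-neg} that $\overline{z_2}$ appears nowhere else guarantees that the unique such clause is $z_1\overline{z_2}C_2$; this clause remains satisfied because $A'(z_1)=A(z_1)=1$. Clauses containing the literal $z_2$ can only gain satisfaction, clauses not involving the variable $z_2$ are untouched, and the shortened clause $z_2C_1$ is satisfied because $A'(z_2)=1$. This covers every clause of $\mathcal{F}'$, so $A'$ is a satisfying assignment and the lemma follows.
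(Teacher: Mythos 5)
Your proof is correct and takes essentially the same route as the paper: the easy direction is the clause-subset observation, and the hard direction proceeds by taking a satisfying assignment $A$ of $\mathcal{F}$, noting that if $A$ does not already satisfy $z_2C_1$ then $A(z_1)=1$ and $A(z_2)=0$, and flipping $z_2$ to $1$, which is safe because $z_1\overline{z_2}C_2$ is the only clause containing $\overline{z_2}$ and it stays satisfied via $z_1$. You merely spell out the clause-by-clause verification a bit more explicitly than the paper does.
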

\begin{proof}
    Let $\mathcal{F}$ be the original formula and $\mathcal{F}'$ be the formula after replacing clause $z_1z_2C_1$ with clause $z_2C_1$
    in $\mathcal{F}$. Clearly, if $\mathcal{F}'$ is satisfied, then $\mathcal{F}$ is satisfied. We consider the other direction.

    Assume that $\mathcal{F}$ is satisfied by an assignment $A$. We show that $\mathcal{F}'$ is also satisfied.
    If  $z_2C_1$ is satisfied by assignment $A$, then $\mathcal{F}'$ is satisfied by assignment $A$.
    Next, we assume assignment $A$ satisfies $z_1z_2C_1$ but not $z_2C_1$. Then in $A$, we have that $z_1=1$ and $z_2=0$.
    Since $\overline{z_2}$ is a $(1, 1^+)$-literal and $z_1\overline{z_2}C_2$ is the only clause containing $\overline{z_2}$, we know that all clauses will be satisfied if we replace $z_2=0$ with $z_2=1$ in $A$. Thus, $\mathcal{F}'$ is still satisfied.
\end{proof}

\begin{rrule}[\cite{DBLP:conf/wads/ChenL09}]\label{rule-2cls-2neg}
    If there is a $2$-clause $z_1z_2$ and a clause $\overline{z_1}\overline{z_2}C$ such that literal $\overline{z_1}$ appears in no other clauses, remove the clause $z_1z_2$ from $\mathcal{F}$.
\end{rrule}

\begin{rrule}[\cite{DBLP:conf/wads/ChenL09}]\label{rule-link}
    If there is a $2$-clause $z_1z_2$ such that either literal $z_1$ appears only in this clause or there is another $2$-clause $\overline{z_1}\overline{z_2}$, then replace $z_1$ with $\overline{z_2}$ in $\mathcal{F}$ and then apply R-Rule \ref{rule-tauto} as often as possible.
\end{rrule}

\begin{rrule}[\cite{DBLP:conf/wads/ChenL09}]\label{rule-back-resol}
    If there are two clauses $CD_1$ and $CD_2$ such that $|D_1|, |D_2|\geq 1$ and $|C|\geq 2$, then remove $CD_1$ and $CD_2$ from $\mathcal{F}$, and add three new
    clauses $xC$, $\overline{x}D_1$, and $\overline{x}D_2$, where $x$ is a new $3$-variable.
\end{rrule}

R-Rule~\ref{rule-back-resol} is like the Davis-Putnam resolution in reverse, and thus it is correct.

\begin{definition}[Reduced formulas]\label{def-reducedf}
    A CNF-formula $\mathcal{F}$ is called \textit{reduced}, if none of the above reduction rules can be applied on it.
\end{definition}

Our algorithm will first iteratively apply the above reduction rules to get a reduced formula.
We will use $R(\mathcal{F})$ to denote the resulting reduced formula obtained from $\mathcal{F}$.
Next, we show some properties of reduced formulas.

\begin{lemma}\label{prop1}
    In a reduced CNF-formula $\mathcal{F}$, all variables are $3^+$-variables.
\end{lemma}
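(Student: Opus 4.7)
The plan is to prove the contrapositive: if some variable $x$ in $\mathcal{F}$ has degree at most $2$, then at least one of the reduction rules R-Rule~\ref{rule-pure} or R-Rule~\ref{rule-trivial-resol} is applicable, so $\mathcal{F}$ is not reduced. Since a variable in $\mathcal{F}$ by definition appears at least once, I only need to rule out degree $1$ and degree $2$.

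First I would dispose of the degree-$1$ case. If $x$ is a $1$-variable, then $x$ is either a $(1,0)$-literal or a $(0,1)$-literal; by the definition given just before R-Rule~\ref{rule-pure}, this is exactly a $(1^+,0)$-literal (possibly after swapping $x$ and $\overline{x}$), so R-Rule~\ref{rule-pure} applies, contradicting that $\mathcal{F}$ is reduced.

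Next I would handle the degree-$2$ case by splitting on the pattern $(i,j)$ with $i+j=2$. The cases $(2,0)$ and $(0,2)$ again give a pure literal handled by R-Rule~\ref{rule-pure}. The main case is $(1,1)$: here $x$ occurs in exactly one clause $xC_1$ and $\overline{x}$ occurs in exactly one clause $\overline{x}D_1$. I would show that applying Davis--Putnam resolution on $x$ does not increase any other variable's degree, which makes R-Rule~\ref{rule-trivial-resol} applicable. For this I would track a fixed variable $y \neq x$: in $DP_x(\mathcal{F})$ the two clauses $xC_1$ and $\overline{x}D_1$ are deleted and at most the single resolvent $C_1D_1$ is added (discarded if trivial). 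A case analysis on whether $y$ (or $\overline{y}$) occurs in $C_1$, in $D_1$, in both, or in opposite polarity shows that the count of occurrences of $y$ in $DP_x(\mathcal{F})$ is never larger than in $\mathcal{F}$ — it only drops further when duplicated literals are eliminated in $C_1D_1$ or when the resolvent turns out to be trivial. Consequently every variable's degree is non-increasing, R-Rule~\ref{rule-trivial-resol} applies, and again $\mathcal{F}$ is not reduced.

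The main obstacle I expect is the bookkeeping in the $(1,1)$ case: one has to be careful with literal multiplicities when $y$ appears in both $C_1$ and $D_1$ with the same polarity (R-Rule~\ref{rule-dup} would collapse them, strictly decreasing $\deg(y)$), and with the situation where $y$ appears in $C_1$ and $\overline{y}$ in $D_1$ (making the resolvent trivial and removed altogether). Once these subcases are handled uniformly, the conclusion is immediate and the lemma follows.
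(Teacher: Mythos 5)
Your proof is correct and follows the same route as the paper: degree-$1$ and $(2,0)/(0,2)$ variables are handled by R-Rule~\ref{rule-pure}, and $(1,1)$-variables by R-Rule~\ref{rule-trivial-resol}. The only difference is that you carry out the verification that resolution on a $(1,1)$-variable does not increase any degree, whereas the paper simply asserts it; your extra bookkeeping is accurate and does not change the structure of the argument.
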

\begin{proof}
    If there is a $1$-variable, then R-Rule \ref{rule-pure} could be applied. For a $2$-variable in $\mathcal{F}$, if it is a $(2, 0)$-variable or $(0, 2)$-variable, then R-Rule \ref{rule-pure} could be applied; if it is a $(1, 1)$-variable, R-Rule \ref{rule-trivial-resol} could be applied.
\end{proof}

\begin{lemma}\label{prop2}
    In a reduced CNF-formula $\mathcal{F}$, if there is a $2$-clause $xy$, then no other clause in $\mathcal{F}$ contains $xy$, $\overline{x}y$, or $x\overline{y}$.
\end{lemma}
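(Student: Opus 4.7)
The plan is a short case analysis on which of the three patterns $xy$, $\overline{x}y$, or $x\overline{y}$ allegedly appears in some other clause $C$, and to derive a contradiction in each case by exhibiting a reduction rule that would apply, violating the assumption that $\mathcal{F}$ is reduced.

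First I would handle the case that another clause $C$ contains both literals $x$ and $y$. Then the 2-clause $xy$ is a subset of $C$ (regarded as sets of literals), so R-Rule~\ref{rule-subsump} (elimination of subsumptions) is applicable to $\mathcal{F}$, contradicting that $\mathcal{F}$ is reduced.

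Next I would handle the case that some other clause $C$ contains $x$ and $\overline{y}$. Take $z_1 := x$ and $z_2 := y$ in R-Rule~\ref{rule-2cls-1neg}: the 2-clause $z_1z_2 = xy$ is present, and $C$ contains both $z_1 = x$ and $\overline{z_2} = \overline{y}$, so R-Rule~\ref{rule-2cls-1neg} would strip $\overline{y}$ from $C$, again contradicting reducedness. The symmetric case, where some other clause $C$ contains $\overline{x}$ and $y$, is dispatched the same way after swapping the roles of $x$ and $y$ (set $z_1 := y$, $z_2 := x$).

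There is essentially no technical obstacle here; the only subtlety is to state the lemma in terms of literal occurrences and notice that every unwanted pattern is exactly the hypothesis of R-Rule~\ref{rule-subsump} or R-Rule~\ref{rule-2cls-1neg}, so the ``main step'' is simply identifying the correct reduction rule in each case. Since each rule is guaranteed to have been applied to exhaustion in a reduced formula, the three cases together complete the proof.
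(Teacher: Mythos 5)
Your proof is correct and matches the paper's own argument exactly: subsumption (R-Rule~\ref{rule-subsump}) rules out another clause containing $xy$, and R-Rule~\ref{rule-2cls-1neg} rules out $x\overline{y}$ and (by symmetry) $\overline{x}y$. The only difference is that you spell out the instantiation of $z_1,z_2$ explicitly, which the paper leaves implicit.
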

\begin{proof}
    If there is a clause containing $xy$, then R-Rule \ref{rule-subsump} could be applied. If there is a clause containing $\overline{x}y$ or $x\overline{y}$, then R-Rule \ref{rule-2cls-1neg} could be applied.
\end{proof}

\begin{lemma}\label{prop4}
    In a reduced CNF-formula $\mathcal{F}$, if there is a clause $xyC$, then
    \begin{enumerate}[(i)]
        \item no other clause contains $xy$; \label{prop4-1}
        \item no other clause contains $\overline{x}y$ or $\overline{x}\overline{y}$ if $x$ is a $3$-variable. \label{prop4-2}
    \end{enumerate}
\end{lemma}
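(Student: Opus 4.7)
My plan is to prove both parts by contradiction: if the conclusion failed, then one of the earlier R-Rules would still apply to $\mathcal{F}$, contradicting the assumption that $\mathcal{F}$ is reduced.

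For part~(\ref{prop4-1}), I would assume a second clause $xyC'$ exists. If both $C$ and $C'$ are nonempty, I invoke R-Rule~\ref{rule-back-resol} (back-resolution) with common part $xy$ of length $2$ and nonempty tails $C,C'$. If instead one of $C,C'$ is empty, the corresponding $2$-clause $xy$ subsumes the other clause, so R-Rule~\ref{rule-subsump} applies. Either way, reducedness is contradicted.

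For part~(\ref{prop4-2}), I would first use Lemma~\ref{prop1} together with R-Rule~\ref{rule-pure} to note that the $3$-variable $x$ must be either a $(1,2)$- or $(2,1)$-variable (otherwise $x$ would be pure). Let $a$ (resp.\ $b$) be the number of clauses containing $x$ (resp.\ $\overline{x}$), so $\{a,b\}=\{1,2\}$. I then split on whether the hypothesized second clause contains $\overline{x}y$ (Case~A) or $\overline{x}\,\overline{y}$ (Case~B). In Case~A, I would apply R-Rule~\ref{rule-1liter-neg} with $z_1=y$ and $z_2\in\{x,\overline{x}\}$ chosen so that $\overline{z_2}$ is whichever of $x,\overline{x}$ has exactly one occurrence; then $xyC$ and $\overline{x}yC'$ play the roles of $z_1z_2C_1$ and $z_1\overline{z_2}C_2$ in some order, and the rule fires. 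In Case~B, I would instead apply R-Rule~\ref{rule-trivial-resol} (trivial resolution) on $x$. The critical observation is that the Davis--Putnam resolvent combining $xyC$ with $\overline{x}\,\overline{y}C'$ contains both $y$ and $\overline{y}$, hence is tautological and discarded. A literal-by-literal count then shows that no variable's degree can increase: every surviving resolvent contributes each variable at most once, and any variable appearing in such a resolvent already occurs in one of the three clauses being removed.

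The main obstacle I anticipate is the bookkeeping in Case~B across the two sub-cases $(a,b)=(1,2)$ and $(a,b)=(2,1)$, including possible overlaps of literals among $C$, $C'$, and the third $x$- or $\overline{x}$-clause. The discarding of the tautological resolvent is essential: without it, a variable appearing in $C$ but not in the other clauses being removed could see its net degree rise, and R-Rule~\ref{rule-trivial-resol} would not be applicable.
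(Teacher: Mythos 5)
Your proposal is correct and follows essentially the same route as the paper: part~(i) uses subsumption (R-Rule~\ref{rule-subsump}) or back-resolution (R-Rule~\ref{rule-back-resol}), and part~(ii) first pins $x$ down as a $(1,2)$/$(2,1)$-variable and then invokes R-Rule~\ref{rule-1liter-neg} for the $\overline{x}y$ case and R-Rule~\ref{rule-trivial-resol} for the $\overline{x}\overline{y}$ case, keyed on the observation that the $y/\overline{y}$ resolvent is tautological. You supply a bit more detail than the paper on why the trivial-resolution condition holds (only one non-trivial resolvent, and literal occurrences in the removed $\overline{x}\overline{y}C'$ clause are simply dropped), which is a welcome clarification rather than a departure.
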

\begin{proof}
    (\ref{prop4-1}): If there is a clause containing $xy$, then either R-Rule~\ref{rule-subsump} or R-Rule \ref{rule-back-resol} can be applied.

    (\ref{prop4-2}): Since $x$ is a $3$-variable and all $(1^+, 0)$-literals are reduced by R-Rule \ref{rule-pure}, $x$ is actually a $(1, 2)$-variable or $(2, 1)$-variable. If there is a clause containing $\overline{x}y$, R-Rule \ref{rule-1liter-neg} would be applicable. If there is a clause containing $\overline{x}\overline{y}$, then there is only one non-trivial resolvent after resolving on $x$, which means R-Rule \ref{rule-trivial-resol} would be applicable.
\end{proof}

\begin{lemma}\label{prop5}
    In a reduced CNF-formula $\mathcal{F}$, if there is $(1, 2^+)$-literal $x$ (let $xC$ be the only clause containing $x$), then
    \begin{enumerate}[(i)]
        \item $|C|\geq 2$; \label{prop5-1}
        \item $var(C)\cap var(N^{(2)}(\overline{x}, F)) = \emptyset$, that is, if $y\in N^{(2)}(\overline{x}, F)$, then $y, \overline{y}\notin C$. \label{prop5-2}
    \end{enumerate}
\end{lemma}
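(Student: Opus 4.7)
The plan is to prove both parts by contradiction: assume the conclusion fails for some reduced $\mathcal{F}$ and then exhibit one of R-Rules~\ref{rule-dup}--\ref{rule-back-resol} that is still applicable, contradicting the assumption that $\mathcal{F}$ is reduced. The whole argument is a case analysis; no computation is required, only careful matching of structures against rule preconditions.

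For part (i), I would first rule out $|C|=0$: in that case $\{x\}$ would be a $1$-clause and R-Rule~\ref{rule-pure} would assign $x=1$. To rule out $|C|=1$, suppose $xC=xy$ is a $2$-clause. Because $x$ is a $(1,2^+)$-literal, the literal $x$ occurs in no clause other than $xy$, so the ``only-in-this-clause'' branch of R-Rule~\ref{rule-link} triggers and replaces $x$ by $\overline{y}$. Both possibilities contradict reducedness, leaving $|C|\geq 2$.

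For part (ii), suppose for contradiction that some variable $y$ lies in $var(C)\cap var(N^{(2)}(\overline{x},\mathcal{F}))$. Then a $2$-clause containing $\overline{x}$ and a literal of $y$ exists; after the symmetry $y\leftrightarrow\overline{y}$, I may assume this $2$-clause is $\overline{x}y$. I would then split on the polarity with which $y$ appears in $C$. If $y\in C$, write $xC=xyC'$; R-Rule~\ref{rule-2cls-1neg} applies with $z_1=y$ and $z_2=\overline{x}$, because the $2$-clause $y\overline{x}$ together with the clause $xyC'$, which contains $z_1=y$ and $\overline{z_2}=x$, let the rule strip $x$ from $xyC'$. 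If $\overline{y}\in C$, write $xC=x\overline{y}C'$; now R-Rule~\ref{rule-2cls-2neg} applies with $z_1=\overline{x}$ and $z_2=y$, because the long clause has the form $\overline{z_1}\overline{z_2}C'=x\overline{y}C'$, and since $x$ is a $(1,2^+)$-literal the literal $\overline{z_1}=x$ appears in no other clause; the rule then deletes the $2$-clause $\overline{x}y$. Either branch yields a reduction, contradicting the hypothesis.

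The main subtlety I expect is the second sub-case of (ii): the ``obvious'' subsumption-style rule R-Rule~\ref{rule-2cls-1neg} does not fire there, because in $xC=x\overline{y}C'$ the polarity of $y$ is opposite to that in the $2$-clause $\overline{x}y$. One must instead invoke R-Rule~\ref{rule-2cls-2neg}, whose crucial precondition is the uniqueness of the positive occurrence of $x$. Once that matching is spotted, the rest of the case analysis is entirely routine.
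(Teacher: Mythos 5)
Your proof is correct and follows essentially the same case analysis as the paper's: contradiction via an applicable reduction rule, handling $|C|\in\{0,1\}$ with R-Rule~\ref{rule-pure} and R-Rule~\ref{rule-link}, and splitting on the polarity of $y$ in $C$ for part (ii). The only cosmetic difference is that for the sub-case $y\in C$ you invoke R-Rule~\ref{rule-2cls-1neg} (with the $2$-clause $\overline{x}y$ and the clause $xyC'$ containing $y$ and $x$), whereas the paper cites R-Rule~\ref{rule-1liter-neg}; both rules do apply here, so either choice yields the required contradiction.
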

\begin{proof}
    (\ref{prop5-1}): If $|C|=0$, then R-Rule \ref{rule-pure} could be applied; if $|C|=1$, which means $xC$ is a $2$-clause, then R-Rule \ref{rule-link} could be applied since $x$ is a $(1, 2^+)$-literal.

    (\ref{prop5-2}): Assume that $y$ is a literal in $N^{(2)}(\overline{x}, F)$, in other words, there is a $2$-clause $\overline{x}y$. Note that $x$ is a $(1, 2^+)$-literal. If $y\in C$, then R-Rule \ref{rule-1liter-neg}
    could be applied; if $\overline{y}\in C$, then R-Rule \ref{rule-2cls-2neg} could be applied.
\end{proof}

\subsection{Branching Rules and the Algorithm}
After getting a reduced formula, we will search for a solution by branching.
In a branching operation, we will generate two smaller CNF-formulas such that the original formula is satisfiable if and only if at least one of the two new formulas is satisfiable.
The two smaller formulas are generated by specifying the value of a set of literals in the original formula.

The simplest branching rule is that we pick up a variable or literal $x$ from $\mathcal{F}$ and branch into two branches $\mathcal{F}_{x=1}$ and $\mathcal{F}_{x=0}$, where $\mathcal{F}_{x=1}$ and $\mathcal{F}_{x=0}$ are the formulas after assigning $x=1$ and $x=0$ in $\mathcal{F}$, respectively.
When the picked literal $x$ is a $(1,1^+)$-literal, we will apply a stronger branching.
Assume that $xC$ is the only clause containing $x$. Then we branch into two branches $\mathcal{F}_{x=1\And C=0}$ and $\mathcal{F}_{x=0}$,
where $\mathcal{F}_{x=1\And C=0}$ is the resulting formula after assigning 1 to $x$ and $0$ to all literals in $C$ in
$\mathcal{F}$.
The correctness of this branching operation is also easy to observe. Only when all literals in $C$ are assigned $0$ do we need to assign $1$ to $x$.
Generally, we always pick a variable or literal with the maximum degree in the formula to branch.

The main steps of our algorithm are given in Algorithm~\ref{main-algorithm}. The algorithm will execute one step only when all previous steps can not be applied.
In Step~2, the algorithm first reduces the formula by applying the reduction rules.
Afterwards, in Steps~3-15, the algorithm deals with variables from high-degree to low-degree by branching.
In Step~3, the algorithm branches on variables with degree $\geq 6$.
Steps~4-13 deal with $5$-variables.
Steps~14-15 deal with $4$-variables.
If the maximum degree of the formula is $3$, then we apply the algorithm by Wahlstr{\"{o}}m~\cite{DBLP:conf/sat/Wahlstrom05}, which is Step 16 of our algorithm.

\begin{algorithm*}[t!]\label{main-algorithm}
    \caption{SAT($\mathcal{F}$)}
    \KwIn{a CNF-formula $\mathcal{F}$}
    \KwOut{$1$ or $0$ to indicate the satisfiability of $\mathcal{F}$}

    \textbf{Step 1.} If $\mathcal{F}=\emptyset$, return 1. Else if $\mathcal{F}$ contains an empty clause, return 0.\\
    \textbf{Step 2.} If $\mathcal{F}$ is not a reduced CNF-formula, iteratively apply the reduction rules to reduce it.\\
    \textbf{Step 3.} If the degree of $\mathcal{F}$ is at least $6$, select a variable $x$ with the maximum degree and return SAT($\mathcal{F}_{x=1}$)$\vee$SAT($\mathcal{F}_{x=0}$).\\
    \textbf{Step 4.} If there is a $(1,4)$-literal $x$ (assume that $xC$ is the unique clause containing $x$), return SAT($\mathcal{F}_{x=1\And C=0}$)$\vee$SAT($\mathcal{F}_{x=0}$).\\
    % 2-clause
    \textbf{Step 5.} If there is a $5$-literal $x$ such that at least two $2$-clauses contain $x$ or $\overline{x}$, return SAT($\mathcal{F}_{x=1}$)$\vee$SAT($\mathcal{F}_{x=0}$).\\
    \textbf{Step 6.} If there are two $5$-literals $x$ and $y$ contained in one $2$-clause $xy$, return SAT($\mathcal{F}_{x=1}$)$\vee$SAT($\mathcal{F}_{x=0}$).\\
    \textbf{Step 7.} If there is a $5$-literal $x$ contained in a $2$-clause, return SAT($\mathcal{F}_{x=1}$)$\vee$SAT($\mathcal{F}_{x=0}$).\\
    \textbf{Step 8.} If there is a $5$-literal $x$ such that $N(x, \mathcal{F})$ and $N(\overline{x}, \mathcal{F})$ contain at least two $4^-$-literals, return SAT($\mathcal{F}_{x=1}$)$\vee$SAT($\mathcal{F}_{x=0}$).\\
    %If there is a clause containing a 5-literal $x$ and two $4^-$-literals, or there are two clauses containing a 5-literal $x$ and a $4^-$-literal, return SAT($\mathcal{F}_{x=1}$)$\vee$SAT($\mathcal{F}_{x=0}$).\\
    \textbf{Note:} If there are still some 5-literals, they must be (2,3)/(3,2)-literals. In the next two steps, we let $x$ be a $(2, 3)$-literal and $xC_1, xC_2, \overline{x}D_1, \overline{x}D_2$, and $\overline{x}D_3$ be the five clauses containing $x$ or $\overline{x}$.\\
    \textbf{Step 9.} If there exist $5$-literals $y_1$ and $y_2$ such that $y_1\in C_1$, $y_1\in D_1$, $y_2\in C_2$ and $y_2$ or $\overline{y_2}\in D_2$, return SAT($\mathcal{F}_{y_1=1}$)$\vee$SAT($\mathcal{F}_{y_1=0}$).\\
    \textbf{Step 10.} If there exist $5$-literals $y_1$  and $y_2$ such that $y_1\in C_1$, $\overline{y_1}\in D_1$, $y_2\in C_2$, and $\overline{y_2}\in D_2$, pick a $5$-literal $z\in D_3$ (let $R_5(\mathcal{F}_{z=1})$ denote the resulting formula after only applying R-Rule~\ref{rule-trivial-resol} on $\mathcal{F}_{z=1}$) and return SAT($R_5(\mathcal{F}_{z=1})$)$\vee$SAT($\mathcal{F}_{z=0}$).\\
    % 4^+-clause
    \textbf{Step 11.} If there is a $5$-literal $x$ contained in at least one $4^+$-clause, return SAT($\mathcal{F}_{x=1}$)$\vee$SAT($\mathcal{F}_{x=0}$).\\
    % 3-CNF, low-degree
    \textbf{Step 12.} If there is a clause containing both a $5$-literal $x$ and a $4^-$-literal, return SAT($\mathcal{F}_{x=1}$)$\vee$SAT($\mathcal{F}_{x=0}$).\\
    % 3-CNF, 5-variables
    \textbf{Step 13.} If there are still some 5-literals, then $\mathcal{F}=\mathcal{F}_{5} \wedge \mathcal{F}_{\leq4}$, where $\mathcal{F}_{5}$ is a 3-CNF containing only 5-literals and $\mathcal{F}_{\leq 4}$ contains only 3/4-literals.
    We solve $\mathcal{F}_{5}$ by using the $3$-SAT algorithm by Liu~\cite{DBLP:conf/icalp/Liu18} (let $A(\mathcal{F}_{5})$ denote the result) and return $A(\mathcal{F}_{5})$ $\wedge$ SAT($\mathcal{F}_{\leq 4}$).\\
    \textbf{Step 14.} If there is a $(1,3)$-literal $x$ (assume that $xC$ is the unique clause containing $x$), return SAT($\mathcal{F}_{x=1\And C=0}$)$\vee$SAT($\mathcal{F}_{x=0}$).\\
    \textbf{Step 15.} If there is a $(2,2)$-literal $x$, return SAT($\mathcal{F}_{x=1}$)$\vee$SAT($\mathcal{F}_{x=0}$).\\
    \textbf{Step 16.} Apply the algorithm by Wahlstr{\"{o}}m~\cite{DBLP:conf/sat/Wahlstrom05} to solve the instance.
\end{algorithm*}

Before analyzing the algorithm, we compare our algorithm with the previous algorithm by Chen and Liu~\cite{DBLP:conf/wads/ChenL09}.
We can see that they used a simple and uniform branching rule to deal with variables of degree at least $5$ and used careful and complicated branching rules for $4$-variables. Their bottlenecks contain one case of branching on $5$-variables and one case of dealing with $4$-variables. We carefully design and analyze the branching rules for $5$-variables to avoid one previous bottleneck and also simplify the branching rules for $4$-variables.
We use Steps~4-13 to deal with different structures of $5$-variables, while reference~\cite{DBLP:conf/wads/ChenL09} just used a single step like our Step~3 to deal with $5$-variables.
Steps~14-15 are simple branching rules to deal with $4$-variables, while reference~\cite{DBLP:conf/wads/ChenL09} used complicated rules to deal with 4-variables.
To get further improvements, we may also check some special structures and propose branching rules to deal with them so that we can prove the worst case in our algorithm would not always happen.

We analyze the correctness of the algorithm.
Step~2 applies reduction rules, and their correctness has been proven previously.
Step~3 deals with variables with degree $\geq 6$.
Steps~4-13 deal with $5$-variables by simple branching.
After Step~7, all clauses containing $5$-literals are $3^+$-clause.
After Step~8, there is at most one $4^-$-literal in the neighbor set of $x$ or $\overline{x}$.
In Step~10, there must be a $5$-literal in $D_3$ since $|D_3|\geq 2$, and there is at most one $4^-$-literal in it. Thus, the condition of Step~10 holds.
If Steps~$1$-$12$ do not apply, then $\mathcal{F}$ can be written as $\mathcal{F}=\mathcal{F}_5 \wedge \mathcal{F}_{\leq4}$, where $\mathcal{F}_{5}$ is a 3-CNF with $var(\mathcal{F}_{5})$ be the set of 5-variables in $\mathcal{F}$ and $var(\mathcal{F}_{5})\cap var(\mathcal{F}_{\leq4}) = \emptyset$.
So we can do Step~13.
Steps~14-15 deal with $4$-variables by branching.
When the algorithm comes to the last step, all variables must have a degree of $3$, and we apply the algorithm by Wahlstr{\"{o}}m~\cite{DBLP:conf/sat/Wahlstrom05} to deal with this special case.
The hard part is to analyze the running time bound, which will be presented below.

\section{The Analysis Framework}

We use the measure-and-conquer method to analyze the running time bound of our algorithm and adopt $\mu(\mathcal{F})$ defined in (\ref{measure}) as the measure to construct recurrence relations for our branching operations.
Before analyzing each detailed step of the algorithm, we first introduce some general frameworks of our analysis and prove some lemmas that will be used as lower bounds in the detailed step analyses.

In each sub-branch of a branching operation, we assign value $1$ or $0$ to some literals and remove some clauses and literals. If we assign value $1$ to a literal $x$ in the formula $\mathcal{F}$, then we will remove all clauses containing $x$ from the formula since all those clauses are satisfied. We also remove all $\overline{x}$ literals from the clauses containing $\overline{x}$ since those literals get value $0$. The assignment and removing together are called an \emph{assignment operation}. We may assign values to more than one literal, and we do assignment operations for each literal.

Let $S$ be a subset of literals. We use  $\mathcal{F}_{S=1}$ to denote the resulting formula after assigning 1 to each literal in $S$ and doing assignment operations for each literal in $S$. Note that $\mathcal{F}_{S=1}$ may not be a reduced formula, and we will apply our reduction rules to reduce it. We use $\mathcal{F}'_{S=1}$ to denote the reduced formula obtained from $\mathcal{F}_{S=1}$, i.e., $\mathcal{F}'_{S=1}=R(\mathcal{F}_{S=1})$, and use $\mathcal{F}^{*}_{S=1}$ to denote the first formula during we apply reduction rules on $\mathcal{F}_{S=1}$ such that R-Rule~$1$-$4$ are not applicable on the formula. We analyze how much we can reduce the measure in each sub-branch by establishing some lower bounds for
\begin{equation*}
    \Delta_{S}^* = \mu(\mathcal{F}) - \mu(\mathcal{F}_{S=1}^*);
\end{equation*}
\begin{equation*}
    \Delta_{S} = \mu(\mathcal{F}) - \mu(\mathcal{F}_{S=1}').
\end{equation*}

For the sake of presentation, we define
\begin{equation*}
    \xi_{S}^{(1)} = \mu(\mathcal{F}) - \mu(\mathcal{F}_{S=1});
\end{equation*}
\begin{equation*}
    \xi_{S}^{(2)} = \mu(\mathcal{F}_{S=1}) - \mu(\mathcal{F}_{S=1}^*);
\end{equation*}
\begin{equation*}
    \xi_{S}^{(3)} = \mu(\mathcal{F}_{S=1}^*) - \mu(\mathcal{F}_{S=1}').
\end{equation*}
Thus, it holds that
\[
    \Delta_{S}^* = \mu(\mathcal{F}) - \mu(\mathcal{F}_{S=1}^*) = \mu(\mathcal{F}) - \mu(\mathcal{F}_{S=1}) + \mu(\mathcal{F}_{S=1}) - \mu(\mathcal{F}_{S=1}^*) = \xi_{S}^{(1)} + \xi_{S}^{(2)}
\]
and
\[
    \Delta_{S} = \mu(\mathcal{F}) - \mu(\mathcal{F}_{S=1}') = \mu(\mathcal{F}) - \mu(\mathcal{F}_{S=1}^*) + \mu(\mathcal{F}_{S=1}^*) - \mu(\mathcal{F}_{S=1}') = \Delta_{S}^* + \xi_{S}^{(3)}.
\]

In other words, $\xi_{S}^{(1)}$ is the amount of measure of $\mathcal{F}$ reduced by only doing assignment operations for each literal in $S$;
$\xi_{S}^{(2)}$ is the amount of measure of $\mathcal{F}_{S=1}$ reduced by the first four reduction rules;
$\xi_{S}^{(3)}$ is the amount of measure of $\mathcal{F}_{S=1}^*$ reduced by iteratively applying the reduction rules until it becomes a reduced formula.

We analyze the running time of our algorithm by analyzing the branching vector/factor generated by each step.
In each branching step, we will branch into two sub-branches.
Assume that all literals in $S_1$ are assigned the value $1$ in the first sub-branch, and all literals in $S_2$ are assigned the value $1$ in the second sub-branch.
In the analysis, we will frequently use the following property.

\begin{lemma}\label{lemma-bvec}
It holds that
\begin{enumerate}[(i)]
    \item if we can show that $\Delta_{S_1}\geq p$ and $\Delta_{S_2}\geq q$, then the branching vector generated by this branching rule is covered by $[p, q]$;
    \item if we can show $\min (\Delta_{S_1}, \Delta_{S_2}) \geq a$ and $\Delta_{S_1}+ \Delta_{S_2} \geq b$, then the branching vector generated by this branching rule is covered by $[a, b-a]$.
\end{enumerate}
\end{lemma}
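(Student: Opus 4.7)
The plan is to derive both parts directly from the two monotonicity facts about $\tau$ recalled in Section~3.1: (M1) if $a_i\leq b_i$ componentwise for all $i$, then $\tau(a_1,\ldots,a_l)\geq\tau(b_1,\ldots,b_l)$; and (M2) for $0<i<j$ and $0<\epsilon<(j-i)/2$, $\tau(i,j)>\tau(i+\epsilon,j-\epsilon)$, that is, rebalancing strictly reduces the branching factor while leaving the sum unchanged. I will also use the trivial symmetry $\tau(i,j)=\tau(j,i)$.

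Part (i) is immediate: by the definition of $\Delta_{S_1}$ and $\Delta_{S_2}$, the branching operation produces the recurrence $T(\mu)\leq T(\mu-\Delta_{S_1})+T(\mu-\Delta_{S_2})$, i.e., the actual branching vector is $[\Delta_{S_1},\Delta_{S_2}]$. Since $\Delta_{S_1}\geq p$ and $\Delta_{S_2}\geq q$, an application of (M1) gives $\tau(p,q)\geq\tau(\Delta_{S_1},\Delta_{S_2})$, so $[p,q]$ covers the actual vector.

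For part (ii), first observe that the hypotheses force $b\geq 2a$, so $b-a\geq a$ and the comparison vector $[a,b-a]$ is ordered. Assume without loss of generality (by the symmetry of $\tau$) that $\Delta_{S_1}\leq\Delta_{S_2}$. The plan is a two-step reduction. Step one: if $\Delta_{S_1}\geq b-a$, then both coordinates exceed $b-a$, so $[\Delta_{S_1},\Delta_{S_2}]$ dominates a permutation of $[a,b-a]$ componentwise and (M1) plus symmetry finish. Otherwise $\Delta_{S_1}\leq b-a$, and the sum condition gives $\Delta_{S_2}\geq b-\Delta_{S_1}\geq a$; applying (M1) componentwise, the intermediate vector $[\Delta_{S_1},b-\Delta_{S_1}]$ covers $[\Delta_{S_1},\Delta_{S_2}]$, and both of its entries lie in $[a,b-a]$.

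Step two: compare the intermediate vector $[\Delta_{S_1},b-\Delta_{S_1}]$ (sum $b$) with $[a,b-a]$ (also sum $b$) using (M2). Writing $\Delta_{S_1}=a+\alpha$ with $\alpha\in[0,b-2a]$, the intermediate vector equals $[a+\alpha,(b-a)-\alpha]$; swapping its two coordinates via the symmetry of $\tau$ if necessary to ensure the first is no larger than the second, (M2) applied at $(i,j)=(a,b-a)$ with shift $\alpha$ (or $b-2a-\alpha$ in the flipped case) yields $\tau(a,b-a)\geq\tau(a+\alpha,(b-a)-\alpha)$. Chaining the two coverages proves that $[a,b-a]$ covers $[\Delta_{S_1},\Delta_{S_2}]$, as desired. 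The only real obstacle is the case bookkeeping, namely ensuring at each step that the inputs to (M1) and (M2) are in the correct ordered form; no new monotonicity property of $\tau$ beyond (M1), (M2), and the $\tau(i,j)=\tau(j,i)$ symmetry is needed.
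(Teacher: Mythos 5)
The paper states this lemma without a proof, treating it as an immediate consequence of the two monotonicity facts about branching factors recalled in Section~3.1; your proposal supplies the missing argument, and it is essentially the intended one, correctly reducing both parts to (M1), (M2), and the symmetry of $\tau$.

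One small inaccuracy: you claim the hypotheses force $b\geq 2a$, but they do not. The lemma allows any pair $(a,b)$ that the analyst can certify, and nothing prevents a loose sum bound with $b<2a$ (e.g.\ $\Delta_{S_1}=\Delta_{S_2}=10$ with $a=5$, $b=7$). This does not break the lemma: when $b<2a$ both entries of $[a,b-a]$ are at most $a\leq\min(\Delta_{S_1},\Delta_{S_2})$, so $[a,b-a]$ covers the actual vector already by (M1), no rebalancing needed. So the correct structure is to split into the trivial case $b\leq 2a$ (handled by (M1) alone) and the nontrivial case $b>2a$ (your two-step chain through the intermediate vector $[\Delta_{S_1},\,b-\Delta_{S_1}]$). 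With that adjustment, the bookkeeping in your Step one and Step two is sound: in Step one, componentwise domination is verified after sorting; in Step two, the intermediate vector has sum exactly $b$ and both entries in $[a,b-a]$, so (M2) at $(i,j)=(a,b-a)$ with $\epsilon=\alpha$ or $\epsilon=(b-2a)-\alpha$, together with $\tau(i,j)=\tau(j,i)$, gives the desired comparison.
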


Next, we will analyze some lower bounds for $\Delta_{S_1}, \Delta_{S_2}$ and $\Delta_{S_1}+\Delta_{S_2}$.
% Next, we show some detailed lower bounds for $\Delta_{S}^*$ (as well as for $\Delta_{S_1}^*+\Delta_{S_2}^*$).  We first consider $\xi_{S}^{(1)}$ and $\xi_{S}^{(2)}$.

\medskip

According to the assignment operation, we know that all variables of the literals in $S$ will not appear in $\mathcal{F}_{S=1}$. So we have a
trivial bound
\begin{equation*}
    \xi_{S}^{(1)} \geq \sum_{v\in S}w_{deg(v)}.
\end{equation*}

To get better bounds, we first define some notations.
Recall that in a formula $\mathcal{F}$, a literal $x$ is called a neighbor of a literal $z$ if there is a clause containing both $z$ and $x$.
We use $N(z, \mathcal{F})$ to denote the set of neighbors of a literal $z$ in a formula $\mathcal{F}$ and $N^{(k)}(x, \mathcal{F})$ (resp., $N^{(k+)}(z, \mathcal{F})$) to denote the neighbors of $z$ in $k$-clauses (resp., $k^+$-clauses) in $\mathcal{F}$.
\begin{definition}

For a literal $x$ in a reduced formula $\mathcal{F}$ and $i\in N^+$, we define the following notations:
\begin{itemize}
    \item $n_i(x)$: the number of literals with degree $i$ that appear in $N(x, \mathcal{F})$,
    i.e., $n_i(x) = |\{y: y\in N(x, \mathcal{F})\text{ and } deg(y)=i\}|$;
    \item $n'_i(x)$: the number of literals with degree $i$ that appear in $N^{(2)}(x, \mathcal{F})$,
    i.e., $n'_i(x) = |\{y: y\in N^{(2)}(x, \mathcal{F})\text{ and } deg(y)=i\}|$;
    \item $n''_i(x)$: the number of literals with degree $i$ that appear in $N^{(3+)}(x, \mathcal{F})$,
    i.e., $n''_i(x) = |\{y: y\in N^{(3+)}(x, \mathcal{F})\text{ and } deg(y)=i\}|$;
    \item $t_{i,1}(x)$: the number of $i$-variables $y$ such that only one of $y$ and $\overline {y}$ appears in $N(x, \mathcal{F})$,
    i.e., $t_{i,1}(x)=|\{var(y):|\{y,\overline{y}\}\cap N(x, \mathcal{F})|=1\}|$;
    \item $t_{i,2}(x)$: the number of $i$-variables $y$ such that both of $y$ and $\overline {y}$ appear in $N(x, \mathcal{F})$,
    i.e., $t_{i,2}(x)=|\{var(y):|\{y,\overline{y}\}\cap N(x, \mathcal{F})|=2\}|$.
\end{itemize}
\end{definition}

\noindent \emph{Example.} Let $\mathcal{F} = (xy_1z_1)\wedge(xy_2\overline{z_1}l_1)\wedge(xy_3\overline{z_4})\wedge(\overline{x}z_2z_3\overline{l_2})\wedge(\overline{x}z_4)\wedge \mathcal{F}_1$, where $x, z_1, z_2, z_3, z_4$ are $5$-variables, $y_1, y_2, y_3$ are $4$-variables, and $l_1, l_2$ are $3$-variables.
Then, in this case, we have:
\begin{itemize}
    \item $n_3(x) = n''_3(x) = t_{3,1}(x) = |\{l_1\}| = 1$;
    \item $n_4(x) = n''_4(x) = t_{4,1}(x) = |\{y_1,y_2,y_3\}| = 3$;
    \item $n_5(x) = n''_5(x) = |\{z_1,\overline{z_1},z_3\}| = 3$;
    \item $t_{5,1}(x) = |\{z_3\}| = 1$ and $t_{5,2}(x) = |\{z_1\}| = 1$;
    \item $n_4(\overline{x}) = |\{z_2,z_3,z_4\}|=3$, $n'_4(\overline{x}) = |\{z_4\}| = 1$, and $n''_4(\overline{x}) = |\{z_2,z_3\}| = 2$.
\end{itemize}
In a reduced formula $\mathcal{F}$, there is no $1$-clause since R-Rule~\ref{rule-pure} is not applicable. So for a literal $x$, it holds that
\[
    n_i(x)=n'_i(x)+n''_i(x).
\]
By Lemma~\ref{prop4}, for a literal $x$, we know that all literals in $N(x, \mathcal{F})$ are different from each other. So for any variable $y$, there is at most one literal $y$ and at most one literal $\overline{y}$ appearing in $N(x, \mathcal{F})$, which implies that
\[
    n_i(x) = t_{i,1}(x) + 2t_{i,2}(x).
\]

Next, we give some lower bounds on $\xi_{S}^{(1)}$, $\xi_{S}^{(2)}$, and  $\Delta_{S_1}^*+\Delta_{S_2}^*$, which will be used to prove our main results.

The following lemma shows how much the measure decreases after we do the assignment operation to a single literal.
\begin{lemma}\label{dec-assign2}
    Assume that $\mathcal{F}$ is a reduced CNF-formula with degree $d$. Let $S = \{x\}$, where $x$ is a literal with degree $d$ in $\mathcal{F}$. It holds that
    \begin{equation*}
        \xi_{S}^{(1)} \geq w_d + \sum_{3\leq i\leq d}n_i(x)\delta_i.
    \end{equation*}
\end{lemma}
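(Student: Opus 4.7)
\medskip
\noindent\textbf{Proof proposal.} The plan is to account directly for the measure change caused by the assignment operation $x=1$, variable by variable. Since $\xi_{S}^{(1)}=\mu(\mathcal{F})-\mu(\mathcal{F}_{\{x\}=1})$, I would first isolate the contribution $w_d$ coming from the fact that the variable underlying $x$ has degree $d$ and vanishes entirely from $\mathcal{F}_{\{x\}=1}$. All remaining contributions come from variables $y$ whose degree strictly drops; the effect on clauses containing $\overline{x}$ (where $\overline{x}$ is merely deleted) leaves the degrees of the other literals in those clauses untouched, so only neighbors of $x$ need to be considered.

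Next I would argue that, for any variable $y$, the degree decrease is exactly $|\{y,\overline{y}\}\cap N(x,\mathcal{F})|$. The key point here is Lemma~\ref{prop4}(\ref{prop4-1}): for each literal $z\in N(x,\mathcal{F})$ there is exactly one clause of $\mathcal{F}$ that contains both $x$ and $z$, so $x$ and $y$ co-occur in at most one clause and $x$ and $\overline{y}$ likewise in at most one clause. Partitioning the neighbors of $x$ by degree $i$ and by whether only one or both of $y,\overline{y}$ belong to $N(x,\mathcal{F})$, I would then write the exact identity
\[
\xi_{S}^{(1)} \;=\; w_d \;+\; \sum_{i\geq 3}\Bigl[t_{i,1}(x)\,\delta_i \;+\; t_{i,2}(x)\bigl(\delta_i+\delta_{i-1}\bigr)\Bigr],
\]
where the sum is restricted to $i\geq 3$ by Lemma~\ref{prop1} (there are no $1$- or $2$-variables in a reduced formula).

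Finally I would convert this identity into the claimed bound using the monotonicity assumption $\delta_{i-1}\geq \delta_i$ from~(\ref{weight2}). For each $i$ we have $\delta_i+\delta_{i-1}\geq 2\delta_i$, and therefore
\[
t_{i,1}(x)\,\delta_i \;+\; t_{i,2}(x)\bigl(\delta_i+\delta_{i-1}\bigr)\;\geq\;\bigl(t_{i,1}(x)+2\,t_{i,2}(x)\bigr)\delta_i\;=\;n_i(x)\,\delta_i,
\]
where the last equality is the identity $n_i(x)=t_{i,1}(x)+2t_{i,2}(x)$ recorded just before the lemma. Summing over $i\geq 3$ and adding $w_d$ yields the stated inequality. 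The only part that requires care, and what I view as the main (albeit small) obstacle, is making sure that degree drops from literals paired with $\overline{x}$ are correctly excluded, and that the case $y,\overline{y}\in N(x,\mathcal{F})$ is treated with the sharper contribution $\delta_i+\delta_{i-1}$ before relaxing it to $2\delta_i$; everything else is bookkeeping built on Lemma~\ref{prop1} and Lemma~\ref{prop4}(\ref{prop4-1}).
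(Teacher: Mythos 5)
Your overall strategy matches the paper's own proof exactly: isolate the $w_d$ from the vanishing variable, observe via Lemma~\ref{prop4}(\ref{prop4-1}) that each neighbor variable $y$ loses degree exactly $|\{y,\overline{y}\}\cap N(x,\mathcal{F})|$, split by $t_{i,1}$ and $t_{i,2}$, and relax $\delta_i+\delta_{i-1}$ to $2\delta_i$. There is, however, a concrete gap in the last step. You assert ``for each $i$ we have $\delta_i+\delta_{i-1}\geq 2\delta_i$,'' but this requires $\delta_{i-1}\geq\delta_i$, which fails at $i=3$: from $w_1=w_2=0$ we have $\delta_2=w_2-w_1=0$, while $\delta_3=w_3>0$. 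So the intended monotonicity holds only for $i\geq 4$, and the term $t_{3,2}(x)(\delta_3+\delta_2)=t_{3,2}(x)\,w_3$ would fall strictly short of $2t_{3,2}(x)\,\delta_3=2t_{3,2}(x)\,w_3$ whenever $t_{3,2}(x)>0$.

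The missing ingredient — and the one the paper invokes explicitly — is Lemma~\ref{prop4}(\ref{prop4-2}): since $\mathcal{F}$ is reduced, if $y$ is a $3$-variable and $xyC$ is a clause, then no other clause contains $\overline{y}$ together with $x$. Hence $y$ and $\overline{y}$ cannot both lie in $N(x,\mathcal{F})$, so $t_{3,2}(x)=0$ and the problematic $i=3$ doubly-counted term simply does not appear. With that observation inserted, your identity and the relaxation to $2\delta_i$ (now needed only for $i\geq 4$, where $\delta_{i-1}\geq\delta_i$ does hold by~(\ref{weight2})) go through and give exactly the claimed bound.
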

\begin{proof}

    After assigning value 1 to literal $x$, all clauses containing $x$ will be removed from $\mathcal{F}$, that is, all literals in $N(x, \mathcal{F})$ will be removed from $\mathcal{F}$.
    By Lemma \ref{prop4}, all literals in $N(x, \mathcal{F})$ are different from each other.
    For a variable $y$, there are at most two literals of it ($y$ and $\overline{y}$) in $N(x, \mathcal{F})$. So the degree of variable $y$ will decrease at most 2.
    Moreover, if $y$ is a $3$-variable, by Lemma \ref{prop4}, we know that $y$ and $\overline{y}$ will not simultaneously appear in $N(x, \mathcal{F})$ since there are no other clauses containing $x\overline{y}$ (i.e., $t_{3,2}(x) = 0$). So the degree of $y$ will decrease by at most $1$ if $y$ is a $3$-variable.

    Note that $\xi_{S}^{(1)}$ expresses how much the measure of $\mathcal{F}$ decreases after the assignment operation. So it holds that
    \[
        \xi_{S}^{(1)} \geq w_d + \sum_{3\leq i\leq d}{t_{i,1}(x)\delta_i} + \sum_{4\leq i\leq d}{t_{i,2}(x)(\delta_i+\delta_{i-1})}.
    \]
    Since $t_{3,2}(x)=0$, can write it as
    \[
        \xi_{S}^{(1)} = w_d + \sum_{3\leq i\leq d}{t_{i,1}(x)\delta_i} + \sum_{3\leq i\leq d}{t_{i,2}(x)(\delta_i+\delta_{i-1})}.
    \]
    By $\delta_i\geq \delta_{i-1}$ and $\delta_i\geq \delta_d$ for $3\leq i\leq d$, we have
    \begin{align*}
        \xi_{S}^{(1)}
         & \geq w_d + \sum_{3\leq i\leq d}{t_{i,1}(x)\delta_i} + \sum_{3\leq i\leq d}{t_{i,2}(x)(\delta_i + \delta_i)} \\
         & = w_d + \sum_{3\leq i\leq d}{(t_{i,1}(x) + 2t_{i,2}(x))\delta_i}.
    \end{align*}
    Note that $n_i(x)=t_{i,1}(x) + 2t_{i,2}(x)$ holds for $3\leq i\leq d$. We finally obtain
    \[
        \xi_{S}^{(1)} = w_d + \sum_{3\leq i\leq d}{n_i(x)\delta_i}.
    \]

\end{proof}

\begin{lemma}\label{coro-assign2} % This is a corollary of the last Lemma.
    Assume that $\mathcal{F}$ is a reduced CNF-formula with degree $d$. Let $S = \{x\}$, where $x$ is a $(j, d-j)$-literal in $\mathcal{F}$. It holds that
    \begin{equation*}
        \xi_{S}^{(1)} \geq w_d + j\delta_d.
    \end{equation*}
\end{lemma}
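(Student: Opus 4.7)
The plan is to derive the corollary as a direct consequence of Lemma~\ref{dec-assign2}, which already gives $\xi_S^{(1)} \geq w_d + \sum_{3\leq i\leq d} n_i(x)\delta_i$. What remains is to show $\sum_{3\leq i\leq d} n_i(x)\delta_i \geq j\delta_d$, and the simplest way to get there is to bound the total count of neighbors of $x$ from below by $j$ and then pull out $\delta_d$ using monotonicity of the weight differences.

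First I would count the clauses containing literal $x$. Since $x$ is a $(j, d-j)$-literal, there are exactly $j$ clauses containing $x$. Because $\mathcal{F}$ is reduced, R-Rule~\ref{rule-pure} has been exhausted, so no $1$-clauses exist, meaning every clause containing $x$ has at least one additional literal. I would then invoke Lemma~\ref{prop4}(i): if two distinct clauses containing $x$ shared another common literal $y$, both would contain $xy$, contradicting the lemma. Hence the $j$ clauses containing $x$ contribute at least $j$ pairwise distinct literals to $N(x, \mathcal{F})$, i.e., $|N(x, \mathcal{F})| \geq j$.

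Next, by Lemma~\ref{prop1}, every variable in a reduced formula is a $3^+$-variable, so $n_i(x) = 0$ for $i < 3$, and therefore $\sum_{3\leq i\leq d} n_i(x) \geq j$. From the weight assumption (\ref{weight2}), the sequence $\{\delta_i\}$ is non-increasing, so $\delta_i \geq \delta_d$ for all $3 \leq i \leq d$. Thus
\[
    \sum_{3\leq i\leq d} n_i(x)\delta_i \;\geq\; \delta_d \sum_{3\leq i\leq d} n_i(x) \;\geq\; j\delta_d.
\]
Plugging this into Lemma~\ref{dec-assign2} yields $\xi_S^{(1)} \geq w_d + j\delta_d$.

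The only subtlety to be careful about, rather than a true obstacle, is making sure that ``$j$ distinct literals in $N(x, \mathcal{F})$'' is justified rigorously; this is why I lean on Lemma~\ref{prop4}(i) instead of merely counting literal occurrences. Everything else is an elementary application of monotonicity of $\delta_i$ and Lemma~\ref{dec-assign2}, so no new branching analysis or case distinction is needed.
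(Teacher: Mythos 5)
Your proof is correct and follows essentially the same route as the paper: invoke Lemma~\ref{dec-assign2}, observe that the absence of $1$-clauses gives $\sum_{3\le i\le d} n_i(x)\ge j$, and pull out $\delta_d$ by monotonicity. The only difference is cosmetic — you explicitly cite Lemma~\ref{prop4}(i) to justify that the $j$ clauses containing $x$ yield $j$ distinct neighbors, a detail the paper leaves implicit; this extra care is sound and welcome but not a change of approach.
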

\begin{proof}
    By Lemma \ref{dec-assign2} and $\delta_i\geq \delta_d$ for $3\leq i\leq d$, we get
    \begin{align*}
        \xi_{S}^{(1)}
        \geq w_{deg(x)} + \sum_{3\leq i\leq d}n_i(x)\delta_i
        \geq w_d + d\sum_{3\leq i\leq d}n_i(x)\delta_d.
    \end{align*}
    As $x$ is a $(j,d-j)$-literal, there are $j$ clauses containing literal $x$.
    Since there is no $1$-clause in $\mathcal{F}$ (all $1$-clauses are reduced by R-Rule~\ref{rule-pure}), literal $x$ has at least $j$ neighbors, i.e., $\sum_{3\leq i\leq d}n_i(x) \geq j$.
    We further obtain
    \begin{align*}
        \xi_{S}^{(1)} \geq w_d + j\delta_d.
    \end{align*}
\end{proof}

\begin{lemma}\label{dec-reduce}
    Assume that $\mathcal{F}$ is a reduced CNF-formula with degree $d$. Let $S = \{x\}$, where $x$ is a literal in $\mathcal{F}$. It holds that
    \begin{equation*}
        \xi_{S}^{(2)}\geq n'_3(\overline{x})w_3 + \sum_{4\leq i\leq d}n'_i(\overline{x})w_{i-1}.
    \end{equation*}
\end{lemma}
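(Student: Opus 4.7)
The plan is to trace how R-Rules~1--4 act on $\mathcal{F}_{S=1}$. Each literal $y \in N^{(2)}(\overline{x}, \mathcal{F})$ corresponds, by definition, to a 2-clause $\overline{x}y$ in $\mathcal{F}$, which becomes a 1-clause $\{y\}$ once we perform the assignment $x = 1$. R-Rule~\ref{rule-pure} then processes $\{y\}$ by setting $y = 1$, thereby eliminating the variable $y$ from the formula. My goal is to lower-bound the weight contributed by each such eliminated $y$ to $\xi_{S}^{(2)}$.

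First I would show that the literals of $N^{(2)}(\overline{x}, \mathcal{F})$ come from pairwise distinct variables: if $\overline{x}y$ and $\overline{x}\overline{y}$ were both 2-clauses of $\mathcal{F}$, Lemma~\ref{prop2} applied to $\overline{x}y$ would be violated. Hence the 1-clauses $\{y\}$ produced in $\mathcal{F}_{S=1}$ cannot cancel one another, and each variable $y$ is independently removed by R-Rule~\ref{rule-pure}. Because the measure is additive and a removed variable contributes its initial weight in $\mathcal{F}_{S=1}$ to the total decrease regardless of processing order, each such $y$ contributes at least $w_{\deg_{\mathcal{F}_{S=1}}(y)}$ to $\xi_{S}^{(2)}$.

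Next I would compute $\deg_{\mathcal{F}_{S=1}}(y)$. By Lemma~\ref{prop2}, no other clause containing $\overline{x}$ has $y$ or $\overline{y}$ and no clause containing $x$ has $y$; only $x$-clauses may contain $\overline{y}$. Letting $k_{\overline{y}}$ count the number of $x$-clauses containing $\overline{y}$, we get $\deg_{\mathcal{F}_{S=1}}(y) = \deg_{\mathcal{F}}(y) - k_{\overline{y}}$. The key bookkeeping is $k_{\overline{y}} \le 1$: if two distinct $x$-clauses both contained $\overline{y}$, either one subsumes the other (contradicting R-Rule~\ref{rule-subsump}) or both have size $\ge 3$ and share the length-$2$ sub-clause $x\overline{y}$, triggering R-Rule~\ref{rule-back-resol}. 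Therefore $\deg_{\mathcal{F}_{S=1}}(y) \ge i - 1$ when $\deg_{\mathcal{F}}(y) = i \ge 4$, giving a contribution of at least $w_{i-1}$.

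For the refined case $i = 3$, I would invoke Lemma~\ref{prop4}(\ref{prop4-2}) on the 2-clause $\overline{x}y$, viewing $y$ as the 3-variable playing the role of ``$x$'' in that lemma's hypothesis: this forbids any clause from containing $\overline{y}x$, so $k_{\overline{y}} = 0$ and $\deg_{\mathcal{F}_{S=1}}(y) = 3$, yielding a contribution of $w_3$ rather than the weaker $w_2 = 0$. Summing over all $y \in N^{(2)}(\overline{x}, \mathcal{F})$ by degree produces the claimed bound
\[
    \xi_{S}^{(2)} \;\ge\; n'_3(\overline{x}) w_3 + \sum_{4 \le i \le d} n'_i(\overline{x}) w_{i-1}.
\]
The main obstacle I anticipate is the careful bookkeeping for $k_{\overline{y}}$, namely checking that every configuration in which $\overline{y}$ could appear more than once among $x$-clauses is excluded by some reduction rule, and verifying that the side-effects of cascading R-Rule~\ref{rule-pure} applications on different $y$'s never subtract from one another's contributions — which follows from the additivity of $\mu$.
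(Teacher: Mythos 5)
Your proof is correct and follows essentially the same approach as the paper's: each literal $y\in N^{(2)}(\overline{x},\mathcal{F})$ becomes a $1$-clause in $\mathcal{F}_{S=1}$, is eliminated by R-Rule~\ref{rule-pure}, and contributes its $\mathcal{F}_{S=1}$-weight to $\xi_S^{(2)}$, with distinctness and the degree bounds coming from Lemma~\ref{prop2} and Lemma~\ref{prop4}. If anything, your argument is slightly more careful than the paper's, which for the bound $\deg_{\mathcal{F}_{S=1}}(y)\geq i-1$ cites only Lemma~\ref{prop2} (giving $y\notin N(x,\mathcal{F})$) and leaves the crucial $k_{\overline{y}}\le 1$ tacit, whereas you justify it explicitly via R-Rules~\ref{rule-subsump}/\ref{rule-back-resol} (equivalently Lemma~\ref{prop4}(i)).
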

\begin{proof}%\textcolor{red}{[XXX: not modify yet.]}\textcolor{red}{[P:MODIFIED]}

    Recall that $\xi_{S}^{(2)}$ expresses how much the measure decreases after applying reduction Rule 1-4 on $\mathcal{F}_{S=1}$. All literals in $N^{(2)}(\overline{x}, \mathcal{F})$ will be assigned value 1 since they will be contained in $1$-clauses in $\mathcal{F}_{S=1}$ and R-Rule \ref{rule-pure} is applied.

    For a literal $y\in N^{(2)}(\overline{x}, \mathcal{F})$, we know that there are no other clauses containing $xy$ (i.e., $y\notin N(x, \mathcal{F})$) by Lemma \ref{prop2}. So if $var(y)$ is an $i$-variable in $\mathcal{F}$, then the degree of $y$ in $\mathcal{F}_{S=1}$ is at least $i-1$.
    Moreover, for a literal $y\in N^{(2)}(x, \mathcal{F})$ such that $var(y)$ is a $3$-variable, by Lemma \ref{prop4} we know that there are no other clauses containing $xy$ or $x\overline{y}$ (i.e., $y, \overline{y}\notin N(x, \mathcal{F})$). So the degree of $y$ would still be $3$ in $\mathcal{F}_{S=1}$.

    Since all variables corresponding to the literals in $N(\overline{x}, \mathcal{F})$ would not appear in $\mathcal{F}_{S=1}'$ and these variables are different from each other by Lemma \ref{prop2}, we obtain
    \begin{equation*}
        \xi_{S}^{(2)}\geq n'_3(\overline{x})w_3 + \sum_{4\leq i\leq d}n'_i(\overline{x})w_{i-1}.
    \end{equation*}
\end{proof}

After getting the lower bounds on $\xi_{S}^{(1)}$ and $\xi_{S}^{(2)}$, we are going to establish some lower bounds when we branch on a single literal $x$: $\Delta_{S}^*$ for the case $S=\{x\}$ and $\Delta_{S_1}^* + \Delta_{S_2}^*$ for the case that $S_1=\{x\}$ and $S_2=\{\overline{x}\}$.
We first consider a general lower bound on $\Delta_{S_1}^*+\Delta_{S_2}^*$.

\begin{lemma}\label{dec-sum}
    Assume that $\mathcal{F}$ is a reduced CNF-formula of degree $d$. Let $S_1 = \{x\}$ and $S_2 = \{\overline{x}\}$, where $x$ is a $d$-variable in $\mathcal{F}$. It holds that
    \begin{align*}
        \Delta_{S_1}^*+\Delta_{S_2}^*
         & \geq 2w_d + 2d\delta_d + (n'_3(x)+n'_3(\overline{x}))(2w_3-2\delta_d)    \\
         & \quad+ \sum_{4\leq i\leq d}{(n'_i(x)+n'_i(\overline{x}))(w_i-2\delta_d)}.
    \end{align*}
\end{lemma}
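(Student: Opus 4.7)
The plan is to combine the bounds from Lemma~\ref{dec-assign2} and Lemma~\ref{dec-reduce} applied to both $S_1=\{x\}$ and $S_2=\{\overline{x}\}$, and then reorganize the resulting expression clause-by-clause. The crucial identity driving the reorganization is the telescoping relation $\delta_i+w_{i-1}=w_i$: a $2$-clause $xy$ (or $\overline{x}y$) with a degree-$i$ neighbor $y$ contributes $\delta_i$ to the $\xi^{(1)}$ of one branch (where the clause is directly removed and $y$'s degree drops by one) and $w_{i-1}$ to the $\xi^{(2)}$ of the other branch (where the clause becomes a unit clause and $y$ is forced, eliminating the variable entirely), so the combined contribution is exactly $w_i$ for $i\geq 4$, and equals $2w_3$ for $i=3$ by (\ref{eq-w3-d3-d4}).

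First, I apply Lemma~\ref{dec-assign2} to both branches, split $n_i(\cdot)=n'_i(\cdot)+n''_i(\cdot)$, and use $\delta_i\geq\delta_d$ for $i\leq d$ on the $n''$ terms. Since each $3^+$-clause containing $x$ or $\overline{x}$ contributes at least two neighbors (one for each of the $\geq 2$ non-$x$ literals it has), the combined $n''$ sum is at least $2(d-E)\delta_d$, where $E=\sum_{i\geq 3}(n'_i(x)+n'_i(\overline{x}))$ counts the $2$-clauses that contain $x$ or $\overline{x}$. The $n'_i$ terms from Lemma~\ref{dec-assign2} retain their $\delta_i$ coefficients. Then, Lemma~\ref{dec-reduce} applied to both branches yields $\xi^{(2)}_{S_1}+\xi^{(2)}_{S_2}\geq(n'_3(x)+n'_3(\overline{x}))w_3+\sum_{4\leq i\leq d}(n'_i(x)+n'_i(\overline{x}))w_{i-1}$.

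Adding everything together and using the telescoping identity to merge the $\delta_i$ and $w_{i-1}$ coefficients into $w_i$ (respectively into $2w_3$ for $i=3$), I obtain $2w_d+2(d-E)\delta_d+(n'_3(x)+n'_3(\overline{x}))\cdot 2w_3+\sum_{i\geq 4}(n'_i(x)+n'_i(\overline{x}))\cdot w_i$. Finally, expanding $2(d-E)\delta_d=2d\delta_d-2E\delta_d$ and distributing the $-2\delta_d$ back onto each $2$-clause count yields exactly the claimed bound. The main obstacle will be the clause-by-clause bookkeeping: one must track which side's $\xi^{(1)}$ and the opposite side's $\xi^{(2)}$ each $2$-clause feeds into, and confirm that the $\delta_i$ from Lemma~\ref{dec-assign2} and the $w_{i-1}$ from Lemma~\ref{dec-reduce} account for disjoint aspects of the same variable's removal so that nothing is double-counted. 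The symmetric form of the sums $n'_i(x)+n'_i(\overline{x})$ is what makes this match transparent.
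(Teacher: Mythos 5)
Your proposal follows the paper's own proof essentially step for step: both apply Lemma~\ref{dec-assign2} and Lemma~\ref{dec-reduce} to the two branches, split $n_i$ into $n'_i+n''_i$, lower-bound the combined $n''$ sum by $2(d-E)\delta_d$ with $E$ the number of $2$-clauses on $x$ or $\overline{x}$, and recombine via $\delta_i+w_{i-1}=w_i$ (and $\delta_3+w_3=2w_3$). The ``disjointness'' worry you raise at the end is already handled by the fact that Lemmas~\ref{dec-assign2} and~\ref{dec-reduce} bound the separate summands $\xi^{(1)}$ and $\xi^{(2)}$ of $\Delta^*$, so adding their lower bounds is automatically sound.
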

\begin{proof}
    By Lemma \ref{dec-assign2} and Lemma \ref{dec-reduce}, we have
    \begin{align*}
        \xi_{S_1}^{(1)} \geq w_d + \sum_{3\leq i\leq d}n_i(x)\delta_i
        ~~\text{and}~~
        \xi_{S_1}^{(2)}\geq n'_3(\overline{x})w_3 + \sum_{4\leq i\leq d}n'_i(\overline{x})w_{i-1}.
    \end{align*}
    We can get lower bounds of $\xi_{S_2}^{(1)}$ and $\xi_{S_2}^{(2)}$ by the same way, and then we have:
    \begin{equation}\label{aux-bound-1}
        \begin{aligned}
            \Delta_{S_1}^* & = \xi_{S_1}^{(1)} + \xi_{S_1}^{(2)} \geq w_d + \sum_{3\leq i\leq d}n_i(x)\delta_i + n'_3(\overline{x})w_3 + \sum_{4\leq i\leq d}n'_i(\overline{x})w_{i-1}; \\
            \Delta_{S_2}^* & = \xi_{S_2}^{(1)} + \xi_{S_2}^{(2)} \geq w_d + \sum_{3\leq i\leq d}n_i(\overline{x})\delta_i + n'_3(x)w_3 + \sum_{4\leq i\leq d}n'_i(x)w_{i-1}.
        \end{aligned}
    \end{equation}
    As $\sum_{3\leq i\leq d}(n'_i(x)+n'_i(\overline{x}))$ is the number of $2$-clauses containing $x$ or $\overline{x}$ and there are $d$ clauses containing $x$ or $\overline{x}$, we know that the number of $3^+$-clauses that contains $x$ or $\overline{x}$ is $d-\sum_{3\leq i\leq d}(n'_i(x)+n'_i(\overline{x}))$. Recall that $n''_i(x)$ is the number of literals with degree $i$ that appear in $N^{(3+)}(x, \mathcal{F})$. We get
    \begin{equation}\label{ineq-n}
        \sum_{3\leq i\leq d}(n''_i(x)+n''_i(\overline{x}))\geq 2(d-\sum_{3\leq i\leq d}(n'_i(x)+n'_i(\overline{x}))).
    \end{equation}
    By (\ref{aux-bound-1}) and summing $\Delta_{S_1}^*$ and $\Delta_{S_2}^*$ up, we have
    \begin{align*}
        \Delta_{S_1}^*+\Delta_{S_2}^*
        & \geq w_d + \sum_{3\leq i\leq d}n_i(x)\delta_i + n'_3(\overline{x})w_3 + \sum_{4\leq i\leq d}n'_i(\overline{x})w_{i-1} \\
        & \quad + w_d + \sum_{3\leq i\leq d}n_i(\overline{x})\delta_i + n'_3(x)w_3 + \sum_{4\leq i\leq d}n'_i(x)w_{i-1}         \\
        & = 2w_d + (n'_3(x)+n'_3(\overline{x}))w_3 + \sum_{4\leq i\leq d}(n'_i(x)+n'_i(\overline{x}))w_{i-1}                      \\
        & \quad + \sum_{3\leq i\leq d}(n_i(x)+n_i(\overline{x}))\delta_i.
    \end{align*}
    Next, we first consider the lower bound on the term $\sum_{3\leq i\leq d}(n_i(x)+n_i(\overline{x}))\delta_i$ and then further analyze $\Delta_{S_1}^*+\Delta_{S_2}^*$.

    By $\delta_i\geq \delta_d$ for $3\leq i\leq d$ and (\ref{ineq-n}), we have
    \begin{align*}
        \sum_{3\leq i\leq d}(n''_i(x)+n''_i(\overline{x}))\delta_i
        & \geq (\sum_{3\leq i\leq d}(n''_i(x)+n''_i(\overline{x})))\delta_d
        \\
        & \geq 2(d-\sum_{3\leq i\leq d}(n'_i(x)+n'_i(\overline{x})))\delta_d.
    \end{align*}
    With $n_i(x)=n'_i(x)+n''_i(x)$ and $n_i(\overline{x})=n'_i(\overline{x})+n''_i(\overline{x})$, we get
    % \begin{align*}
    %     \sum_{3\leq i\leq d}(n_i(x)+n_i(\overline{x}))\delta_i = \sum_{3\leq i\leq d}(n'_i(x)+n'_i(\overline{x}))\delta_i + \sum_{3\leq i\leq d}(n''_i(x)+n''_i(\overline{x}))\delta_i.
    % \end{align*}
    % So we get
    \begin{align*}
        \sum_{3\leq i\leq d}(n_i(x)+n_i(\overline{x}))\delta_i
        & = \sum_{3\leq i\leq d}(n'_i(x)+n'_i(\overline{x}))\delta_i + \sum_{3\leq i\leq d}(n''_i(x)+n''_i(\overline{x}))\delta_i\\
        &\geq \sum_{3\leq i\leq d}(n'_i(x)+n'_i(\overline{x}))\delta_i + 2(d-\sum_{3\leq i\leq d}(n'_i(x)+n'_i(\overline{x})))\delta_d\\
        &\geq 2d\delta_d + \sum_{3\leq i\leq d}(n'_i(x)+n'_i(\overline{x}))(\delta_i-2\delta_d).
    \end{align*}
    Next we continue the previous analysis on $\Delta_{S_1}^*+\Delta_{S_2}^*$. By applying the above result, we have
    \begin{align*}
        \Delta_{S_1}^*+\Delta_{S_2}^*
        & \geq 2w_d + (n'_3(x)+n'_3(\overline{x}))w_3 + \sum_{4\leq i\leq d}(n'_i(x)+n'_i(\overline{x}))w_{i-1}                            \\
        & \quad + \sum_{3\leq i\leq d}(n_i(x)+n_i(\overline{x}))\delta_i                                                                   \\
        & = 2w_d + (n'_3(x)+n'_3(\overline{x}))w_3 + \sum_{4\leq i\leq d}(n'_i(x)+n'_i(\overline{x}))w_{i-1}                  \\
        & \quad + 2d\delta_d + \sum_{3\leq i\leq d}(n'_i(x)+n'_i(\overline{x}))(\delta_i-2\delta_d).
    \end{align*}
    Expanding $3\leq i\leq d$ to $i=3$ and $4\leq i\leq d$ in the last term and then combining like terms, we get
    \begin{align*}
        \Delta_{S_1}^*+\Delta_{S_2}^*
         & = 2w_d + 2d\delta_d + (n'_3(x)+n'_3(\overline{x}))w_3 + \sum_{4\leq i\leq d}(n'_i(x)+n'_i(\overline{x}))w_{i-1}                  \\
         & \quad + (n'_3(x)+n'_3(\overline{x}))(\delta_3-2\delta_d) + \sum_{4\leq i\leq d}(n'_i(x)+n'_i(\overline{x}))(\delta_i-2\delta_d) \\
         & \geq 2w_d + 2d\delta_d + (n'_3(x)+n'_3(\overline{x}))(w_3+\delta_3-2\delta_d)        \\
         & \quad + \sum_{4\leq i\leq d}(n'_i(x)+n'_i(\overline{x}))(w_{i-1}+\delta_i-2\delta_d) \\
         & = 2w_d + 2d\delta_d + (n'_3(x)+n'_3(\overline{x}))(2w_3-2\delta_d)                   \\
         & \quad + \sum_{4\leq i\leq d}{(n'_i(x)+n'_i(\overline{x}))(w_i-2\delta_d)}.
    \end{align*}
\end{proof}

In our algorithm, we apply a stronger branching for a $(1, 4)/(1, 3)$-literal $x$.
Assume $xC$ is the unique clause containing literal $x$.
The following two lemmas show lower bounds on $\Delta_S$ for the case $S=\{x\}\cup \overline{C}$ and $\Delta_{S_1}+\Delta_{S_2}$ for the case $S_1=\{x\}\cup \overline{C}$ and $S_2 = \{\overline{x}\}$.

\begin{lemma}\label{dec-1ltr}
    Assume that $\mathcal{F}$ is a reduced CNF-formula of degree $d=4$ or $5$. Let $x$ be a $(1, d-1)$-literal and $xC$ be the
    unique clause containing $x$ in $\mathcal{F}$. Let $S_1=\{x\}\cup \overline{C}$ and $S_2=\{\overline{x}\}$.
    It holds that
    \[
        \Delta_{S_1} + \Delta_{S_2} \geq 2w_d+3w_3+(2d-3)\delta_d
    \]
    and
    \[
        \min(\Delta_{S_1}, \Delta_{S_2})\geq w_d+\min(2w_3, (d-1)\delta_d).
    \]
\end{lemma}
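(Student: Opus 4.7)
Plan. Let $c=|C|$, and write $C=y_1y_2\cdots y_c$. By Lemma~\ref{prop5}(i), $c\geq 2$; since $\mathcal{F}$ is reduced (no duplicated literals and no tautological clauses), the variables in $V_1:=\{var(x)\}\cup var(C)$ are $c+1\geq 3$ distinct $3^+$-variables (by Lemma~\ref{prop1}). I handle the two inequalities separately.

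For the minimum inequality, I first apply Lemma~\ref{coro-assign2} to the $(d-1,1)$-literal $\overline{x}$ to obtain $\Delta_{S_2}\geq\xi_{S_2}^{(1)}\geq w_d+(d-1)\delta_d$. Next, for $\Delta_{S_1}$, every variable of $V_1$ disappears under the assignment $S_1=1$, so
\[
\Delta_{S_1}\geq\xi_{S_1}^{(1)}\geq w_d+\sum_{i=1}^{c}w_{deg(y_i)}\geq w_d+cw_3\geq w_d+2w_3.
\]
Combining these two inequalities gives $\min(\Delta_{S_1},\Delta_{S_2})\geq w_d+\min(2w_3,(d-1)\delta_d)$, which is the second conclusion of the lemma.

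For the sum inequality I keep $\Delta_{S_2}\geq w_d+(d-1)\delta_d$, so it suffices to prove $\Delta_{S_1}\geq w_d+3w_3+(d-2)\delta_d$. Beyond the $V_1$-contribution $w_d+cw_3$ to $\xi_{S_1}^{(1)}$, I plan to extract the remainder from two sources: (a) each removed clause containing some $\overline{y_i}$ drops by $1$ the degree of every one of its literals whose variable lies outside $V_1$, and Lemma~\ref{prop4}(ii) together with Lemma~\ref{prop5}(ii) guarantee that such ``outside'' literals are present in abundance (literals in $C$ never share a $2$-clause with $\overline{x}$, and when $y_i$ is a $3$-variable the clauses containing $\overline{y_i}$ host no other $y_j$ or $\overline{y_j}$); (b) every $2$-clause $\overline{x}z$ of $\mathcal{F}$ collapses to a $1$-clause after $x=1$, its unique literal $z$ satisfies $var(z)\notin V_1$ by Lemma~\ref{prop5}(ii), and these $z$'s are distinct across $2$-clauses by Lemma~\ref{prop4}(i), so R-Rule~\ref{rule-pure} in $\xi_{S_1}^{(2)}$ eliminates an additional variable worth $\geq w_3$ per such $2$-clause.

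The hard part will be the tight regime $c=2$, $n'(\overline{x})=0$, with both $y_1,y_2$ being $3$-variables: source (b) provides nothing and the baseline covers only $w_d+2w_3$, so the missing $w_3+(d-2)\delta_d$ must come entirely from source (a). The rescue is that R-Rule~\ref{rule-trivial-resol} (trivial resolution on some $y_i$) must be inapplicable in the reduced formula, which forces some variable's degree to strictly increase under a Davis--Putnam resolution on $y_i$ and thereby forces the $\overline{y_i}$-clauses to host enough non-$V_1$ literals. Unpacking this structural implication and combining it with Lemma~\ref{prop4}(ii) to match the target $w_3+(d-2)\delta_d$ is the delicate technical part of the proof.
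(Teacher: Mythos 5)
Your proof of the second inequality, $\min(\Delta_{S_1},\Delta_{S_2})\geq w_d+\min(2w_3,(d-1)\delta_d)$, is correct and is essentially the same as the paper's argument.

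The proof of the sum inequality, however, takes a genuinely different route that has a real gap. You fix $\Delta_{S_2}$ at the pessimistic bound $w_d+(d-1)\delta_d$ (the $j=d-1$ case of Lemma~\ref{coro-assign2}) and try to force $\Delta_{S_1}\geq w_d+3w_3+(d-2)\delta_d$ on its own. The paper does not do this; instead it parameterizes by $p=\sum_{3\le i\le d} n'_i(\overline{x})$, the number of $2$-clauses containing $\overline{x}$, and exploits a \emph{tradeoff}: on one side $\Delta_{S_1}\geq w_d+2w_3+pw_3+|N_c|\delta_d$ grows with $p$ (your source (b)), while on the other side the bound $\Delta_{S_2}\geq w_d+(2d-2-p)\delta_d$ from Lemma~\ref{dec-assign2} \emph{shrinks} with $p$ because each $2$-clause deprives $\overline{x}$ of one potential $3^+$-clause neighbor. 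Summing these and splitting on $p\geq 1$ versus $p=0$ (where $|N_c|\geq 1$) gives the result cleanly. Your bound $\Delta_{S_2}\geq w_d+(d-1)\delta_d$ throws away the factor-of-two improvement available when $p$ is small, which is precisely where you then run into trouble.

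Concretely, in the tight regime you identify ($p=0$, $c=2$, both $y_i$ $3$-variables), the paper only needs $\Delta_{S_1}\geq w_d+2w_3+\delta_d$ because $\Delta_{S_2}\geq w_d+(2d-2)\delta_d$ absorbs the rest (using $2\delta_d>w_3$). You need the much stronger $\Delta_{S_1}\geq w_d+3w_3+(d-2)\delta_d$ from source~(a) alone. The R-Rule~\ref{rule-trivial-resol} ``rescue'' you sketch does not obviously deliver this. For instance with $d=4$, take $xy_1y_2$, $\overline{y_1}ab$, $\overline{y_2}\overline{a}\overline{b}$ with $a,b$ being $3$-variables and all three clauses containing $\overline{x}$ being $3^+$-clauses; R-Rules~\ref{rule-back-resol} and~\ref{rule-trivial-resol} do not exclude this configuration, and sources (a) and (b) as you describe them yield only $2w_3$ of extra decrease (the two $3$-variables $a,b$ each dropping by $2$ to degree~$1$), which is short of the needed $3w_3$. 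Recovering the missing $w_3$ requires chasing the cascade through $\xi_{S_1}^{(2)}$ (R-Rule~\ref{rule-pure} then removes the remaining $a$- and $b$-clauses), an argument you neither state nor bound. The cleaner fix is to adopt the paper's tradeoff on $p$ rather than fighting for a uniform lower bound on $\Delta_{S_1}$.
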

\begin{proof} %\textcolor{red}{[XXX: not modify yet.]}\textcolor{red}{[P:modify notations and content]}
    We first consider $\Delta_{S_1}$.
    By Lemma \ref{prop5}, we know that $|C|\geq 2$ and $var(C)\cap var(N^{(2)}(\overline{x}, \mathcal{F})) = \emptyset$. So, after assigning value 1 to all literals in $S_1$, all $2$-clauses containing $\overline{x}$ in $\mathcal{F}$ would become $1$-clauses in $\mathcal{F}_{S_1=1}$ and the degree of each variable in these $1$-clauses is the same as that in $\mathcal{F}$. Then, by applying R-Rule~\ref{rule-pure}, all variables in these $1$-clauses get assignments. Thus, all variables in $var(C)\cup var(N^{(2)}(\overline{x}, \mathcal{F}))$ would not appear in $\mathcal{F}_{S_1=1}'$. Since all variables in $C$ are $3^+$-variables and $w_3\leq w_i$ for $i\geq 3$, we preliminarily have
    \begin{equation}\label{dec-1ltr-loose}
        \begin{aligned}
            \Delta_{S_1}
            & \geq \sum_{v\in S_1}w_{deg(v)} + \sum_{3\leq i\leq d}n'_i(\overline{x})w_i\\
            & \geq w_d+|C|w_3 + \sum_{3\leq i\leq d}n'_i(\overline{x})w_i\\
            & \geq w_d+2w_3 + \sum_{3\leq i\leq d}{n'_i(\overline{x})w_i}.
        \end{aligned}
    \end{equation}
    Note that when we assign value 1 to a literal $y\in \overline{C}$, the neighbors of $y$ would also be removed from the formula, which may further decrease the measure if $y$ is not in $N^{(2)}(\overline{x}, \mathcal{F})$ (Otherwise, for a literal $l\in N(y, \mathcal{F}) \cup N^{(2)}(\overline{x}, \mathcal{F})$, it will get assignment by applying R-Rule~\ref{rule-pure} and the decrease of measure will be counted in $\sum_{3\leq i\leq d}n'_i(\overline{x})w_i$).
    So we define $N_c=\{z: z\in \bigcup_{y\in \overline{C}} N(y, \mathcal{F}) \text{~and~} z\notin N^{(2)}(\overline{x}, \mathcal{F})\}$.
    As $\delta_d \leq \delta_i$ for $i\leq d$, we further get
    \begin{align*}
        \Delta_{S_1}
        & \geq w_d + 2w_3 + \sum_{3\leq i\leq d}n'_i(\overline{x})w_i + |N_c|\delta_d\\
        & \geq w_d + 2w_3 + (\sum_{3\leq i\leq d}n'_i(\overline{x}))w_3 + |N_c|\delta_d.
    \end{align*}
    For $\Delta_{S_2}$, by Lemma~\ref{dec-assign2} and $\delta_d\leq \delta_i$ for $3\leq i\leq d$, we have
    \[
        \Delta_{S_2}
        \geq w_d+\sum_{3\leq i\leq d}n_i(\overline{x})\delta_i
        \geq w_d+(\sum_{3\leq i\leq d}n_i(\overline{x}))\delta_d.
    \]
    Let $p = \sum_{3\leq i\leq d}n'_i(\overline{x})$ and $q = \sum_{3\leq i\leq d}n''_i(\overline{x})$.
    Note that $p$ is also the number of $2$-clauses containing $\overline{x}$ and $0\leq p\leq d-1$.
    Since there are $d-1$ clauses containing $\overline{x}$, it holds that $q\geq 2(d-1-p)$.
    Recall that $n_i(\overline{x})=n'_i(\overline{x})+n''_i(\overline{x})$ for $i\in N^+$, we have
    \[
        \sum_{3\leq i\leq d}n_i(\overline{x}) = p + q\geq p + 2(d-1-p) = 2d-2-p.
    \]
    So
    \[
        \Delta_{S_1}\geq w_d + 2w_3 + pw_3 + |N_c|\delta_d \text{~~and~~}
        \Delta_{S_2}\geq w_d+(2d-2-p)\delta_d.
    \]
    Summing them up, we get
    \[
        \Delta_{S_1}+\Delta_{S_2} \geq 2w_d+2w_3 + (|N_c| + 2d-2)\delta_d + p(w_3-\delta_d).
    \]
    Note that since $0\leq p\leq d-1$, it holds that $\Delta_{S_2}\geq w_d+(d-1)\delta_d$.

    Next, let us consider the following two cases.

    \textbf{Case 1.} $\overline{x}$ is contained in at least one $2$-cluase, i.e., $p\geq 1$. We have
    \[
        \Delta_{S_1} + \Delta_{S_2} \geq 2w_d + 2w_3 + (2d-2)\delta_d + w_3-\delta_d = 2w_d+3w_3+(2d-3)\delta_d.
    \]
    For this case, we also have $\Delta_{S_1}\geq w_d + 3w_3$, $\Delta_{S_2}\geq w_d+(d-1)\delta_d$, and $\min(\Delta_{S_1}, \Delta_{S_2})\geq w_d + \min(3w_3, (d-1)\delta_d)\geq 2w_3$ since that $2w_3<(d-1)\delta_d$ holds for $d=4$ and $d=5$.

    \textbf{Case 2.} All clauses containing $\overline{x}$ are $3^+$-clauses, i.e., $N^{(2)}(\overline{x}, \mathcal{F}) = \emptyset$ and $p=0$.
    This implies $|N_c|\geq 1$ since $\bigcup_{y\in \overline{C}} N(y, \mathcal{F})\neq \emptyset$. Note that $w_3<2\delta_d$ holds for $d=4$ and $d=5$.
    We have
    \begin{align*}
        \Delta_{S_1} + \Delta_{S_2} \geq 2w_d + 2w_3 + (1 + 2d-2)\delta_d
        &= 2w_d + 2w_3 + (2d-1)\delta_d \\
        &\geq 2w_d + 3w_3 + (2d-3)\delta_d.
    \end{align*}

    For this case, we also have $\Delta_{S_1}\geq w_d + 2w_3$, $\Delta_{S_2}\geq w_d+(d-1)\delta_d$, and $\min(\Delta_{S_1}, \Delta_{S_2})\geq w_d + \min(2w_3, (d-1)\delta_d)\geq 2w_3$ since that $2w_3<(d-1)\delta_d$ holds for $d=4$ and $d=5$.
\end{proof}

As shown in Algorithm~\ref{main-algorithm}, we consider several cases for $5$-litreals. The following lemma is a corollary based on Lemma \ref{dec-assign2} in order to get tighter bounds on $\Delta_{S_1}+\Delta_{S_2}$ for the case $S_1=\{x\}$ and $S_2=\{\overline{x}\}$ where $x$ is a $5$-variable.
\begin{lemma}\label{dec-sum-5d3c-t}
    Assume that $\mathcal{F}$ is a reduced CNF-formula of $d=5$. Let $S_1=\{x\}$ and $S_2=\{\overline{x}\}$, where $x$ is a $(2,3)/(3,2)$-literal in $\mathcal{F}$. If all clauses containing $x$ or $\overline{x}$ are $3^+$-clauses, it holds that
    % \begin{align*}
    %     \xi_{S_1}^{(1)}+\xi_{S_2}^{(1)}
    %     & \geq 2w_5+(\sum_{3\leq i\leq 5}{(n_i(x)+n_i(\overline{x}))})\delta_5    \\
    %     &  \quad +(t_{5,2}(x) + t_{5,2}(\overline{x}) + \sum_{3\leq i\leq 4}{(n_i(x)+n_i(\overline{x}))})(w_3-\delta_5).
    % \end{align*}
    \begin{align*}
        \xi_{S_1}^{(1)}+\xi_{S_2}^{(1)} \geq 2w_5 + \sum_{3\leq i\leq 5}(n_i(x)+n_i(\overline{x}))\delta_i + (t_{5,2}(x)+t_{5,2}(\overline{x}))(\delta_4-\delta_5).
    \end{align*}
\end{lemma}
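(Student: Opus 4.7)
The plan is to replay the accounting argument from the proof of Lemma~\ref{dec-assign2} separately for $S_1 = \{x\}$ and $S_2 = \{\overline{x}\}$, but to refuse the loose estimate $\delta_{i-1} \geq \delta_i$ at $i = 5$ so as to preserve a corrective $(\delta_4 - \delta_5)$ term that the coarser Lemma~\ref{dec-assign2} discards.

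For $S_1 = \{x\}$, I would start from the intermediate bound established inside the proof of Lemma~\ref{dec-assign2},
\[
    \xi_{S_1}^{(1)} \geq w_5 + \sum_{3 \leq i \leq 5} t_{i,1}(x)\,\delta_i + \sum_{3 \leq i \leq 5} t_{i,2}(x)(\delta_i + \delta_{i-1}),
\]
where the $i = 3$ summand in the second sum is harmless because $t_{3,2}(x) = 0$. This vanishing comes from Lemma~\ref{prop4}(ii): if some $3$-variable $y$ had both $y$ and $\overline{y}$ in $N(x, \mathcal{F})$, then viewing any clause $xyC$ as $yxC$ (with $y$ now in the role of the $3$-variable) forbids another clause from containing $\overline{y}x$, contradicting $\overline{y} \in N(x, \mathcal{F})$.

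For the $i = 4$ summand I would loosen using $\delta_3 \geq \delta_4$ to obtain $t_{4,2}(x)(\delta_4 + \delta_3) \geq 2\, t_{4,2}(x)\,\delta_4$. For the $i = 5$ summand, instead of the analogous loosening, I would split exactly
\[
    t_{5,2}(x)(\delta_5 + \delta_4) = 2\, t_{5,2}(x)\,\delta_5 + t_{5,2}(x)(\delta_4 - \delta_5),
\]
which retains the nonnegative correction $(\delta_4 - \delta_5)$ demanded by the conclusion. Collecting terms through $n_i(x) = t_{i,1}(x) + 2\, t_{i,2}(x)$ (valid at $i = 3$ because $t_{3,2}(x) = 0$) then yields
\[
    \xi_{S_1}^{(1)} \geq w_5 + \sum_{3 \leq i \leq 5} n_i(x)\,\delta_i + t_{5,2}(x)(\delta_4 - \delta_5).
\]

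The symmetric argument applied to $\overline{x}$ produces the analogous bound for $\xi_{S_2}^{(1)}$, and summing the two inequalities gives the claimed bound. I do not expect a substantive obstacle: the work is purely a sharper accounting than that of Lemma~\ref{dec-assign2}, and the hypothesis that all clauses containing $x$ or $\overline{x}$ are $3^+$-clauses is not actually invoked in the inequality chain itself, but rather fixes the context in which the refined bound will later be applied by the algorithm.
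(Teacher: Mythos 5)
Your proof is correct and follows essentially the same route as the paper's: both restart from the intermediate bound $\xi_{S}^{(1)} \geq w_5 + \sum_{i} t_{i,1}(x)\delta_i + \sum_{i} t_{i,2}(x)(\delta_i+\delta_{i-1})$ established inside Lemma~\ref{dec-assign2}, convert to $n_i$-terms via $n_i = t_{i,1}+2t_{i,2}$, and keep the residual $t_{5,2}(x)(\delta_4-\delta_5)$ instead of discarding it (the paper drops the $i=4$ residual via the equality $\delta_3=\delta_4$, you drop it via the loosening $\delta_3\geq\delta_4$, which is the same thing here). Your side observation that the $3^+$-clause hypothesis is not used in the inequality chain is also accurate — the paper's proof does not invoke it either; that hypothesis merely records the algorithmic context in which the bound is later deployed.
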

\begin{proof} %\textcolor{red}{[XXX: not modify yet.]}\textcolor{red}{[P:modify notations]}
    From the proof of Lemma~\ref{dec-assign2}, we have
    \begin{align*}
        \xi_{S}^{(1)}
        & \geq w_{5} + \sum_{3\leq i\leq 5}{t_{i,1}(x)\delta_i} + \sum_{4\leq i\leq 5}{t_{i,2}(x)(\delta_i+\delta_{i-1})}         \\
        & = w_{5} + \sum_{3\leq i\leq 5}{(n_i(x) - 2t_{i,2}(x))\delta_i} + \sum_{4\leq i\leq 5}{t_{i,2}(x)(\delta_i+\delta_{i-1})}\\
        & = w_{5} + \sum_{3\leq i\leq 5}n_i(x)\delta_i -\sum_{3\leq i\leq 5}2t_{i,2}(x)\delta_i + \sum_{4\leq i\leq 5}t_{i,2}(x)(\delta_{i-1}+\delta_i).
    \end{align*}
    As mentioned before, $t_{3,2}(x)=0$ holds, so we have
    \begin{align*}
        \xi_{S}^{(1)}
        &\geq w_{5} + \sum_{3\leq i\leq 5}n_i(x)\delta_i -\sum_{4\leq i\leq 5}2t_{i,2}(x)\delta_i + \sum_{4\leq i\leq 5}t_{i,2}(x)(\delta_{i-1}+\delta_i)\\
        &= w_{5} + \sum_{3\leq i\leq 5}n_i(x)\delta_i + \sum_{4\leq i\leq 5}t_{i,2}(x)(\delta_{i-1}-\delta_i).
    \end{align*}
    Since $\delta_{i-1}-\delta_i=0$ when $i=4$, we have
    \begin{align*}
        \xi_{S_1}^{(1)} \geq w_5 + \sum_{3\leq i\leq 5}n_i(x)\delta_i + t_{5,2}(x)(\delta_4-\delta_5).
    \end{align*}
    Similarly, we can get
    \begin{align*}
        \xi_{S_2}^{(1)} \geq w_5 + \sum_{3\leq i\leq 5}n_i(\overline{x})\delta_i + t_{5,2}(\overline{x})(\delta_4-\delta_5).
    \end{align*}
    Summing $\xi_{S_1}^{(1)}$ and $\xi_{S_2}^{(1)}$ up, we have
    \begin{align*}
        \xi_{S_1}^{(1)}+\xi_{S_2}^{(1)}
        & \geq 2w_5 + \sum_{3\leq i\leq 5}(n_i(x)+n_i(\overline{x}))\delta_i + (t_{5,2}(x)+t_{5,2}(\overline{x}))(\delta_4-\delta_5).
    \end{align*}
    % With $\delta_3=\delta_4=w_3$, we get
    % \begin{align*}
    %     \xi_{S_1}^{(1)}+\xi_{S_2}^{(2)}
    %      & \geq 2w_5 + \sum_{3\leq i\leq 4}{(n_i(x)+n_i(\overline{x}))}w_3 + (t_{5,2}(x)+t_{5,2}(\overline{x}))(w_3-\delta_5)\\
    %      & \quad + (\sum_{3\leq i\leq 5}{(n_i(x)+n_i(\overline{x}))}-\sum_{3\leq i\leq 4}{(n_i(x)+n_i(\overline{x}))})\delta_5       \\
    %      & = 2w_5+(\sum_{3\leq i\leq 5}{(n_i(x)+n_i(\overline{x}))})\delta_5                                                       \\
    %      & \quad + (t_{5,2}(x)+t_{5,2}(\overline{x})+\sum_{3\leq i\leq 4}{(n_i(x)+n_i(\overline{x}))})(w_3-\delta_5).
    % \end{align*}
\end{proof}

The following lemma is a corollary of Lemma~\ref{dec-sum-5d3c-t}, which will also be used in our analysis to simplify the arguments.
\begin{lemma}\label{dec-sum-5d3c}
    Assume that $\mathcal{F}$ is a reduced CNF-formula of $d=5$. Let $S_1=\{x\}$ and $S_2=\{\overline{x}\}$, where $x$ is a $(2,3)/(3,2)$-literal in $\mathcal{F}$. If all clauses containing $x$ or $\overline{x}$ are $3^+$-clauses, $\sum_{3\leq i\leq 5}(n_i(x)+n_i(\overline{x}))\geq g$, and $\sum_{3\leq i\leq 4}(n_i(x)+n_i(\overline{x}))\geq h$, then it holds that
    \begin{align*}
        \xi_{S_1}^{(1)}+\xi_{S_2}^{(1)}
        & \geq 2w_5 + g\delta_5 + h(w_3-\delta_5) + (t_{5,2}(x)+t_{5,2}(\overline{x}))(\delta_4-\delta_5).
    \end{align*}
\end{lemma}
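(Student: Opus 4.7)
The plan is to apply Lemma~\ref{dec-sum-5d3c-t} directly as the starting point and then bound the summation $\sum_{3\leq i\leq 5}(n_i(x)+n_i(\overline{x}))\delta_i$ from below in terms of the quantities $g$ and $h$, using the weight relations from (\ref{eq-w3-d3-d4}) and (\ref{weight2}).

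First, I would invoke Lemma~\ref{dec-sum-5d3c-t} to obtain
\[
\xi_{S_1}^{(1)}+\xi_{S_2}^{(1)} \geq 2w_5 + \sum_{3\leq i\leq 5}(n_i(x)+n_i(\overline{x}))\delta_i + (t_{5,2}(x)+t_{5,2}(\overline{x}))(\delta_4-\delta_5).
\]
The tail term $(t_{5,2}(x)+t_{5,2}(\overline{x}))(\delta_4-\delta_5)$ already matches the desired statement, so only the middle sum needs to be rewritten.

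Next, using $\delta_3 = \delta_4 = w_3$ from (\ref{eq-w3-d3-d4}), I would split the sum into degree-3/4 and degree-5 contributions:
\[
\sum_{3\leq i\leq 5}(n_i(x)+n_i(\overline{x}))\delta_i = \Bigl(\sum_{3\leq i\leq 4}(n_i(x)+n_i(\overline{x}))\Bigr)w_3 + (n_5(x)+n_5(\overline{x}))\delta_5.
\]
By adding and subtracting $\bigl(\sum_{3\leq i\leq 4}(n_i(x)+n_i(\overline{x}))\bigr)\delta_5$, this equals
\[
\Bigl(\sum_{3\leq i\leq 5}(n_i(x)+n_i(\overline{x}))\Bigr)\delta_5 + \Bigl(\sum_{3\leq i\leq 4}(n_i(x)+n_i(\overline{x}))\Bigr)(w_3-\delta_5).
\]

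Finally, since (\ref{weight2}) guarantees $w_3 \geq \delta_5$, both coefficients are nonnegative, so I may substitute the hypothesized lower bounds $\sum_{3\leq i\leq 5}(n_i(x)+n_i(\overline{x}))\geq g$ and $\sum_{3\leq i\leq 4}(n_i(x)+n_i(\overline{x}))\geq h$ to obtain $g\delta_5 + h(w_3-\delta_5)$. Combining with the bound from Lemma~\ref{dec-sum-5d3c-t} yields the claim. There is no genuine obstacle here; the only thing to be careful about is the sign of $w_3-\delta_5$, which is why the assumption $w_3\geq\delta_5$ in (\ref{weight2}) is essential to permit the monotone substitution.
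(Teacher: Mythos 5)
Your proposal is correct and follows essentially the same route as the paper: invoke Lemma~\ref{dec-sum-5d3c-t}, split the weighted sum using $\delta_3=\delta_4=w_3$, rewrite it as $g'\delta_5+h'(w_3-\delta_5)$, and then use the nonnegativity of $\delta_5$ and $w_3-\delta_5$ to substitute the lower bounds $g$ and $h$. The paper states $w_3-\delta_5>0$; your observation that the weaker $w_3\geq\delta_5$ from~(\ref{weight2}) already suffices for the monotone substitution is accurate and slightly more careful.
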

\begin{proof}
    By Lemma~\ref{dec-sum-5d3c-t} we have
    \begin{align*}
        \xi_{S_1}^{(1)}+\xi_{S_2}^{(1)}
        & \geq 2w_5 + \sum_{3\leq i\leq 5}(n_i(x)+n_i(\overline{x}))\delta_i + (t_{5,2}(x)+t_{5,2}(\overline{x}))(\delta_4-\delta_5) \\
        & =    2w_5 + \sum_{3\leq i\leq 4}(n_i(x)+n_i(\overline{x}))\delta_i + (n_5(x)+n_5(\overline{x}))\delta_5\\
        & \quad + (t_{5,2}(x)+t_{5,2}(\overline{x}))(\delta_4-\delta_5).
    \end{align*}
    Let $g'=\sum_{3\leq i\leq 5}(n_i(x)+n_i(\overline{x}))\geq g$ and $h'=\sum_{3\leq i\leq 4}(n_i(x)+n_i(\overline{x}))\geq h$.
    Since $\delta_3=\delta_4=w_3$ and $n_5(x)+n_5(\overline{x})= g'-h'$, we have
    \begin{align*}
        \xi_{S_1}^{(1)}+\xi_{S_2}^{(1)}
        & \geq 2w_5 + h'w_3 + (g'-h')\delta_5 + (t_{5,2}(x)+t_{5,2}(\overline{x}))(\delta_4-\delta_5)\\
        & =    2w_5 + g'\delta_5 + h'(w_3-\delta_5) + (t_{5,2}(x)+t_{5,2}(\overline{x}))(\delta_4-\delta_5).
    \end{align*}
    Note that $\delta_5>0$ and $w_3-\delta_5>0$, so it holds that
    \begin{align*}
        \xi_{S_1}^{(1)}+\xi_{S_2}^{(1)} \geq 2w_5 + g\delta_5 + h(w_3-\delta_5) + (t_{5,2}(x)+t_{5,2}(\overline{x}))(\delta_4-\delta_5).
    \end{align*}
\end{proof}

\section{Step Analysis}
Equipped with the above lower bounds, we are ready to analyze the branching vector of each step in the algorithm.

\subsection{Step~2}
\textbf{Step 2.} If $\mathcal{F}$ is not a reduced CNF-formula, iteratively apply the reduction rules to reduce it.

In this step, we only apply reduction rules to reduce the formula.
However, it is still important to show that the measure will never increase when applying reduction rules, and reduction operations use only polynomial time.

\begin{lemma}\label{reduction-measure}
    For any CNF-formula $\mathcal{F}$, it holds that
    $$\mu(R(\mathcal{F}))\leq \mu(\mathcal{F}).$$
\end{lemma}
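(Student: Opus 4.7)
The plan is to verify that each of the ten reduction rules, applied once, does not increase $\mu$; the lemma then follows by induction on the number of rule applications performed to reach $R(\mathcal{F})$.

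Most rules are immediate. R-Rules~\ref{rule-dup}, \ref{rule-subsump}, \ref{rule-tauto}, \ref{rule-2cls-1neg}, \ref{rule-1liter-neg}, and \ref{rule-2cls-2neg} only delete literal occurrences or entire clauses, so every variable's degree can only decrease; combined with the monotonicity $w_i \geq w_{i-1}$ guaranteed by (\ref{weight1}) and (\ref{weight2}), this gives $\mu(\mathcal{F}') \leq \mu(\mathcal{F})$ directly. R-Rule~\ref{rule-pure} assigns a variable to a truth value, eliminating it entirely and only reducing degrees of neighboring variables, so again $\mu$ cannot increase. R-Rule~\ref{rule-trivial-resol} (trivial resolution) is \emph{designed} so that no variable's degree grows, while the resolved variable itself disappears, so $\mu$ strictly decreases.

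Two rules need more care. For R-Rule~\ref{rule-link}, replacing $z_1$ by $\overline{z_2}$ eliminates the variable $var(z_1)$, which has degree at least $3$ by Lemma~\ref{prop1} and therefore contributes at least $w_3$ to $\mu$. The only possible measure gain comes from the degree of $var(z_2)$ increasing. After accounting for the tautologies deleted by R-Rule~\ref{rule-tauto} (one from the 2-clause $z_1z_2$ itself, and, in the companion-clause sub-case, also one from $\overline{z_1}\overline{z_2}$), the degree increase of $var(z_2)$ is at most $deg(var(z_1))-2$. Using $w_i \leq i$ together with the weight relations $w_3 \geq \delta_5$ and $\delta_i \leq \delta_{i-1}$ from (\ref{weight2}), a short case analysis on $deg(var(z_1))$ and $deg(var(z_2))$ shows that this gain is always absorbed by the loss of $var(z_1)$.

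The main obstacle is R-Rule~\ref{rule-back-resol}, where a fresh $3$-variable $x$ is introduced, raising $\mu$ by exactly $w_3$. However, the replacement of $CD_1, CD_2$ by $xC, \overline{x}D_1, \overline{x}D_2$ leaves every variable in $var(D_1)\cup var(D_2)$ at the same degree and decreases the degree of each variable in $var(C)$ by exactly one. Since $|C| \geq 2$ and each such degree drop costs at least $\delta_{deg(v)} \geq 1$ by (\ref{weight2}), the total weight decrease from $var(C)$ is at least $2$. Since $w_3 < 2$ by (\ref{weight1}), the net change in $\mu$ is at most $w_3 - 2 < 0$, so $\mu$ strictly decreases, completing the argument.
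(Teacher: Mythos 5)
Your proposal is correct and follows essentially the same route as the paper: observe that R-Rules 1--8 can only delete literal occurrences (or, for trivial resolution, are conditioned on no degree increasing), then reduce R-Rule~\ref{rule-link} to verifying $w_{i+j-2}-w_i-w_j\leq 0$ by a short case analysis on $i=\deg(var(z_1))$, and handle R-Rule~\ref{rule-back-resol} by weighing the $w_3$ cost of the new variable against the loss of at least $2\cdot\min_{i\geq 3}\delta_i=2>w_3$ from the $|C|\geq 2$ degree drops. The only minor imprecision is that the inequalities you cite for the R-Rule~\ref{rule-link} case analysis are slightly incomplete (one also needs $w_3=\delta_3=\delta_4$ from~(\ref{eq-w3-d3-d4}), $w_4=2w_3$, and $\delta_i\geq 1$ for $i\geq 3$, exactly as the paper invokes), but the structure of the argument is the same.
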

\begin{proof}% \textcolor{red}{[XXX: not modify yet.]}

    It suffices to verify that each reduction rule would not increase the measure of the formula.

    R-Rules 1-8 simply remove some literals, which would not increase the measure of the formula.

    Next, we consider R-Rule \ref{rule-link} and \ref{rule-back-resol}. Note that now
    R-Rule \ref{rule-pure} and \ref{rule-trivial-resol} are not applicable, and then all variables in $\mathcal{F}$ are $3^+$-variables.
    The reason is below.
    If there is a $1$-variable, we could apply R-Rule \ref{rule-pure}. For a $2$-variable in $\mathcal{F}$, if it is a $(2, 0)$-variable or $(0, 2)$-variable, R-Rule \ref{rule-pure} would be applicable; if it is a $(1, 1)$-variable, R-Rule \ref{rule-trivial-resol} would be applicable.

    For R-Rule \ref{rule-link}, without loss of generalization, we assume the degree of $z_1$ is $i$ and the degree of $z_2$ is $j$ such that $3\leq i\leq j$. After applying this rule, the degree of $z_2$ will become $i+j-2$ since we replace $z_1$ with $\overline{z_2}$ and clause $z_1z_2$ is removed by R-Rule \ref{rule-tauto}. Next, we show that $\mu(R(\mathcal{F}))-\mu(\mathcal{F})=w_{i+j-2}-w_i-w_j\leq 0$ holds.

    By (\ref{weight1}), (\ref{weight2}) and (\ref{eq-w3-d3-d4}), we know that: \\
    if $i=3$, then $$w_{3+j-2}-w_3-w_j=w_{j+1}-w_{j}-w_3=\delta_{j+1}-w_3\leq \delta_4-w_3 = 0;$$
    if $i=4$, then $$w_{4+j-2}-w_4-w_j=w_{j+2}-w_{j}-w_4=\delta_{j+2}+\delta_{j+1}-w_4\leq \delta_6+\delta_5-w_4<0;$$
    if $i\geq 5$, then $$w_{i+j-2}-w_i-w_j=(i+j-2)-i-j=-2<0.$$

    For R-Rule \ref{rule-back-resol},  two clauses $CD_1$ and $CD_2$ in $\mathcal{F}$ are replaced with three clauses $xC, \overline{x}D_1$ and $\overline{x}D_2$. We introduce a $3$-variable $x$ and also decrease the degree of each literal in $C$ by 1. The introduction of $x$ increases the measure of the formula by $w_3$. On the other hand, since all variables are $3^+$-variable in $\mathcal{F}$ and $|C|\geq 2$, the removing of clause $C$ decreases the measure at least $2\min\{\delta_i|i\geq 3\}=2$. Since $w_3<2$, we know that $$\mu(R(\mathcal{F}))-\mu(\mathcal{F})=w_3-2\min\{\delta_i|i\geq 3\} = w_3-2 < 0.$$
\end{proof}

\begin{lemma}\label{Rpolytime}
    For any CNF-formula $\mathcal{F}$,
    we can apply the reduction rules in polynomial time to transfer it to $R(\mathcal{F})$.
\end{lemma}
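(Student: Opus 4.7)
The plan is to prove this lemma in two parts: (a) each reduction rule can be tested for applicability and performed in time polynomial in the current formula size; (b) the total number of rule applications before reaching $R(\mathcal{F})$ is polynomially bounded in $L(\mathcal{F})$. Together these yield the claimed polynomial total time, once the formula size is shown to stay $O(L)$ throughout (a by-product of (b)).

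Part (a) is routine bookkeeping. Each rule's precondition is a constant-size combinatorial pattern on the clause--literal incidence structure---duplicates in a clause, a subsumption pair, a tautology, a $(1^+,0)$-literal, a variable whose Davis--Putnam resolution does not grow any degree, a short clause with the specified neighbor configuration, or a pair of clauses sharing a common part $C$ of size $\geq 2$---which can be located and rewritten in $\mathrm{poly}(L)$ time by straightforward implementation.

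For part (b), I would use the lexicographic potential $\Phi(\mathcal{F})=(\mu(\mathcal{F}),n(\mathcal{F}),L(\mathcal{F}))$ and verify that $\Phi$ strictly decreases with every rule application. Lemma~\ref{reduction-measure} already gives that $\mu$ is non-increasing under all ten rules. Refining this, R-Rule~\ref{rule-back-resol} strictly decreases $\mu$ by at least $2-w_3>0$ (as computed in the proof of Lemma~\ref{reduction-measure}), so $\Phi$ drops even though this rule can raise both $n$ and $L$ by $1$. R-Rules~\ref{rule-pure}, \ref{rule-trivial-resol}, and \ref{rule-link} each eliminate at least one variable, so $n$ strictly decreases in the low-degree corner cases where $\mu$ happens to be flat (for example, R-Rule~\ref{rule-link} with $i=j=3$ has $w_4=2w_3$, leaving $\mu$ unchanged while still erasing $z_1$). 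The remaining rules (R-Rules~\ref{rule-dup}, \ref{rule-subsump}, \ref{rule-tauto}, \ref{rule-2cls-1neg}, \ref{rule-1liter-neg}, \ref{rule-2cls-2neg}) only delete literal occurrences without introducing any new variables or clauses, so $L$ strictly decreases whenever $\mu$ and $n$ stay put.

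Counting the three categories gives the polynomial bound. The number of $\mu$-strictly-decreasing applications is at most $\mu(\mathcal{F})/(2-w_3)=O(L)$, using $\mu\leq L$ by~(\ref{measure-leq-L}); each such application raises $n$ and $L$ by at most $1$, so both stay $O(L)$ throughout the reduction. The number of applications where $\mu$ is flat but $n$ strictly drops is then $O(L)$, bounded by the initial $n$ plus its total growth, and likewise the number of applications where only $L$ drops is $O(L)$. In total, $O(L^2)$ rule applications, each taking $\mathrm{poly}(L)$ time by~(a), which yields polynomial running time. The main obstacle is simply selecting this compound potential: no single one of $\mu$, $n$, or $L$ is monotone under all ten rules (R-Rule~\ref{rule-back-resol} can increase both $n$ and $L$, and R-Rule~\ref{rule-link} in the $i=j=3$ case keeps $\mu$ flat), so the lexicographic combination is essential.
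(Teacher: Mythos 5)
Your argument is correct in substance, but it takes a different route from the paper's proof. The paper constructs a single scalar potential $M(\mathcal{F}) = L(\mathcal{F}) + 2\mu(\mathcal{F})/\epsilon$ with $\epsilon = 2-w_3$ and checks that $M$ falls by at least $1$ under every rule: R-Rules~1--8 remove at least one literal, R-Rule~\ref{rule-link} removes at least two, and $\mu$ does not increase in any of those cases; R-Rule~\ref{rule-back-resol} may raise $L$ by $1$ but drops $\mu$ by at least $\epsilon$, so $M$ still drops by at least $-1+2=1$. Since $M$ is initially $O(L)$, this gives $O(L)$ applications in one stroke. Your lexicographic/budget scheme reaches the same conclusion by bounding the number of R-Rule-\ref{rule-back-resol} applications via the $\mu$-budget, observing that those are the only applications that can increase $n$ or $L$ (by at most one each), and then charging the remaining rules against the $n$- and $L$-budgets. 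The paper's linear combination is the more compact packaging of the same underlying observation; yours is more bookkeeping but spells out explicitly why R-Rule~\ref{rule-back-resol} is the only delicate case.

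Two small imprecisions. The sentence ``the number of $\mu$-strictly-decreasing applications is at most $\mu(\mathcal{F})/(2-w_3)$'' overreaches: $2-w_3$ is the lower bound established for R-Rule~\ref{rule-back-resol} specifically, not for every application that happens to strictly decrease $\mu$ (for instance, R-Rule~\ref{rule-link} applied to variables of degrees $3$ and $4$ drops $\mu$ by $3w_3-5$, which need not dominate $2-w_3$ over the admissible range $5/3\le w_3<2$). What your charging scheme actually needs, and what is correct, is the count of R-Rule-\ref{rule-back-resol} applications, since that is the only rule whose $n$- and $L$-increases must be amortized; read that way the argument is sound. Also, the middle coordinate $n$ in $\Phi$ is dispensable: R-Rules~\ref{rule-pure}, \ref{rule-trivial-resol}, and \ref{rule-link} each strictly decrease $L$ as well (R-Rule~\ref{rule-trivial-resol} eliminates a variable of degree at least $2$ while no other degree grows; R-Rule~\ref{rule-link} deletes the tautological $2$-clause), so the pair $(\mu,L)$ already works as a lexicographic potential, and the total application count is in fact $O(L)$ rather than your $O(L^2)$ — though either is, of course, polynomial and suffices for the lemma.
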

\begin{proof}
    Each application of any one of the first eight reduction rules removes some literal from the formula, and then it decreases $L(\mathcal{F})$ at least by 1.
    For R-Rule \ref{rule-link}, after replacing $z_1$ with $\overline{z_2}$, the clause $z_1z_2$ will be removed by R-Rule \ref{rule-tauto}. So each application of it decreases $L(\mathcal{F})$ by at least 2.

    It is easy to see that each application of R-Rule \ref{rule-back-resol} increases $L(\mathcal{F})$ by at most 1.
    In the proof of Lemma \ref{reduction-measure}, we have shown that the measure $\mu(\mathcal{F})$ decreases at least by $2-w_3$ in this step. Since $w_3$ is a constant less than $2$, let $w_3=2-\epsilon$ for a constant $\epsilon$, then $\mu(\mathcal{F})$ decreases by at least $\epsilon$ after applying R-Rule \ref{rule-back-resol}.

    In order to make the proof clear, we define a new measure $M(\mathcal{F})=L(\mathcal{F})+2\mu(\mathcal{F})/\epsilon$.
    It is easy to see that  $M(\mathcal{F})$ is bounded by a polynomial of $L(\mathcal{F})$.
    For R-Rule 1-9, it decreases $M(\mathcal{F})$ by at least 1. For R-Rule \ref{rule-back-resol}, it increases $M(\mathcal{F})$ by at most 1. Assume the resulting CNF-formula after applying R-Rule \ref{rule-back-resol} is $\mathcal{F}'$. It holds that $L(\mathcal{F})-L(\mathcal{F}')\geq -1$ and $\mu(\mathcal{F})-\mu(\mathcal{F}')\geq \epsilon$. Sod we have
    $$M(\mathcal{F})-M(\mathcal{F}')=L(\mathcal{F})-L(\mathcal{F}')+2(\mu(\mathcal{F})-\mu(\mathcal{F}'))/\epsilon \geq -1+2=1.$$
    Since each reduction rule decreases $M(\mathcal{F})$ at least by 1, it must stop in polynomial time if we iteratively apply the reduction rules.
\end{proof}

\subsection{Step~3}\label{step-3}
\textbf{Step 3.} If the degree of $\mathcal{F}$ is at least $6$, select a variable $x$ with the maximum degree and return SAT($\mathcal{F}_{x=1}$)$\vee$SAT($\mathcal{F}_{x=0}$).

In this step, we branch on a variable $x$ of degree at least 6. The two sub-branches are:  $S_1=\{x\}$;  $S_2=\{\overline{x}\}$.
We have the following result:
\begin{lemma}
    The branching vector generated by Step~3 is covered by
    \begin{equation}\label{bvec-step3}
        [w_6+\delta_6, w_6+11\delta_6].
    \end{equation}
\end{lemma}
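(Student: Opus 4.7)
The plan is to establish (a) a lower bound on $\min(\Delta_{S_1},\Delta_{S_2})$ of at least $w_6+\delta_6$, and (b) a lower bound on $\Delta_{S_1}+\Delta_{S_2}$ of at least $2w_6+12\delta_6$; then Lemma~\ref{lemma-bvec}(ii) immediately yields the claimed covering by $[w_6+\delta_6,\,w_6+11\delta_6]$, since $(2w_6+12\delta_6)-(w_6+\delta_6)=w_6+11\delta_6$.

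For the minimum bound, I first note that because $\mathcal{F}$ is reduced and hence has no pure literal by R-Rule~\ref{rule-pure}, the chosen $d$-variable $x$ is a $(j,d-j)$-literal with $1\le j\le d-1$. Applying Lemma~\ref{coro-assign2} in each sub-branch then gives $\xi_{S_1}^{(1)}\ge w_d+j\delta_d\ge w_d+\delta_d$ and, using $\overline{x}$, $\xi_{S_2}^{(1)}\ge w_d+(d-j)\delta_d\ge w_d+\delta_d$. Since $\Delta_S\ge \xi_S^{(1)}$, and the weight profile in (\ref{weight1}) gives $w_d=d\ge 6=w_6$ together with $\delta_d=1=\delta_6$ for every $d\ge 6$, this yields $\min(\Delta_{S_1},\Delta_{S_2})\ge w_d+\delta_d\ge w_6+\delta_6$.

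For the sum bound I would invoke Lemma~\ref{dec-sum}, which gives
\[
\Delta_{S_1}^{*}+\Delta_{S_2}^{*}\;\ge\;2w_d+2d\delta_d+(n_3'(x)+n_3'(\overline{x}))(2w_3-2\delta_d)+\sum_{4\le i\le d}(n_i'(x)+n_i'(\overline{x}))(w_i-2\delta_d).
\]
The only nontrivial point is to check that the last two sums are nonnegative under our weight assumptions when $d\ge 6$ (so $\delta_d=1$). Since $\delta_3\ge 1$ by (\ref{weight2}) and $w_3=\delta_3$ by (\ref{eq-w3-d3-d4}), we have $w_3\ge 1=\delta_d$, hence $2w_3-2\delta_d\ge 0$; moreover $w_4=2w_3\ge 2=2\delta_d$, and $w_i=i\ge 5>2\delta_d$ for $i\ge 5$. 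Dropping these nonnegative corrections and using $\Delta_S\ge \Delta_S^{*}$ gives $\Delta_{S_1}+\Delta_{S_2}\ge 2w_d+2d\delta_d$, which for $d\ge 6$ is at least $4d\ge 24=2w_6+12\delta_6$.

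Combining the two bounds via Lemma~\ref{lemma-bvec}(ii) with $a=w_6+\delta_6$ and $b=2w_6+12\delta_6$ finishes the argument. The main obstacle is the nonnegativity verification for the correction terms in Lemma~\ref{dec-sum}; once that bookkeeping is done, specializing $w_d=d$ and $\delta_d=1$ for $d\ge 6$ makes the rest immediate arithmetic.
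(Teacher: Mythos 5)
Your proof is correct and follows essentially the same route as the paper: Lemma~\ref{coro-assign2} with $j\geq 1$ (and symmetrically for $\overline{x}$) gives the per-branch bound $w_6+\delta_6$, Lemma~\ref{dec-sum} with the nonnegativity of the $2$-clause correction terms (since $w_3\geq\delta_d$ and $w_i\geq 2\delta_d$ for $i\geq 4$ when $d\geq 6$) gives the sum bound $2w_6+12\delta_6$, and Lemma~\ref{lemma-bvec}(ii) combines them. The only cosmetic difference is that you pass through the intermediate quantity $4d$ rather than directly comparing $2w_d+2d\delta_d$ with $2w_6+12\delta_6$, which amounts to the same arithmetic.
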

\begin{proof}
    Since R-Rule~\ref{rule-pure} is not applicable, both $x$ and $\overline{x}$ are $(1^+,1^+)$-literals in $\mathcal{F}$.
    By Lemma~\ref{coro-assign2} and $\delta_d=\delta_6$ for $d\geq 6$, we get that
    \begin{equation*}
        \Delta_{S_1}\geq \xi_{S_1}^{(1)} \geq w_d + j\delta_d \geq w_6 + \delta_6
    \end{equation*}
    since $x$ is a $(j, d-j)$-literal with $j\geq 1$. Similarly, we have $\Delta_{S_2}\geq w_6+\delta_6$.

    By Lemma~\ref{dec-sum}, $w_3>\delta_d$ and $w_i>2\delta_d$ for $4\leq i\leq d$, we have that
    \begin{align*}
        \Delta_{S_1}+\Delta_{S_2}
        \geq \Delta_{S_1}^*+\Delta_{S_2}^*
         & \geq 2w_d + 2d\delta_d + (n'_3(x)+n'_3(\overline{x})) (2w_3-2\delta_d)  \\
         & \quad +\sum_{4\leq i\leq d}{(n'_i(x)+n'_i(\overline{x}))(w_i-2\delta_d)} \\
         & \geq 2w_6 + 12\delta_d.
    \end{align*}
    Since $\delta_d=\delta_6$ for $d\geq 6$, we obtain
    \[
        \Delta_{S_1}+\Delta_{S_2} \geq 2w_6 + 12\delta_6.
    \]

    As $\min(\Delta_{S_1},\Delta_{S_2})\geq w_6+\delta_6$ and $\Delta_{S_1}+\Delta_{S_2}\geq 2w_6+12\delta_6$, by Lemma~\ref{lemma-bvec}, we know that the branching vector generated by this step is covered by
    \begin{equation*}
        [w_6+\delta_6, w_6+11\delta_6].
    \end{equation*}
\end{proof}

\subsection{Step~4}\label{step-4}
\textbf{Step 4.} If there is a $(1,4)$-literal $x$ (assume that $xC$ is the unique clause containing $x$), return SAT($\mathcal{F}_{x=1\And C=0}$)$\vee$SAT($\mathcal{F}_{x=0}$).

After Step~3, the degree of $\mathcal{F}$ is at most $5$.
In this step, the algorithm will branch on a $(1,4)$-literal $x$. The two sub-branches are: $S_1=\{x\}\cup \overline{C}$; $S_2=\{\overline{x}\}$.
We have the following result:
\begin{lemma}
    The branching vector generated by Step~4 is covered by
    \begin{equation}\label{bvec-step4}
        [w_5+2w_3, w_5+w_3+7\delta_5].
    \end{equation}
\end{lemma}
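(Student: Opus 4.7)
The plan is to invoke Lemma~\ref{dec-1ltr} with $d=5$ and then glue its two conclusions together using Lemma~\ref{lemma-bvec}(ii). Here the branching sets are precisely $S_1 = \{x\} \cup \overline{C}$ and $S_2 = \{\overline{x}\}$, which matches the hypothesis of Lemma~\ref{dec-1ltr}, and after Step~3 the degree of $\mathcal{F}$ is at most $5$, so plugging $d=5$ yields both the sum bound $\Delta_{S_1}+\Delta_{S_2} \geq 2w_5 + 3w_3 + 7\delta_5$ and the min bound $\min(\Delta_{S_1},\Delta_{S_2}) \geq w_5 + \min(2w_3, 4\delta_5)$.

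The only non-routine step is determining which of the two quantities $2w_3$ and $4\delta_5$ is the smaller one so that the min can be simplified. For this I would invoke inequality (\ref{ineq-2d5>w3}), which states $2\delta_5 > w_3$; doubling gives $4\delta_5 > 2w_3$, so $\min(2w_3, 4\delta_5) = 2w_3$, and therefore $\min(\Delta_{S_1},\Delta_{S_2}) \geq w_5 + 2w_3$.

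Finally I would apply Lemma~\ref{lemma-bvec}(ii) with $a = w_5 + 2w_3$ and $b = 2w_5 + 3w_3 + 7\delta_5$, which produces the branching vector
\[
    [a,\, b-a] \;=\; [w_5 + 2w_3,\, w_5 + w_3 + 7\delta_5],
\]
exactly the expression in (\ref{bvec-step4}). There is no real obstacle: all structural facts used (that $|C|\geq 2$, that $\overline{x}$'s $2$-clause neighbours avoid $var(C)$, and the handling of the two cases depending on whether $\overline{x}$ lies in a $2$-clause) have already been absorbed into Lemma~\ref{dec-1ltr}, so the present proof is essentially a direct specialization plus the one weight comparison $w_3 < 2\delta_5$.
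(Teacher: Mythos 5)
Your proof is correct and follows essentially the same route as the paper: specialize Lemma~\ref{dec-1ltr} to $d=5$, use inequality~(\ref{ineq-2d5>w3}) to resolve the $\min(2w_3,4\delta_5)$ term to $2w_3$, and combine via Lemma~\ref{lemma-bvec}(ii). No gaps.
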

\begin{proof}
    By Lemma~\ref{dec-1ltr}, we get
    \begin{align*}
        \Delta_{S_1} + \Delta_{S_2} \geq 2w_5+3w_3+7\delta_5 \text{~~and~~}
        \min(\Delta_{S_1}, \Delta_{S_2})\geq w_5+2w_3.
    \end{align*}

    By (\ref{ineq-2d5>w3}), we have $2w_3\leq 4\delta_5$. Thus $\min(\Delta_{S_1}, \Delta_{S_2}) \geq w_5+2w_3$. By Lemma~\ref{lemma-bvec}, we know that the branching vector of this step is covered by
    \begin{equation*}
        [w_5+2w_3, w_5+w_3+7\delta_5].
    \end{equation*}
\end{proof}

\subsection{Step~5}\label{step-5}
\textbf{Step 5.} If there is a $5$-literal $x$ such that at least two $2$-clauses contain $x$ or $\overline{x}$, return SAT($\mathcal{F}_{x=1}$)$\vee$SAT($\mathcal{F}_{x=0}$).

Note that after Step~4, $x$ is either a $(2, 3)$-literal or $(3, 2)$-literal.
In this step, the two sub-branches are: $S_1=\{x\}$; $S_2=\{\overline{x}\}$.
We have the following result:
\begin{lemma}
    The branching vector generated by Step~5 is covered by
    \begin{equation}\label{bvec-step5}
        [w_5+2\delta_5, w_5+4w_3+4\delta_5].
    \end{equation}
\end{lemma}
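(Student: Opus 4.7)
\medskip
\noindent\textbf{Proof proposal.} The plan is to apply Lemma~\ref{lemma-bvec}(ii) with the two lower bounds
\[
\min(\Delta_{S_1},\Delta_{S_2})\geq w_5+2\delta_5 \quad\text{and}\quad \Delta_{S_1}+\Delta_{S_2}\geq 2w_5+4w_3+6\delta_5,
\]
which together yield the covering vector $[w_5+2\delta_5,\, w_5+4w_3+4\delta_5]$.

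First I would establish the per-branch bound. Since after Step~4 the literal $x$ is a $(2,3)$- or $(3,2)$-literal, both $x$ and $\overline{x}$ appear at least twice in $\mathcal{F}$, so Lemma~\ref{coro-assign2} with $d=5$ and $j\geq 2$ gives $\Delta_{S_1}\geq \xi_{S_1}^{(1)}\geq w_5+2\delta_5$ and symmetrically $\Delta_{S_2}\geq w_5+2\delta_5$.

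Next I would use Lemma~\ref{dec-sum} with $d=5$ on the $5$-variable $x$, obtaining
\begin{align*}
\Delta_{S_1}^*+\Delta_{S_2}^* \geq 2w_5 + 10\delta_5 &+ (n'_3(x)+n'_3(\overline{x}))(2w_3-2\delta_5)\\
&+ \sum_{4\leq i\leq 5}(n'_i(x)+n'_i(\overline{x}))(w_i-2\delta_5).
\end{align*}
The key observation is that the sum $\sum_{3\leq i\leq 5}(n'_i(x)+n'_i(\overline{x}))$ equals exactly the number of $2$-clauses containing $x$ or $\overline{x}$, which by the hypothesis of Step~5 is at least $2$. Each of the three coefficients above is nonnegative and minimized at $2w_3-2\delta_5$: indeed $w_3\geq \delta_5$ by (\ref{weight2}) gives $2w_3-2\delta_5\geq 0$; the coefficient for $i=4$ equals $w_4-2\delta_5=2w_3-2\delta_5$ by (\ref{weight1}); and for $i=5$ we have $w_5-2\delta_5\geq w_4-2\delta_5$ because $w_i$ is nondecreasing. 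So the entire correction term is at least $2(2w_3-2\delta_5)=4w_3-4\delta_5$, giving
\[
\Delta_{S_1}+\Delta_{S_2}\geq \Delta_{S_1}^*+\Delta_{S_2}^*\geq 2w_5+4w_3+6\delta_5.
\]

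Finally, Lemma~\ref{lemma-bvec}(ii) with $a=w_5+2\delta_5$ and $b=2w_5+4w_3+6\delta_5$ yields $b-a=w_5+4w_3+4\delta_5$, so the branching vector is covered by $[w_5+2\delta_5,\, w_5+4w_3+4\delta_5]$, as required. There is no serious obstacle here; the only small point to verify carefully is that the minimum over $i\in\{3,4,5\}$ of the coefficients $\{2w_3-2\delta_5,\, w_4-2\delta_5,\, w_5-2\delta_5\}$ is attained and nonnegative under the weight constraints (\ref{weight1})--(\ref{weight2}), which is immediate.
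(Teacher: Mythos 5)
Your proposal is correct and follows essentially the same route as the paper's own proof: Lemma~\ref{coro-assign2} with $j\geq 2$ for the per-branch bound, Lemma~\ref{dec-sum} for the sum, the observation that the number of $2$-clauses containing $x$ or $\overline{x}$ is at least $2$ (which lower bounds $\sum_{3\leq i\leq 5}(n'_i(x)+n'_i(\overline{x}))$), and the replacement of each coefficient by the common minimum $2w_3-2\delta_5$. The only minor difference is that you explicitly verify $2w_3-2\delta_5\geq 0$ from (\ref{weight2}) — a step the paper leaves implicit but which is indeed needed for the inequality $c^*\cdot\sum(n'_i(x)+n'_i(\overline{x}))\geq 2c^*$ to go in the right direction.
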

\begin{proof}
    Since $x$ is a $(j,5-j)$-literal with $2\leq j\leq 3$, by Lemma \ref{coro-assign2}, we have
    \begin{equation*}
        \Delta_{S_1}\geq \xi_{S_1}^{(1)} \geq w_5+2\delta_5.
    \end{equation*}
    Similarly, we can get $\Delta_{S_2}\geq w_5+2\delta_5$.

    By Lemma \ref{dec-sum}, $w_5\geq w_4\geq 2w_3$, and $\sum_{3\leq i\leq 5}n'_i(x)+n'_i(\overline{x})\geq 2$, we have
    \begin{align*}
        \Delta_{S_1}+\Delta_{S_2}
        \geq \Delta_{S_1}^{*}+\Delta_{S_2}^{*}
        & \geq 2w_5 + 2\cdot 5\delta_5 + (n'_3(x)+n'_3(\overline{x}))(2w_3-2\delta_5)                        \\
        & \quad + \sum_{4\leq i\leq 5}{(n'_i(x)+n'_i(\overline{x}))(w_i-2\delta_5)}.
    \end{align*}
    Note that $w_5\geq w_4=2w_3$ by (\ref{weight1}). We can get $w_i-2\delta_5\geq 2w_3-2\delta_5$ for $3\leq i\leq 4$. So we have
    \begin{align*}
        \Delta_{S_1}+\Delta_{S_2}\geq 2w_5 + 10\delta_5 + \sum_{3\leq i\leq 5}{(n'_i(x)+n'_i(\overline{x}))(2w_3-2\delta_5)}.
    \end{align*}
    Since there are at least two $2$-clauses containing literal $x$ or $\overline{x}$, it holds that $\sum_{3\leq i\leq d}(n'_i(x)+n'_i(\overline{x}))\geq 2$.
    We further obtain
    \begin{align*}
        \Delta_{S_1}+\Delta_{S_2}
        & \geq 2w_5 + 10\delta_5 + \sum_{3\leq i\leq 5}{(n'_i(x)+n'_i(\overline{x}))(2w_3-2\delta_5)} \\
        & =    2w_5 + 10\delta_5 + (\sum_{3\leq i\leq 5}{(n'_i(x)+n'_i(\overline{x}))})(2w_3-2\delta_5)  \\
        & \geq 2w_5 + 10\delta_5 + 2(2w_3-2\delta_5)                                                  \\
        & =    2w_5 + 4w_3 + 6\delta_5
    \end{align*}

    Since $\min(\Delta_{S_1}, \Delta_{S_2})\geq w_5+2\delta_5$, by Lemma~\ref{lemma-bvec}, the branching vector of this step is covered by
    \begin{equation*}
        [w_5+2\delta_5, w_5+4w_3+4\delta_5].
    \end{equation*}
\end{proof}

\subsection{Step~6}\label{step-6}

\textbf{Step 6.} If there are two $5$-literals $x$ and $y$ contained in one $2$-clause $xy$, return SAT($\mathcal{F}_{x=1}$)$\vee$SAT($\mathcal{F}_{x=0}$).

After Step~5, a $5$-variable can be contained in at most one $2$-clause.
In this step, if there is a $2$-clause $xy$ containing two $5$-variables, we pick one of the 5-variables, say $x$, and branch on it. The two sub-branches are: $S_1=\{x\}$; $S_2=\{\overline{x}\}$.

This case will not be the bottleneck case in our algorithm. We will show that after branching some bottleneck cases, we can always get this step. This implies we can use the shift technique here. We will save a shift $\sigma > 0$ from the branching vector of this step that will be included in some bad branching vectors. The value of $\sigma$ will be decided later. We have the following result for this step.
\begin{lemma}
    The branching vector generated by Step~6 is covered by
    \begin{equation} \label{bvec-step6}
        [w_5+3\delta_5-\sigma, 2w_5+2w_3+3\delta_5-\sigma]
    \end{equation}
\end{lemma}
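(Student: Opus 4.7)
The plan is to bound $\Delta_{S_1}$ and $\Delta_{S_2}$ separately via Lemmas~\ref{dec-assign2} and~\ref{dec-reduce} together with the structural Lemmas~\ref{prop2} and~\ref{prop4}, combine them through Lemma~\ref{lemma-bvec}(i) to obtain the raw branching vector $[w_5+3\delta_5,\, 2w_5+2w_3+3\delta_5]$, and then apply the shift $\sigma$ to reach the stated bound. Without loss of generality $x$ is a $(2,3)$-literal (the $(3,2)$ case being symmetric). By Step~5, $xy$ is the unique 2-clause containing $x$ or $\overline{x}$; the remaining clause of $x$ has the form $xC$ with $|C|\geq 2$, and the three clauses of $\overline{x}$ are all $3^+$-clauses. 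Since Lemma~\ref{prop4}(i) guarantees that the literals in $N(x,\mathcal{F})$ are pairwise distinct, $x$ has at least $1+2=3$ neighbors, so Lemma~\ref{dec-assign2} gives $\Delta_{S_1}\geq w_5+3\delta_5$.

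The nontrivial bound is $\Delta_{S_2}\geq 2w_5+2w_3+3\delta_5$. Assigning $\overline{x}=1$ removes $x$ for $w_5$ and reduces the six distinct neighbor literals in $\overline{x}$'s three $3^+$-clauses (distinctness by Lemma~\ref{prop4}(i)). The clause $xy$ becomes the unit clause $\{y\}$, and R-Rule~\ref{rule-pure} then forces $y=1$. Applying Lemma~\ref{prop4}(i) to any potential clause $\overline{x}\overline{y}C'$ bounds the appearance of $\overline{y}$ in $\overline{x}$'s clauses to at most one, so $y$ has degree at least $4$ in $\mathcal{F}_{S_2=1}$ and the full weight $w_5$ of $y$ is reclaimed across $\xi_{S_2}^{(1)}$ and $\xi_{S_2}^{(2)}$. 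The crucial additional gain comes from R-Rule~\ref{rule-link}: if literal $y$ appeared only in $xy$, R-Rule~\ref{rule-link} with $z_1=y$ and $z_2=x$ would be applicable, contradicting that $\mathcal{F}$ is reduced. Hence literal $y$ occurs in at least one further clause, which by Step~5 is a $3^+$-clause and by Lemma~\ref{prop2} contains neither $x$ nor $\overline{x}$, so it persists into $\mathcal{F}_{S_2=1}$ and is eliminated when $y=1$ is forced, decreasing the degrees of its at least two other literals. Summing the contributions ($w_5$ for $x$, $w_5$ for $y$, plus the neighbor degree losses from $\overline{x}$'s three clauses and the extra $y$-clause) yields $\Delta_{S_2}\geq 2w_5+2w_3+3\delta_5$, where the required inequalities follow from the conditions $w_3\geq\delta_5$ and $\delta_5\geq 1$ in~(\ref{weight2}).

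By Lemma~\ref{lemma-bvec}(i), the raw branching vector is covered by $[w_5+3\delta_5,\, 2w_5+2w_3+3\delta_5]$. Since Step~6 is not a bottleneck (later analyses will show that it is reachable after the bottleneck branches of subsequent steps), the shift technique applies: we transfer an amount $\sigma$ from each component of this vector to bolster the bottleneck vectors, giving the claimed bound $[w_5+3\delta_5-\sigma,\, 2w_5+2w_3+3\delta_5-\sigma]$. The main technical obstacle is the tight accounting of $\Delta_{S_2}$ when the additional clause introduced by R-Rule~\ref{rule-link} shares literals with $\overline{x}$'s neighborhood; without invoking R-Rule~\ref{rule-link}, a direct application of Lemma~\ref{dec-reduce} would yield only $\xi_{S_2}^{(2)}\geq w_4$, leaving the sum short of the target.
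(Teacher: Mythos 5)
Your high-level plan mirrors the paper's: bound $\Delta_{S_1}$ via Lemma~\ref{dec-assign2}, get $\Delta_{S_2}\geq 2w_5+2w_3+3\delta_5$ by exploiting the unit clause $\{y\}$ (forcing $y=1$) plus the additional $y$-clause, combine with Lemma~\ref{lemma-bvec}, and subtract the shift. The $\Delta_{S_1}\geq w_5+3\delta_5$ bound is fine. However, there is a genuine gap in the justification of the $\Delta_{S_2}$ bound: you list the ingredients ($w_5$ for $x$, $w_5$ for $y$, neighbor degree losses from $\overline{x}$'s three clauses and from the extra $y$-clause) and assert that ``summing the contributions yields $\Delta_{S_2}\geq 2w_5+2w_3+3\delta_5$'', but you never do the sum. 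The sum is not the naive count of $8$ neighbor slots each worth at least $\delta_5$: if one of the two extra literals in $y$'s second clause is a $3$-variable $z$ that also appears in one of $\overline{x}$'s clauses, then $z$ loses two instances but contributes only $w_3$ to the measure drop (the second instance contributes $\delta_2=0$), and $w_3<2\delta_5$, so two ``slots'' collapse to one $w_3$. Further slots may already be consumed by $\overline{y}$ sitting in one of $\overline{x}$'s clauses. These overlaps between $N(\overline{x},\mathcal{F})$ and $N(y,\mathcal{F}_{S_2=1})$ are exactly what the paper's proof controls with the auxiliary sets $P$ and $Q$ and the inequality $n_3(\overline{x})+n_4(\overline{x})\geq|P|-|Q|$; that machinery is what converts the heuristic bookkeeping into the clean bound $\Delta_{S_2}\geq 2w_5+(9-2j)\delta_5+|P|(w_3-\delta_5)+|Q|(2\delta_5-w_3)$, from which $|P|\geq 2$ gives the target. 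You flag this as ``the main technical obstacle'' in your last sentence, but the proof text does not actually overcome it, so the argument is incomplete.

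Two smaller remarks. First, the ``$(3,2)$ case being symmetric'' is not quite right: the roles of $x$ and $\overline{x}$ are not interchangeable because $y$ is attached to the literal $x$ specifically. The paper instead computes $\Delta_{S_1}\geq w_5+(2j-1)\delta_5$ and $\Delta_{S_2}\geq 2w_5+2w_3+(7-2j)\delta_5$ for $j\in\{2,3\}$ and then observes that the $j=2$ vector covers the $j=3$ vector (the sums are equal and $j=3$ is more balanced). Second, your R-Rule~\ref{rule-link} argument for the existence of a second $y$-clause is a correct alternative to the paper's, which simply observes $y$ must be a $(2,3)/(3,2)$-literal after Step~5; but note R-Rule~\ref{rule-link} does not ``introduce'' that clause, it merely rules out its absence.
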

\begin{proof}
Assume that $x$ is a ($j, 5-j$)-literal, where $j=2$ or 3. Since $x$ is contained in only one 2-clause and there is no 1-clause now, we have that
     $\sum_{i\geq 3}n_i(x)\geq 2j-1$.
    By Lemma~\ref{dec-assign2}, we get
    \begin{equation*}
        \Delta_{S_1}\geq \xi_{S_1}^{(1)} \geq w_5 + (\sum_{3\leq i\leq 5}n_i(x))\delta_5\geq w_5 + (2j-1)\delta_5.
    \end{equation*}
    Next, we analyze $\Delta_{S_2}$.
    By Lemma~\ref{dec-assign2}, we first get
    \begin{equation*}
    \xi_{S_2}^{(1)} \geq w_5 + \sum_{3\leq i\leq 5}n_i(\overline{x})\delta_i.
    \end{equation*}
    We look at $\mathcal{F}_{S_2=1}$, which is the formula after assigning 1 to $\overline{x}$ in $\mathcal{F}$.
    By Lemma~\ref{prop2}, we know that in $\mathcal{F}_{S_2=1}$, $var(y)$ is a variable of degree at least 4 and there is a 1-clause $\{y\}$.
    Let $P = \{z: z\in N(y, \mathcal{F}) \text{~and~} z\cap\{x, \overline{x}\}=\emptyset\}$ and $Q = \{z: z\in N(y, \mathcal{F}_{S_2=1}) \text{~and~} deg(z)\geq 3\}$.
    By applying R-Rule~\ref{rule-pure}, we will assign 1 to $y$ and remove the literals in $N(y, \mathcal{F}_{S_2=1})$, which will further reduce the measure.
    Thus, we have $\xi_{S_2}^{(2)}\geq w_4+|Q|\delta_5$ since $\delta_5\leq \delta_4\leq \delta_3$.
    We get that
    \begin{equation*}
        \Delta_{S_2}
        \geq \xi_{S_2}^{(1)} + \xi_{S_2}^{(2)}
        \geq w_5 +\sum_{3\leq i\leq 5}n_i(\overline{x})\delta_i +  w_4+|Q|\delta_5.
    \end{equation*}
    If there is a literal $z$ such that $z\in P$ and $z\notin Q$, then $z$ must be a neighbor of $\overline{x}$ in $\mathcal{F}$ with a degree of at most $4$ by Lemma~\ref{prop2}.
    In other words, it holds that $n_3(\overline{x})+n_4(\overline{x})\geq |P|-|Q|$.
    Since $\overline{x}$ is a ($5-j, j$)-literal not contained any 2-clause or 1-clause, we have that $\sum_{3\leq i\leq 5}n_i(\overline{x})\geq 2(5-j)=10-2j$, which implies $n_5(\overline{x})\geq 10-2j-(n_3(\overline{x})+n_4(\overline{x}))$.
    With $n_3(\overline{x})+n_4(\overline{x})\geq |P|-|Q|$ and $\delta_3=\delta_4=w_3$, we further get
    \begin{align*}
        \Delta_{S_2}
        & \geq w_5 +\sum_{3\leq i\leq 5}n_i(\overline{x})\delta_i +  w_4+|Q|\delta_5                \\
        & \geq 2w_5 + (n_3(\overline{x})+n_4(\overline{x}))w_3 + (|Q|-1+n_5(\overline{x}))\delta_5  \\
        & \geq 2w_5 + (n_3(\overline{x})+n_4(\overline{x}))w_3 + (|Q|-1+10-2j-(n_3(\overline{x})+n_4(\overline{x})))\delta_5 \\
        & =    2w_5 + (n_3(\overline{x})+n_4(\overline{x}))(w_3-\delta_5) + (|Q|+9-2j)\delta_5 \\
        & \geq 2w_5 + (|P|-|Q|)(w_3-\delta_5) + (|Q|+9-2j)\delta_5 \\
        & =    2w_5 + (9-2j)\delta_5 + |P|(w_3-\delta_5) + |Q|(2\delta_5-w_3).
    \end{align*}
    Note that in $\mathcal{F}$, literal $y$ is also a (2,3)/(3,2)-literal contained in exactly one 2-clause $xy$ (since Step 5 has been applied).
    There is another clause containing $y$ and two different literals $z_1$ and $z_2$, where $\{z_1,z_2\}\cap \{x,\overline{x}\}=\emptyset$ by Lemma~\ref{prop2}.
    So $|P|\geq 2$ holds. With $2\delta_5>w_3$, we get
    \begin{align*}
        \Delta_{S_2} \geq 2w_5+(9-2j)\delta_5+2(w_3-\delta_5)\geq 2w_5+2w_3+(7-2j)\delta_5.
    \end{align*}
    It is easy to see that the case of $j=2$ covers the case of $j=3$. For $j=2$, we get a branching vector
     \begin{equation*}[\Delta_{S_1}, \Delta_{S_2}]=[w_5+3\delta_5, 2w_5+2w_3+3\delta_5].
     \end{equation*}

    After saving a shift $\sigma$ from each branch, we get the following branching vector
    \begin{equation*}
        [w_5+3\delta_5-\sigma, 2w_5+2w_3+3\delta_5-\sigma].
    \end{equation*}

\end{proof}

\subsection{Step~7}\label{step-7}
\textbf{Step 7.} If there is a $5$-literal $x$ contained in a $2$-clause, return SAT$(\mathcal{F}_{x=1})\vee$ SAT$(\mathcal{F}_{x=0})$.

In this Step, if there is a $5$-literal $x$ contained in a $2$-clause $xy$, then $y$ must be a $4^-$-variable.
We branch on $x$. The two sub-branches are: $S_1=\{x\}$; $S_2=\{\overline{x}\}$.
We have the following result:
\begin{lemma}
    The branching vector generated by Step~7 is covered by
    \begin{equation}\label{bvec-step7}
        [w_5+w_3+2\delta_5, w_5+w_3+6\delta_5].
    \end{equation}
\end{lemma}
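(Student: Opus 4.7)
The plan is to apply Lemma~\ref{lemma-bvec}(ii) with $a=w_5+w_3+2\delta_5$ and $b=2w_5+2w_3+8\delta_5$, so that the covering vector becomes $[a,\,b-a]=[w_5+w_3+2\delta_5,\,w_5+w_3+6\delta_5]$. I avoid trying to match the larger component directly by lower-bounding $\Delta_{S_2}$ alone, since doing so would require chasing the cascade of R-Rule~\ref{rule-pure} applications triggered by the 1-clause $\{y\}$ appearing in $\mathcal{F}_{\overline{x}=1}$, with subcases depending on $\deg(y)$ and on whether $\overline{y}$ sits in an $\overline{x}\overline{y}$-clause. The combined min-plus-sum analysis bypasses all of this.

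Before the two bounds, I record the structure forced by the preceding steps: Step~3 bounds the degree by $5$, Step~4 rules out $(1,4)$-literals so $x$ is a $(j,5-j)$-literal with $j\in\{2,3\}$, Step~5 forbids $\overline{x}$ from appearing in any 2-clause and $x$ from appearing in a second 2-clause, and Step~6 together with Lemma~\ref{prop1} forces the partner $y$ of $x$ in the 2-clause $xy$ to be a $3$- or $4$-variable. In particular, all $j-1\geq 1$ other clauses containing $x$ and all $5-j\geq 2$ clauses containing $\overline{x}$ are $3^+$-clauses. Lemma~\ref{dec-assign2} applied to $x$ then gives $\Delta_{S_1}\geq w_5+w_3+2\delta_5$: the neighbor $y$ contributes $\delta_{\deg(y)}=w_3$ by (\ref{eq-w3-d3-d4}), and the $3^+$-clauses supply $2(j-1)\geq 2$ further neighbors each contributing at least $\delta_5$. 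Similarly, Lemma~\ref{dec-assign2} applied to $\overline{x}$ yields $\Delta_{S_2}\geq w_5+4\delta_5$ from its $\geq 2(5-j)\geq 4$ neighbors in $3^+$-clauses, and combining with $2\delta_5>w_3$ from (\ref{ineq-2d5>w3}) gives $\Delta_{S_2}\geq w_5+w_3+2\delta_5$. Hence $\min(\Delta_{S_1},\Delta_{S_2})\geq a$.

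For the sum I use Lemma~\ref{dec-sum} with $d=5$. The 2-clause $xy$ guarantees $n'_{\deg(y)}(x)\geq 1$, and the key observation is that $w_4=2w_3$ makes the coefficients of $n'_3(x)+n'_3(\overline{x})$ and $n'_4(x)+n'_4(\overline{x})$ in the lemma equal, namely $2w_3-2\delta_5$, so this single neighbor contributes at least $2w_3-2\delta_5$ regardless of whether $\deg(y)=3$ or $4$. The remaining terms are all nonnegative (using $w_3\geq\delta_5$ from (\ref{weight2}) and $w_i\geq 2\delta_5$ for $i\geq 4$), so $\Delta_{S_1}+\Delta_{S_2}\geq\Delta_{S_1}^*+\Delta_{S_2}^*\geq 2w_5+10\delta_5+(2w_3-2\delta_5)=2w_5+2w_3+8\delta_5=b$. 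Lemma~\ref{lemma-bvec}(ii) then yields the desired covering vector. The main obstacle is recognizing that the bound should be split as a minimum plus a sum rather than componentwise; once that is seen, the fact that $w_4=2w_3$ collapses the two subcases $\deg(y)\in\{3,4\}$ and the argument goes through mechanically.
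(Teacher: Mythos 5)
Your proof is correct and follows essentially the same route as the paper: lower-bound $\Delta_{S_1}$ and $\Delta_{S_2}$ via Lemma~\ref{dec-assign2} (using that $y$ is a $4^-$-variable so one neighbor of $x$ contributes $w_3$), lower-bound $\Delta_{S_1}+\Delta_{S_2}$ via Lemma~\ref{dec-sum} (using the single $2$-clause and $w_4=2w_3$ to absorb the $\deg(y)\in\{3,4\}$ cases), and combine with Lemma~\ref{lemma-bvec}(ii). One small remark: the paper's proof states the weaker intermediate bound $\min(\Delta_{S_1},\Delta_{S_2})\geq w_5+3\delta_5$ before concluding the vector $[w_5+w_3+2\delta_5,\,w_5+w_3+6\delta_5]$ (an apparent slip, since they had already established $\Delta_{S_1}\geq w_5+w_3+2\delta_5$); your derivation states the tighter $\min\geq w_5+w_3+2\delta_5$ directly, which is what the stated covering vector actually requires.
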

\begin{proof}

    Note that there is at most one $2$-clause containing $x$ or $\overline{x}$ after Step~5. All clauses containing $x$ are $3^+$-clauses except clause $xy$ and all clauses containing $\overline{x}$ are $3^+$-clauses.
    So it holds that
    \[
        \sum_{3\leq i\leq 5}n_i(x) \geq 3 \text{~and~} \sum_{3\leq i\leq 5}n_i(\overline{x}) \geq 4
    \]
    since both $x$ and $\overline{x}$ are $(2,3)/(3,2)$-literals.

    As $y$ is a $4^-$-variable, $n_3(x)+n_4(x)\geq 1$ holds. With Lemma~\ref{dec-assign2} and $w_3=\delta_3=\delta_4$, we have
    \begin{align*}
        \Delta_{S_1} \geq \xi_{S_1}^{(1)}
        & \geq w_5 + \sum_{3\leq i\leq 5}n_i(x)\delta_i\\
        &=    w_5 + (n_3(x)+n_4(x))w_3 + n_5(x)\delta_5\\
        &\geq w_5 + w_3 + 2\delta_5.
    \end{align*}
    For $\Delta_{S_2}$, by Lemma~\ref{dec-assign2} again and $\delta_3=\delta_4\geq \delta_5$, we get
    \begin{align*}
        \Delta_{S_2}\geq \xi_{S_2}^{(1)}
        \geq w_5 + \sum_{3\leq i\leq 5}n_i(\overline{x})\delta_i
        \geq w_5 + (\sum_{3\leq i\leq 5}n_i(\overline{x}))\delta_5
        \geq w_5 + 4\delta_5.
    \end{align*}
    By the condition of this step, we have
    \[
        \sum_{3\leq i\leq 4}n_i'(x)=1 \text{~and~} n_5'(x)+n_5'(\overline{x})=0.
    \]
    With Lemma~\ref{dec-sum} and $w_4=2w_3$, we have
    \begin{align*}
        \Delta_{S_1}+\Delta_{S_2}
        & \geq \Delta_{S_1}^*+\Delta_{S_2}^*                                         \\
        & \geq 2w_5 + 2\cdot 5\delta_5 + (n'_3(x)+n'_3(\overline{x}))(2w_3-2\delta_5)      \\
        & \quad + \sum_{4\leq i\leq 5}{(n'_i(x)+n'_i(\overline{x}))(w_i-2\delta_5)}  \\
        & \geq 2w_5 + 10\delta_5 + (2w_3-2\delta_5)                                  \\
        & = 2w_5 + 2w_3 + 8\delta_5.
    \end{align*}

    Since $\min(\Delta_{S_1}, \Delta_{S_2})\geq w_5 + 3\delta_5$ and $\Delta_{S_1} + \Delta_{S_2}\geq 2w_5 + 2w_3 + 8\delta_5$, by Lemma~\ref{lemma-bvec}, the branching vector of this step is covered by
    \begin{equation*}
        [w_5+w_3+2\delta_5, w_5+w_3+6\delta_5].
    \end{equation*}
\end{proof}

\begin{lemma}\label{prop-aft-s7}
    After Step~7, if we branch on a $5$-literal $x$ and the two sub-branches are $S_1=\{x\}$ and $S_2=\{\overline{x}\}$, then it holds that:
    \begin{align*}
        \sum_{3\leq i\leq 5}(n_i(x)+n_i(\overline{x}))\geq 10 \text{~and~}
        \min(\Delta_{S_1}, \Delta_{S_2})\geq w_5 + 4\delta_5.
    \end{align*}
\end{lemma}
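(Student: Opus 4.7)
The plan is to combine two structural observations from the earlier steps with the counting lemmas for the measure already established.

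First, I would argue that after Step~7 every $5$-literal lies only in $3^+$-clauses. Step~4 removes $(1,4)$-literals, so any remaining $5$-literal must be either a $(2,3)$- or $(3,2)$-literal. Steps~5-7 together ensure that no $2$-clause contains a $5$-literal: Step~7 explicitly branches whenever a $5$-literal lies in a $2$-clause, so once Step~7 is inapplicable, every clause containing $x$ or $\overline{x}$ has length at least $3$. I would also invoke Lemma~\ref{prop4}(\ref{prop4-1}) to conclude that the neighbor set $N(x,\mathcal{F})$ contains distinct literals (no literal can appear in two clauses together with $x$), and similarly for $\overline{x}$.

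Next, for the counting bound $\sum_{3\le i\le 5}(n_i(x)+n_i(\overline{x}))\geq 10$, I would simply count neighbor slots clause by clause. If $x$ is a $(j,5-j)$-literal, then there are $j$ clauses containing $x$ and $5-j$ clauses containing $\overline{x}$, each of length at least $3$, contributing at least $2$ distinct neighbors each. Hence
\[
\sum_{i}n_i(x)\geq 2j \quad\text{and}\quad \sum_{i}n_i(\overline{x})\geq 2(5-j),
\]
giving a total of at least $10$ regardless of $j\in\{2,3\}$. The distinctness from Lemma~\ref{prop4}(\ref{prop4-1}) is what ensures these counts of literals are legitimate.

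For the second bound, the same per-literal estimate yields $\sum_i n_i(x)\geq 2j\geq 4$ and $\sum_i n_i(\overline{x})\geq 2(5-j)\geq 4$ since $j\in\{2,3\}$ forces both $j\geq 2$ and $5-j\geq 2$. Applying Lemma~\ref{dec-assign2} (with $d=5$ and the fact that $\delta_i\geq \delta_5$ for $3\le i\le 5$) gives
\[
\Delta_{S_1}\geq \xi_{S_1}^{(1)}\geq w_5+\Bigl(\sum_{3\le i\le 5}n_i(x)\Bigr)\delta_5\geq w_5+4\delta_5,
\]
and symmetrically $\Delta_{S_2}\geq w_5+4\delta_5$. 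Taking the minimum finishes the proof. There is no real obstacle here; the statement is essentially a bookkeeping consequence of the fact that Steps~4 and 7 guarantee every clause incident to a $5$-literal has length at least $3$, together with the distinct-neighbor property of reduced formulas.
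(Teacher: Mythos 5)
Your proof is correct and follows essentially the same route as the paper's: observe that after Step~7 every clause incident to the $(2,3)/(3,2)$-literal $x$ is a $3^+$-clause, count at least two distinct neighbors per clause to obtain the bound of $10$ on the total and of $4$ on each side, and then feed these into Lemma~\ref{dec-assign2} with $\delta_i\geq\delta_5$. Your explicit appeal to Lemma~\ref{prop4}(\ref{prop4-1}) for distinctness of neighbors and your case split on $j\in\{2,3\}$ are just a slightly more spelled-out version of the same bookkeeping.
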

\begin{proof}
    Note that after Step~7, all clauses containing $x$ or $\overline{x}$ are $3^+$-clauses.
    So it holds that
    \[
        \sum_{3\leq i\leq 5}(n_i(x)+n_i(\overline{x}))\geq 5\cdot 2 = 10.
    \]
    We also have $\sum_{3\leq i\leq 5}n_i(x)\geq 4$ since $x$ is a $(2,3)/(3,2)$-literal.
    By Lemma~\ref{dec-assign2} and $\delta_3=\delta_4\geq \delta_5$, we have
    \begin{align*}
        \Delta_{S_1}\geq \xi_{S}^{(1)}
        \geq w_5 + \sum_{3\leq i\leq 5}n_i(x)\delta_i
        \geq w_5 + (\sum_{3\leq i\leq 5}n_i(x))\delta_5
        \geq w_5 + 4\delta_5.
    \end{align*}
    We can also get $\Delta_{S_2}\geq w_5 + 4\delta_5$ in a similar way.
    Thus the lemma holds.
\end{proof}
The above lemma shows some properties after Step~7.
We will use it in the next several subsections, and we focus on analyzing the lower bound of $\Delta_{S_1}+\Delta_{S_2}$ to get the branching vectors of Step~8-12.

\subsection{Step~8}\label{step-8}
\textbf{Step 8.} If there is a $5$-literal $x$ such that $N(x, \mathcal{F})$ and $N(\overline{x}, \mathcal{F})$ contain at least two $4^-$-literals, return SAT($\mathcal{F}_{x=1}$)$\vee$SAT($\mathcal{F}_{x=0}$).

After Step~7, all clauses containing $x$ or $\overline{x}$ are $3^+$-clauses.
In this step, we branch on a variable $x$ such that there are at least two literals of $4^-$-variables in $N(x, \mathcal{F})$ and $N(\overline{x}, \mathcal{F})$. The two sub-branches are: $S_1=\{x\}$; $S_2=\{\overline{x}\}$. We have the following result:
\begin{lemma}
    The branching vector generated by Step~8 is covered by
    \begin{equation}\label{bvec-step8}
        [w_5+4\delta_5, w_5+2w_3+4\delta_5].
    \end{equation}
\end{lemma}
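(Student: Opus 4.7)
The plan is to combine the generic post-Step-7 bound already recorded in Lemma~\ref{prop-aft-s7} with the sharper sum bound from Lemma~\ref{dec-sum-5d3c}, and then close out via Lemma~\ref{lemma-bvec}(ii). Concretely, the two sub-branches are $S_1=\{x\}$ and $S_2=\{\overline{x}\}$ where $x$ is a $5$-literal with at least two $4^-$-variables appearing across $N(x,\mathcal{F})\cup N(\overline{x},\mathcal{F})$. Since Step~7 has already been applied, $x$ is a $(2,3)$- or $(3,2)$-literal, and all five clauses containing $x$ or $\overline{x}$ are $3^+$-clauses, so the hypotheses of both Lemma~\ref{prop-aft-s7} and Lemma~\ref{dec-sum-5d3c} hold.

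First I would quote $\min(\Delta_{S_1},\Delta_{S_2})\geq w_5+4\delta_5$ directly from Lemma~\ref{prop-aft-s7}; this immediately supplies the ``$a$'' component needed in Lemma~\ref{lemma-bvec}(ii). The real work is getting a lower bound on $\Delta_{S_1}+\Delta_{S_2}$. I would apply Lemma~\ref{dec-sum-5d3c} with $g=10$ (coming from the same Lemma~\ref{prop-aft-s7}, using that all five clauses on $x$/$\overline{x}$ are $3^+$-clauses so there are at least $10$ neighbor slots) and $h=2$ (coming from the defining condition of Step~8: at least two of those neighbors are $4^-$-literals, i.e.\ $3$- or $4$-variables). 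This yields
\begin{equation*}
\xi_{S_1}^{(1)}+\xi_{S_2}^{(1)} \;\geq\; 2w_5 + 10\delta_5 + 2(w_3-\delta_5) + (t_{5,2}(x)+t_{5,2}(\overline{x}))(\delta_4-\delta_5),
\end{equation*}
and since $\delta_4\geq \delta_5$ the last term is nonnegative and can be dropped, leaving
\begin{equation*}
\Delta_{S_1}+\Delta_{S_2} \;\geq\; \xi_{S_1}^{(1)}+\xi_{S_2}^{(1)} \;\geq\; 2w_5 + 2w_3 + 8\delta_5.
\end{equation*}

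Finally I would invoke Lemma~\ref{lemma-bvec}(ii) with $a=w_5+4\delta_5$ and $b=2w_5+2w_3+8\delta_5$, giving a branching vector covered by $[w_5+4\delta_5,\;(2w_5+2w_3+8\delta_5)-(w_5+4\delta_5)] = [w_5+4\delta_5,\;w_5+2w_3+4\delta_5]$, which is exactly the claimed bound. Honestly, there is no genuine obstacle here: once Lemmas~\ref{prop-aft-s7} and \ref{dec-sum-5d3c} are in place, the only nontrivial ingredient is extracting $h\geq 2$ from the trigger condition of Step~8, and that is immediate from the hypothesis that $N(x,\mathcal{F})\cup N(\overline{x},\mathcal{F})$ contains at least two $4^-$-literals, since each such neighbor contributes to $n_3$ or $n_4$ of $x$ or $\overline{x}$. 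So the proof should be a short, direct application of the framework lemmas rather than a case analysis.
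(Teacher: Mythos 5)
Your proposal is correct and follows essentially the same route as the paper: invoke Lemma~\ref{prop-aft-s7} for $\min(\Delta_{S_1},\Delta_{S_2})\geq w_5+4\delta_5$ and for $g=10$, extract $h\geq 2$ from the Step~8 trigger condition, plug into Lemma~\ref{dec-sum-5d3c}, and close via Lemma~\ref{lemma-bvec}(ii). Your explicit remark about dropping the nonnegative $(t_{5,2}(x)+t_{5,2}(\overline{x}))(\delta_4-\delta_5)$ term is a minor added clarification; otherwise this is the paper's argument.
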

\begin{proof}
    By lemma~\ref{prop-aft-s7} and the condition of this case, we have
    \[
        \sum_{3\leq i\leq 5}(n_i(x) + n_i(\overline{x}))\geq 10 \text{~and~} \sum_{3\leq i\leq 4}(n_i(x) + n_i(\overline{x}))\geq 2.
    \]
    By Lemma~\ref{dec-sum-5d3c}, we have
    \begin{align*}
        \Delta_{S_1}+\Delta_{S_2}
        \geq \xi_{S_1}^{(1)}+\xi_{S_2}^{(1)}
        \geq 2w_5+10\delta_5+2(w_3-\delta_5)
        = 2w_5+8\delta_5+2w_3.
    \end{align*}

    As $\min(\Delta_{S_1}, \Delta_{S_2})\geq w_5+4\delta_5$ by Lemma~\ref{prop-aft-s7}, by Lemma~\ref{lemma-bvec} we know that the branching vector of this step is covered by
    \begin{equation*}
        [w_5+4\delta_5, w_5+2w_3+4\delta_5].
    \end{equation*}
\end{proof}

\begin{lemma}\label{reduce-measure}
    Let $\mathcal{F}$ be a reduced CNF-formula. After Step~8, if $\mathcal{F}_{x=1} \neq \mathcal{F}_{x=1}'$ for a $(2, 3)/(3, 2)$-literal $x\in \mathcal{F}$, then it holds that
    \[
        \mu(\mathcal{F}) - \mu(\mathcal{F}_{x=1}')\geq w_3-1.
    \]
    In other words, if we can apply reduction rules on $\mathcal{F}_{x=1}$, the measure would decrease by at least $w_3-1$.
\end{lemma}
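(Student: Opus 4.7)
The plan is to decompose the total decrease as $\mu(\mathcal{F}) - \mu(\mathcal{F}_{x=1}') = \xi_{\{x\}}^{(1)} + \xi_{\{x\}}^{(2)} + \xi_{\{x\}}^{(3)}$. By Lemma~\ref{coro-assign2} applied to $x$ as a $(j, 5-j)$-literal with $j \geq 2$, we already have $\xi_{\{x\}}^{(1)} \geq w_5 + 2\delta_5$, and Lemma~\ref{reduction-measure} guarantees that subsequent reductions never erase this amount. Substituting $w_5 = 5$ and $\delta_5 = 5 - 2w_3$ gives $w_5 + 2\delta_5 = 15 - 4w_3 > 7$ under $w_3 < 2$, which is already comfortably larger than $w_3 - 1 < 1$, so the stated bound holds trivially by this route. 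The hypothesis $\mathcal{F}_{x=1} \neq \mathcal{F}_{x=1}'$ would however be vestigial in such an argument, so I read the ``in other words'' clause as asserting the sharper claim $\xi_{\{x\}}^{(2)} + \xi_{\{x\}}^{(3)} \geq w_3 - 1$, and I would structure the proof around that sharper statement.

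The main proof proceeds by case analysis on the first reduction rule to become applicable to $\mathcal{F}_{x=1}$. Rules R-Rule~\ref{rule-dup} and R-Rule~\ref{rule-tauto} cannot newly fire, since the assignment only deletes literals; and R-Rule~\ref{rule-pure} acting on a 1-clause is also ruled out, since Step~7 forbids 2-clauses containing $\overline{x}$ in $\mathcal{F}$. When R-Rule~\ref{rule-subsump} fires, the removed clause has length at least two and each deleted literal contributes at least $\delta_5$, giving a drop of at least $2\delta_5 > w_3 - 1$. When R-Rule~\ref{rule-pure} fires on a pure literal $y$ of post-assignment degree $c$, the case $c \geq 3$ is immediate because the weight lost from $y$ alone is $w_c \geq w_3$; for $c \in \{1, 2\}$ one uses Step~8 (at most one $4^-$-literal in $N(x) \cup N(\overline{x})$) together with the fact that no 5-variable can sit in every $x$-clause (otherwise R-Rule~\ref{rule-back-resol} would apply in $\mathcal{F}$), to conclude that the other literals in the removed clauses must have degree at least $3$ in $\mathcal{F}_{x=1}$ and so each contributes at least $\delta_5$.

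The delicate cases are R-Rule~\ref{rule-trivial-resol} and R-Rule~\ref{rule-2cls-1neg}--\ref{rule-link}, which can have zero immediate measure impact: for example, R-Rule~\ref{rule-trivial-resol} on a newly $(1,1)$-variable $y$ whose two remaining clauses $yC_y$ and $\overline{y}D$ are variable-disjoint, or R-Rule~\ref{rule-link} on a 2-clause $\overline{y}w_1$ whose head $\overline{y}$ has become unique in $\mathcal{F}_{x=1}$. In each such scenario the operation still destroys all negative occurrences of $y$, turning $y$ into a pure positive literal that then triggers R-Rule~\ref{rule-pure} on the clause $yC_y$; the cascaded application drops each remaining literal in $C_y$ by one in degree, contributing at least $\delta_5$ per literal (or $w_3$ when that literal is a 3- or 4-variable), which comfortably exceeds $w_3 - 1$. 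The main obstacle will be verifying this cascade argument uniformly across all rule interactions, and checking that no chain of zero-decrease reductions can persist without eventually triggering a measure drop of at least $w_3 - 1$. Step~8 is the crucial structural ingredient, and the enabling inequality throughout is $\delta_5 > w_3 - 1$, which follows directly from the constraint $w_3 - \delta_5 < 1$ in (\ref{weight2}).
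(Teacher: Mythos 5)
Your opening observation is sharp and correct: as literally stated the lemma is trivially implied by $\xi_{\{x\}}^{(1)} \geq w_5 + 2\delta_5 > w_3 - 1$ together with Lemma~\ref{reduction-measure}, and the content actually invoked in Steps~11 and~12 is the sharper inequality $\xi_{\{x\}}^{(2)} + \xi_{\{x\}}^{(3)} \geq w_3 - 1$. That reinterpretation is the right one to prove. However, your proof of the sharper claim does not reach the same place as the paper's, and you explicitly leave the hardest cases open (``the main obstacle will be verifying this cascade argument uniformly''), so the attempt is incomplete rather than a finished alternative.

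The paper's argument is organised around three observations that your write-up does not isolate. First, every application of R-Rules~1--8 removes a literal, and after Step~8 at most one variable in $\mathcal{F}_{x=1}$ can have degree below $3$ (the unique $4^-$-neighbor of $x$), so each such application drops the measure by at least $\delta_5 > w_3-1$. Second --- and this is the crucial step you are missing --- R-Rule~\ref{rule-link} is the only rule that can shrink the measure by less than $\delta_5$, because it \emph{merges} two variables; and because after Step~7 no $2$-clause in $\mathcal{F}$ touches $x$ and after Step~8 at most one $4^-$-literal sits in $N(x)\cup N(\overline{x})$, any $2$-clause created in $\mathcal{F}_{x=1}$ must contain a $5$-variable, which bounds the drop below by $\min\{w_5+w_i-w_{5+i-2}:3\le i\le 5\}=w_3-1$. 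Third, R-Rule~\ref{rule-back-resol} cannot newly fire in $\mathcal{F}_{x=1}$ because its precondition would already have held in $\mathcal{F}$. Your ``cascade'' discussion of R-Rules~\ref{rule-trivial-resol} and \ref{rule-2cls-1neg}--\ref{rule-link} replaces this clean per-rule bound with an unfinished chase through derived reductions, and it contains a concrete misreading: applying R-Rule~\ref{rule-link} to $\overline{y}w_1$ with $\overline{y}$ unique does not ``turn $y$ into a pure positive literal that then triggers R-Rule~\ref{rule-pure}''; it substitutes $w_1$ for $y$ throughout, eliminating $var(y)$ and increasing $var(w_1)$'s degree, whose measure effect must be bounded directly as above. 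Your subsumption step also overclaims (``each deleted literal contributes at least $\delta_5$'') since the unique $4^-$-neighbor may already have degree $2$ in $\mathcal{F}_{x=1}$; the conclusion still holds because at most one literal can contribute $0$, but the justification should say so.
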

\begin{proof}
    R-Rules~1-8 only remove some literals, and so applying any one of them decreases the measure by at least $\delta_5$.

    After Step~7, all clauses containing variable $x$ are $3^+$-clauses, some $2$-clauses would be generated in $\mathcal{F}_{x=1}$, and so R-Rule~\ref{rule-link} may be applicable in $\mathcal{F}_{x=1}$ if R-Rules~1-8 do not apply.
    After Step~8, there is at most one $4^-$-variable in $N(x, \mathcal{F})$ and $N(\overline{x}, \mathcal{F})$, so all $2$-clauses in $\mathcal{F}_{x=1}$ contain at least one $5$-variable.
    Thus applying R-Rule~\ref{rule-link} decreases the measure by at least $\min\{w_5+w_i-w_{5+i-2}|3\leq i\leq 5\}=w_3-1$.

    For R-Rule~\ref{rule-back-resol}, we claim that if R-Rule~1-9 are not applicable on $\mathcal{F}_{x=1}$, then R-Rule~\ref{rule-back-resol} is also not applicable. The reason is as follows.
    If there exists two clauses $CD_1$ and $CD_2$ in $\mathcal{F}_{x=1}$ such that R-Rule~\ref{rule-back-resol} is applicable on $\mathcal{F}_{x=1}$, then there must be two clauses $C'D'_1$ and $C'D'_2$ in $\mathcal{F}$ such that $C\subseteq C'$, $D_1\subseteq D'_1$, and $D_2\subseteq D'_2$ since $\mathcal{F}_{x=1}$ is obtained by removing some literals and clauses from $\mathcal{F}$. This implies we could apply R-Rule~\ref{rule-back-resol} on $\mathcal{F}$, which contradicts the condition that $\mathcal{F}$ is reduced.

    Thus, it holds that either $\mathcal{F}_{x=1}=\mathcal{F}_{x=1}'$ or $\mu(\mathcal{F}) - \mu(\mathcal{F}_{x=1}')\geq \min(\delta_5, w_3-1)=w_3-1$ by the assumption in (\ref{weight2}).
\end{proof}

Recall that after Step~8, any $5$-literal $x$ must be a $(2,3)$/$(3,2)$-literal.
We assume $xC_1, xC_2, \overline{x}D_1, \overline{x}D_2$, and $\overline{x}D_3$ are the five clauses containing $x$ or $\overline{x}$ in the analysis of Step~9 and Step~10.

\subsection{Step~9}\label{step-9}
\textbf{Step 9.} If there exist $5$-literals $y_1$ and $y_2$ such that $y_1\in C_1$, $y_1\in D_1$, $y_2\in C_2$ and $y_2$ or $\overline{y_2}\in D_2$, return SAT($\mathcal{F}_{y_1=1}$)$\vee$SAT($\mathcal{F}_{y_1=0}$).

In this step, the two sub-branches are: $S_1=\{y_1\}$; $S_2=\{\overline{y_1}\}$. We have the following result:
\begin{lemma}
    The branching vector generated by Step~9 is covered by
    \begin{equation}\label{bvec-step9}
        [w_5+4\delta_5, w_5+\delta_4+6\delta_5].
    \end{equation}
\end{lemma}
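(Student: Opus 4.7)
I would prove the branching vector via Lemma~\ref{lemma-bvec}(ii): since $[w_5+4\delta_5,\,w_5+\delta_4+6\delta_5]$ corresponds to $a=w_5+4\delta_5$ and $b-a=w_5+\delta_4+6\delta_5$, it suffices to establish $\min(\Delta_{S_1},\Delta_{S_2})\geq w_5+4\delta_5$ and $\Delta_{S_1}+\Delta_{S_2}\geq 2w_5+\delta_4+10\delta_5$. The first inequality is immediate from Lemma~\ref{prop-aft-s7} (applicable because Step~7 has already been carried out), which moreover supplies the neighbor-count bound $\sum_{3\leq i\leq 5}(n_i(y_1)+n_i(\overline{y_1}))\geq 10$ that I will feed into the sum bound.

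The key structural observation for the sum is that the Step~9 hypotheses $y_1\in C_1$ and $y_1\in D_1$ put both $x$ and $\overline{x}$ in $N(y_1,\mathcal{F})$ (via the clauses $xC_1$ and $\overline{x}D_1$), so the 5-variable $x$ contributes to $t_{5,2}(y_1)$, giving $t_{5,2}(y_1)+t_{5,2}(\overline{y_1})\geq 1$. Invoking Lemma~\ref{dec-sum-5d3c} with $g\geq 10$ and this lower bound on $t_{5,2}$ yields
\[
\xi_{S_1}^{(1)}+\xi_{S_2}^{(1)} \geq 2w_5 + 10\delta_5 + (\delta_4-\delta_5) = 2w_5 + 9\delta_5 + \delta_4,
\]
which still falls short of the target $2w_5+10\delta_5+\delta_4$ by exactly $\delta_5$.

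To close this $\delta_5$ gap I plan to exploit the second half of the Step~9 hypothesis, the one concerning $y_2$. Once $y_1=1$ kills the clauses $xC_1$ and $\overline{x}D_1$, the variable $x$ becomes a 3-variable in $\mathcal{F}_{y_1=1}$ with exactly the three surviving clauses $xC_2',\overline{x}D_2',\overline{x}D_3'$ (primes indicating removal of $y_1$). The co-occurrences $y_2\in C_2$ together with either $y_2\in D_2$ or $\overline{y_2}\in D_2$ then force one of the two Davis--Putnam resolvents of $x$ to be either a tautology (in the $\overline{y_2}\in D_2$ sub-case, where $C_2'D_2'$ contains both $y_2$ and $\overline{y_2}$) or to carry a duplicated $y_2$ (in the $y_2\in D_2$ sub-case). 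In the first sub-case this makes R-Rule~\ref{rule-trivial-resol} applicable to $x$ and drops the measure by at least $w_3$; in the second sub-case the same rule still applies after the internal coincidences among $C_2',D_2',D_3'$ absorb the would-be degree increases, or, failing that, a shorter derived clause triggers R-Rule~\ref{rule-link}. Either way, $\xi_{S_1}^{(3)}\geq w_3 \geq \delta_5$ by (\ref{weight2}), which pushes $\Delta_{S_1}+\Delta_{S_2}$ past $2w_5+\delta_4+10\delta_5$.

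The main obstacle I anticipate is the sub-case $y_2\in D_2$: here the Davis--Putnam step on $x$ does not automatically produce a trivial resolvent, so I must verify by a careful local analysis of $C_2', D_2', D_3'$ that deduplicating $y_2$ in $C_2'D_2'$ does not violate the no-degree-increase condition of R-Rule~\ref{rule-trivial-resol}, or alternatively exhibit a concrete reduction rule such as R-Rule~\ref{rule-link} or R-Rule~\ref{rule-back-resol} that fires on one of the newly-shortened clauses. Once the extra $\delta_5$ is secured this way, Lemma~\ref{lemma-bvec}(ii) delivers the claimed branching vector $[w_5+4\delta_5,\,w_5+\delta_4+6\delta_5]$.
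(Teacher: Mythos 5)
Your overall strategy coincides with the paper's: invoke Lemma~\ref{prop-aft-s7} for $\min(\Delta_{S_1},\Delta_{S_2})\geq w_5+4\delta_5$, observe that $y_1\in C_1$ and $y_1\in D_1$ put both $x$ and $\overline{x}$ in $N(y_1,\mathcal{F})$ so $t_{5,2}(y_1)\geq 1$, feed $g\geq 10$ and this $t_{5,2}$ bound into Lemma~\ref{dec-sum-5d3c-t}/\ref{dec-sum-5d3c} to get $\xi_{S_1}^{(1)}+\xi_{S_2}^{(1)}\geq 2w_5+\delta_4+9\delta_5$, and then close the remaining $\delta_5$ by showing a reduction rule fires in $\mathcal{F}_{y_1=1}$. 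Your treatment of the $\overline{y_2}\in D_2$ sub-case is correct and is the paper's Case~2: the resolvent $C_2'D_2'$ is trivial, only $C_2'D_3'$ survives, degrees cannot increase, so R-Rule~\ref{rule-trivial-resol} removes the $3$-variable $x$ and drops the measure by $w_3\geq\delta_5$.

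The sub-case $y_2\in D_2$, which you flag as the obstacle, is a genuine gap, and the alternatives you float do not work. R-Rule~\ref{rule-trivial-resol} is in general blocked: take $C_2'=y_2ab$, $D_2'=y_2cd$, $D_3'=efg$; the resolvent $C_2'D_3'=y_2abefg$ strictly raises the degree of each of $a,b,e,f,g$, and deduplicating $y_2$ inside $C_2'D_2'$ does not help, so the no-degree-increase hypothesis fails. No $2$-clause is produced, so R-Rule~\ref{rule-link} has no trigger; and the two $y_2$-clauses start with $x$ and $\overline{x}$ respectively, so they share no common $|C|\geq 2$ prefix and R-Rule~\ref{rule-back-resol} cannot fire either. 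The rule that actually fires (and the one the paper uses) is R-Rule~\ref{rule-1liter-neg}: in $\mathcal{F}_{y_1=1}$ the literal $x$ is a $(1,2)$-literal, its unique clause $xC_2'$ contains $y_2$, and $\overline{x}D_2'$ also contains $y_2$; taking $z_1=y_2$, $z_2=\overline{x}$ satisfies the hypotheses ($\overline{z_2}=x$ appears in no other clause), so R-Rule~\ref{rule-1liter-neg} deletes $y_2$ from $\overline{x}D_2'$. Either this or some earlier rule in R-Rules~1--7 applies, each removing at least one literal and hence decreasing the measure by at least $\delta_5$, giving $\xi_{S_1}^{(2)}+\xi_{S_1}^{(3)}\geq\delta_5$ as required. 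You also assert without proof that $x$ becomes a $3$-variable in $\mathcal{F}_{y_1=1}$ with exactly the clauses $xC_2',\overline{x}D_2',\overline{x}D_3'$; this needs the observation (from R-Rule~\ref{rule-back-resol} being inapplicable in the reduced $\mathcal{F}$) that literal $y_1$ does not occur in $C_2$, $D_2$, or $D_3$, so that none of those three clauses is wiped out by the assignment $y_1=1$.
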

\begin{proof}
    Note that $x, \overline{x}\in N(y_1)$, and so $var(y_1)$ is a $5$-variable, otherwise we can do Step~8.
    This implies $t_{5,2}(y_1)\geq 1$.

    By Lemma~\ref{dec-sum-5d3c-t}, we have
    \begin{align*}
        \xi_{S_1}^{(1)}+\xi_{S_2}^{(1)}
        & \geq 2w_5 + \sum_{3\leq i\leq 5}(n_i(y_1)+n_i(\overline{y_1}))\delta_i + (t_{5,2}(y_1)+t_{5,2}(\overline{y_1}))(\delta_4-\delta_5)\\
        & \geq 2w_5 + \sum_{3\leq i\leq 5}(n_i(y_1)+n_i(\overline{y_1}))\delta_5 + (t_{5,2}(y_1)+t_{5,2}(\overline{y_1}))(\delta_4-\delta_5)\\
        % & \geq 2w_5+(\sum_{3\leq i\leq 5}{n_i(y_1)+n_i(\overline{y_1})})\delta_5                                                    \\
        % &  \quad +(t_{5,2}(y_1) + t_{5,2}(\overline{y_1}) + \sum_{3\leq i\leq 4}{n_i(y_1)+n_i(\overline{y_1})})(w_3-\delta_5)   \\
        & \geq 2w_5+10\delta_5+(\delta_4-\delta_5)                                                                                       \\
        & = 2w_5+\delta_4+9\delta_5.
    \end{align*}

    Since R-Rule~\ref{rule-back-resol} is not applicable, literal $y_1$ would not appear in $C_2$, $D_2$ and $D_3$.
    We look at $\mathcal{F}_{S_1=1}$, which is the resulting formula after we assign value 1 to $y_1$.
    In $\mathcal{F}_{S_1=1}$, $x$ becomes a $3$-variable, the three clauses containing $x$ or $\overline{x}$ will be $xC_2, \overline{x}D_2$, and $\overline{x}D_3$, and  $y_2\in C_2$.

    \textbf{Case 1.} If $y_2\in D_2'$, then R-Rule~\ref{rule-1liter-neg} is applicable. Applying any one of R-Rules~1-7 decreases the measure by at least $\delta_5$ since each of them removes at least one literal.

    \textbf{Case 2.} If $\overline{y_2}\in D_2'$, then R-Rule~\ref{rule-trivial-resol} is applicable. Resolution on $x$ decreases the measure by $w_3$.

    So it holds that $\xi_{S_1}^{(2)} + \xi_{S_1}^{(3)}\geq \min(\delta_5, w_3) = \delta_5$ and we have
    \begin{align*}
        \Delta_{S_1}+\Delta_{S_2}
        \geq \xi_{S_1}^{(1)}+\xi_{S_2}^{(1)} +(\xi_{S_1}^{(2)} + \xi_{S_1}^{(3)})
        \geq 2w_5+\delta_4+10\delta_5.
    \end{align*}

    With $\min(\Delta_{S_1}, \Delta_{S_2})\geq w_5+4\delta_5$ by Lemma~\ref{prop-aft-s7}, by Lemma~\ref{lemma-bvec} we have that the branching vector of this step is covered by
    \begin{equation*}
        [w_5+4\delta_5, w_5+\delta_4+6\delta_5].
    \end{equation*}
\end{proof}

\subsection{Step~10}\label{step-10}
\textbf{Step 10.} If there exist $5$-literals $y_1$  and $y_2$ such that $y_1\in C_1$, $\overline{y_1}\in D_1$, $y_2\in C_2$, and $\overline{y_2}\in D_2$, pick a $5$-literal $z\in D_3$ (let $R_5(\mathcal{F}_{z=1})$ denote the resulting formula after we only apply R-Rule~\ref{rule-trivial-resol} on $\mathcal{F}_{z=1}$) and return SAT($R_5(\mathcal{F}_{z=1})$) $\vee$SAT($\mathcal{F}_{z=0}$).

After Step~7, all clauses containing $5$-literals are $3^+$-clauses so $|D_3|\geq 2$.
After Step~8, there is at most one $4^-$-literal in $D_3$.
So there must exist a $5$-litreal $z\in D_3$, we branch on this litreal and the two sub-branches are: $S_1=\{z\}$; $S_2=\{\overline{z}\}$.
Note that we will first apply R-Rule~\ref{rule-trivial-resol} on $\mathcal{F}_{S_1=1}$.
We have the following result:
\begin{lemma}
    The branching vector generated by Step~10 is covered by
    \begin{equation}
        [w_5+4\delta_5, w_5+w_4+6\delta_5].
    \end{equation}
\end{lemma}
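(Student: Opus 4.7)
The plan is to combine a sum bound $\Delta_{S_1}+\Delta_{S_2}\geq 2w_5+w_4+10\delta_5$ with the minimum bound $\min(\Delta_{S_1},\Delta_{S_2})\geq w_5+4\delta_5$ from Lemma~\ref{prop-aft-s7} and apply Lemma~\ref{lemma-bvec}(ii). Compared with Step~9, the extra $w_4$ must come from invoking R-Rule~\ref{rule-trivial-resol} on the $(2,3)$-variable $x$ inside $\mathcal{F}_{z=1}$: the hypothesis of Step~10 ensures that two of the four resolvents of $x$, namely $C_1D_1$ and $C_2D_2$, are tautologies via the pairs $y_1,\overline{y_1}$ and $y_2,\overline{y_2}$, which is precisely what makes the rule eligible once $z=1$ deletes $\overline{x}D_3$.

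First I would bound $\xi^{(1)}_{S_1}+\xi^{(1)}_{S_2}$ using Lemma~\ref{dec-sum-5d3c-t}. After Step~7, every clause touching $z$ or $\overline{z}$ is a $3^+$-clause, so $\sum_{3\leq i\leq 5}(n_i(z)+n_i(\overline{z}))\geq 10$; dropping the nonnegative $t_{5,2}$-correction gives $\xi^{(1)}_{S_1}+\xi^{(1)}_{S_2}\geq 2w_5+10\delta_5$. Second, I would check applicability of R-Rule~\ref{rule-trivial-resol} to $x$ in $\mathcal{F}_{z=1}$: after removing $\overline{x}D_3$, the non-trivial resolvents of $x$ are contained in $\{C_1D_2,C_2D_1\}$, and a per-literal occurrence count over the replacement $\{xC_1,xC_2,\overline{x}D_1,\overline{x}D_2\}\mapsto\{C_1D_2,C_2D_1\}$ shows that no variable's degree increases, so the rule fires and eliminates $x$.

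It remains to account for a drop of at least $w_4$ from the resolution. If $z\notin C_1\cup C_2$, Lemma~\ref{prop4}(i) applied to $\overline{x}z$ already forbids $z\in D_1\cup D_2$, hence $x$ is a $(2,2)$-variable of weight $w_4$ in $\mathcal{F}_{z=1}$; its elimination contributes exactly $w_4$, giving $\Delta_{S_1}+\Delta_{S_2}\geq 2w_5+10\delta_5+w_4$. If instead $z\in C_1$ (the case $z\in C_2$ is symmetric), then $x$ becomes a $(1,2)$-variable of weight $w_3$, so the elimination alone only saves $w_3$; however, $x$ and $\overline{x}$ are now both in $N(z)$, making $t_{5,2}(z)\geq 1$ for the pair $x$ and adding $\delta_4-\delta_5$ to the sum bound, and the literal $\overline{y_2}\in D_2$ is orphaned in the surviving resolvent $C_2D_1$ (we have $\overline{y_2}\notin C_2$, else $xC_2$ is a tautology, and $\overline{y_2}\notin D_1$, else Lemma~\ref{prop4}(i) on $\overline{x}\overline{y_2}$ is violated), contributing another $\delta_5$. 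Since $w_3+(\delta_4-\delta_5)+\delta_5=w_3+\delta_4=w_4$, we again reach $\Delta_{S_1}+\Delta_{S_2}\geq 2w_5+10\delta_5+w_4$. Combining with Lemma~\ref{prop-aft-s7} via Lemma~\ref{lemma-bvec}(ii) with $a=w_5+4\delta_5$ and $b=2w_5+w_4+10\delta_5$ gives the claimed branching vector.

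The main obstacle is the degenerate sub-case $z\in C_i$: there the direct ``$w_4$ from eliminating $x$'' argument loses almost a full $w_3$ of weight, and one must carefully recover the deficit by simultaneously harvesting the $t_{5,2}$-correction in Lemma~\ref{dec-sum-5d3c-t} and the orphaned $\overline{y_2}$ left behind by the collapsed clause $\overline{x}D_2$.
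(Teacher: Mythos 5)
Your Case A (when $z\notin C_1\cup C_2$) reproduces the paper's argument exactly, including the $\xi^{(1)}$ bound from Lemma~\ref{dec-sum-5d3c} and the $w_4$ from resolving out the now $(2,2)$-variable $x$ in $\mathcal{F}_{z=1}$; however, you missed the observation the paper uses to dismiss Case B outright: if $z\in C_1$ (resp. $C_2$), then after relabeling $D_3$ as $D_1$, the pair $(z,y_2)$ (resp. $(z,y_1)$) satisfies the condition of Step~9, so the algorithm would never have reached Step~10. This makes your entire Case B analysis unnecessary.

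Your Case B argument also has a real gap. You credit $\delta_5$ for the ``orphaned'' occurrence $\overline{y_2}\in D_2$ lost in the resolution, implicitly assuming $var(y_2)$ still has degree at least $3$ in $\mathcal{F}_{z=1}$ so that the degree drop is worth at least $\delta_5$. But when $z$ is a $(3,2)$-literal, $z$ occurs in $xC_1$, $\overline{x}D_3$, and one further clause $E$; Lemma~\ref{prop4} does not forbid $\overline{y_2}\in C_1$, $y_2\in D_3$, and $y_2$ or $\overline{y_2}\in E$, in which case $var(y_2)$ has degree exactly $2$ in $\mathcal{F}_{z=1}$ and the orphaning contributes $\delta_2=0$. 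Then your Case B total is only $w_3+(\delta_4-\delta_5)<w_4$, and the target bound is not reached. Since Case B is vacuous the lemma still holds, but your proof as written does not establish it: you need either the Step~9 dismissal or a tighter argument closing this sub-case.
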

\begin{proof}
    By Lemma~\ref{dec-sum-5d3c} with $g\geq 10$ (since $x$ is not contained in any $2$-clause after Step~7), we get
    \begin{align*}
        \xi_{S_1}^{(1)}+\xi_{S_2}^{(1)}
        & \geq 2w_5+10\delta_5.
    \end{align*}

    Note that literal $z$ would not appear in $C_1$ and $C_2$, otherwise Step~9 would be applied.
    We look at $\mathcal{F}_{S_1=1}$, which is the resulting formula after we assign value 1 to $z$.
    In $\mathcal{F}_{S_1=1}$, $x$ becomes a $4$-variable and the four clauses containing $x$ or $\overline{x}$ are $xC_1$, $xC_2$, $\overline{x}D_1$, and $\overline{x}D_2$.
    Since $y_1\in C_1$, $y_2\in C_2$, $\overline{y_1}\in D_1$, and $\overline{y_2}\in D_2$, we can apply R-Rule~\ref{rule-trivial-resol} on $x$.
    This decreases the measure by $w_4$. Thus $\xi_{S_1}^{(3)}\geq w_4$ and we have
    \begin{align*}
        \Delta_{S_1}+\Delta_{S_2}
        \geq \xi_{S_1}^{(1)}+\xi_{S_2}^{(1)}+\xi_{S_1}^{(3)}
        \geq 2w_5+10\delta_5+w_4.
    \end{align*}
    With $\min(\Delta_{S_1}, \Delta_{S_2})\geq w_5+4\delta_5$ by Lemma~\ref{prop-aft-s7}, by Lemma~\ref{lemma-bvec} we have that the branching vector of this step is covered by
    \begin{equation*}
        [w_5+4\delta_5, w_5+w_4+6\delta_5].
    \end{equation*}
\end{proof}

For the sake of presentation, we define an auxiliary $G_x$ for each literal $x$ as follows.

\begin{definition}[Clause-clause incidence graph]
     Let $x$ be a literal in a formula $\mathcal{F}$. Assume the clauses containing $x$ are $xC_1, xC_2, \dots, xC_a$ and the clauses containing $\overline{x}$ are $\overline{x}D_1, \overline{x}D_2, \dots, \overline{x}D_b$.
    A \emph{clause-clause incidence graph} of literal $x$, denoted by $G_x$, is a bipartite graph with bipartition $(X, Y)$ where $X$ is the set of clauses $C_i(1\leq i\leq a)$ and $Y$ is the set of clauses $D_j(1\leq j\leq b)$, and there is an edge between $C_i\in X(1\leq i\leq a)$ and $D_j\in Y(1\leq j\leq b)$ if and only if $C_i$ and $D_j$ contain the literal of the same variable in $\mathcal{F}$.
\end{definition}

\noindent \emph{Example.} Let $\mathcal{F}$ be a formula and $xC$ and $\overline{x}D$ be two clauses in $\mathcal{F}$.
If $y\in C$ and $y$ or $\overline{y}\in D$, then in $G_{x}$, there is an edge between vertex $C$ and vertex $D$. 

\begin{lemma}\label{structure-lemma}
    After Step~10, for any $(2, 3)$-literal $x$, there is no matching of size at least $2$ in $G_x$.
\end{lemma}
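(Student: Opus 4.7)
The plan is a proof by contradiction. Since $x$ is a $(2,3)$-literal, $G_x$ has bipartition $(\{C_1,C_2\},\{D_1,D_2,D_3\})$; a matching of size $2$ saturates $\{C_1,C_2\}$, so after relabeling the $\overline{x}$-clauses we may take it to be $\{(C_1,D_1),(C_2,D_2)\}$, witnessed by variables $u\in var(C_1)\cap var(D_1)$ and $v\in var(C_2)\cap var(D_2)$. The goal is to show that witnesses $y_1,y_2$ satisfying the hypothesis of Step~9 or Step~10 must exist, contradicting the assumption that $\mathcal{F}$ has passed Step~10.

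The key input is the consequence of Step~8: $N(x,\mathcal{F})\cup N(\overline{x},\mathcal{F})$ contains at most one $4^-$-variable, call it $v^{*}$. The central claim is that $u,v$ can always be chosen simultaneously to be $5$-variables. Let $G'_x$ be the spanning subgraph of $G_x$ whose edges are realized by some $5$-variable. If $G'_x$ has a size-$2$ matching, the sign analysis below triggers Step~9 or Step~10. Otherwise, by König's theorem all edges of $G'_x$ share a common vertex, and the remaining $G_x$-edges are realized only by $v^{*}$. The ``singly'' case (exactly one matched edge is realized only by $v^{*}$) is ruled out by re-routing through $D_3$: using $|D_3|\ge 2$ from Step~7 together with the Step~8 bound on $4^-$-literals to guarantee a $5$-literal in $D_3$, one shows that the $C_1$- (or $D_1$-) star structure of $G'_x$ forces a $5$-variable in some $var(C_2)\cap var(D_j)$, producing a $5$-variable matching of size~$2$ after all. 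In the ``doubly'' case where both matched edges go through $v^{*}$, the variable $v^{*}$ lies in $C_1,C_2,D_1,D_2$, forcing $\deg(v^{*})=4$ with all four occurrences inside these clauses; then a case analysis on the signs of $v^{*}$ combined with the fact that $xC_1$ and $xC_2$ (and $\overline{x}D_i$ and $\overline{x}D_j$) share at most one literal in a reduced formula (by R-Rule~\ref{rule-back-resol}) either re-applies R-Rule~\ref{rule-back-resol} or R-Rule~\ref{rule-trivial-resol} on $v^{*}$, contradicting the reducedness of $\mathcal{F}$.

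Granted $u,v$ as $5$-variables, let $y_1$ and $y_2$ be their literals in $C_1$ and $C_2$, respectively; both are $5$-literals. We split on the signs of $y_1$ in $D_1$ and $y_2$ in $D_2$, modulo the symmetry that swaps the two matched edges: (i) if $y_1\in D_1$, Step~9 applies with witnesses $(y_1,y_2)$; (ii) if $\overline{y_1}\in D_1$ but $y_2\in D_2$, Step~9 applies after swapping the roles of the two matched edges (allowed because the labelings of $C_1,C_2$ and of $D_1,D_2,D_3$ are arbitrary); (iii) if $\overline{y_1}\in D_1$ and $\overline{y_2}\in D_2$, Step~10 applies together with any $5$-literal $z\in D_3$, whose existence again follows from $|D_3|\ge 2$ and the Step~8 bound. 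In every sub-case, Step~9 or Step~10 would have been applicable to $\mathcal{F}$, the desired contradiction.

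I expect the main obstacle to be the structural sub-cases where $G'_x$ has no size-$2$ matching but $G_x$ does: ruling them out requires a careful degree/sign accounting of $v^{*}$'s occurrences across the four matched clauses together with precise appeals to R-Rule~\ref{rule-back-resol} and R-Rule~\ref{rule-trivial-resol}. The sign analysis that invokes Step~9 or Step~10 once both witnesses are $5$-variables is by comparison mechanical.
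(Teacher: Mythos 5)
Your sign analysis in the third paragraph is correct and matches the paper's (very terse) argument, but the entire machinery in your first two paragraphs is built on a misreading of Step~8's consequence, and under the paper's actual reading it is unnecessary.

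You take Step~8 to guarantee only that $N(x,\mathcal{F})\cup N(\overline{x},\mathcal{F})$ contains at most one $4^-$-\emph{variable}. But the paper's formalization of this condition, used explicitly in the Case analyses of Steps~11 and~12, is $\sum_{3\le i\le 4}\bigl(n_i(x)+n_i(\overline{x})\bigr)\le 1$, i.e.\ the \emph{total count} of $4^-$-literals appearing in $N(x,\mathcal{F})$ and in $N(\overline{x},\mathcal{F})$ is at most one. Under that reading, any variable $u$ that witnesses an edge $(C_i,D_j)$ of $G_x$ has a literal in $C_i\subseteq N(x,\mathcal{F})$ and a literal in $D_j\subseteq N(\overline{x},\mathcal{F})$; if $u$ were a $4^-$-variable these would already be two $4^-$-literals, so Step~8 would have fired. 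Hence, after Step~8 \emph{every} edge of $G_x$ is automatically witnessed by a $5$-variable, so your $G'_x$ coincides with $G_x$. The K\H{o}nig's-theorem detour, the distinguished variable $v^{*}$, and the ``singly''/``doubly'' cases are all vacuous: those configurations cannot occur. (Indeed your ``singly'' argument, that a $C_1$-star of $G'_x$ ``forces a $5$-variable in some $var(C_2)\cap var(D_j)$'', would directly contradict the assumption that $C_2$ has no $G'_x$-edge, so it does not seem to close the gap it is supposed to close. This is moot, since the gap does not arise.)

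Once you know the two matched edges are witnessed by $5$-variables, the rest is exactly what you wrote and what the paper does: fix witnesses $y_1\in C_1$, $y_2\in C_2$ (as literals), and case on the signs of their occurrences in $D_1,D_2$, using the freedom to relabel the $C_i$'s and $D_j$'s. If some $y_k$ appears with the same sign in its $D$-clause, Step~9 applies (its condition only requires $y_2$ \emph{or} $\overline{y_2}\in D_2$); if both appear negated, Step~10 applies. In short: the conclusion is correct, the sign analysis is correct, but two of your three paragraphs address a case that is excluded much more directly than you realize, and the one lemma you leaned on (K\H{o}nig) is not needed at all.
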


\begin{proof}
    If there exists a matching of size $2$ in $G_x$, then there exists two literals $y_1\in C_1$ and $y_2\in C_2$ such that two of $y_1$, $\overline{y_1}$, $y_2$, and $\overline{y_2}$ appear in two of $D_1$, $D_2$, and $D_3$ separately. Thus we would be able to do Step~9 or Step~10.
\end{proof}

\begin{lemma}\label{2cls-structure}
    After Step~10, for any $(2, 3)$-literal $x$, in $G_{x}$ if all vertices have a degree of at most $2$, then there are at least two vertices of degree $0$ in $G_{x}$.
\end{lemma}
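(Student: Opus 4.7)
The plan is to combine the structural input (Lemma~\ref{structure-lemma}) with the degree-cap hypothesis and then just count. Recall that for a $(2,3)$-literal $x$ the graph $G_x$ is bipartite with bipartition $(X,Y)$ where $X=\{C_1,C_2\}$ and $Y=\{D_1,D_2,D_3\}$, so $G_x$ has exactly $5$ vertices. By Lemma~\ref{structure-lemma}, $G_x$ contains no matching of size $2$, so its maximum matching has size at most $1$.

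First I would handle the trivial case: if $G_x$ has no edges, then all $5$ vertices have degree $0$ and we are already done. Otherwise the maximum matching is exactly $1$, and I claim that all edges of $G_x$ share a common endpoint $v$. This is the small-instance version of K\"onig's theorem, which I would prove directly: if two edges $(C_i,D_j)$ and $(C_k,D_l)$ with $i\neq k$ and $j\neq l$ both existed, they would form a matching of size $2$; and if three edges covered two distinct endpoints but no single one, a short case analysis produces two disjoint edges among them, again contradicting Lemma~\ref{structure-lemma}. Hence all edges meet at a single vertex $v\in X\cup Y$.

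Next I would split on which side $v$ lies in. If $v\in X$, then the other vertex of $X$ receives no edge at all, so it has degree $0$; moreover $\deg(v)\le 2$ by hypothesis but $|Y|=3$, so at least one vertex of $Y$ is non-adjacent to $v$ and has degree $0$ too, giving at least two degree-$0$ vertices. If instead $v\in Y$, then the other two vertices of $Y$ both have degree $0$ automatically, and we are done. In either case $G_x$ has at least two vertices of degree $0$, as claimed.

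I do not expect a real obstacle here; the only mildly delicate point is the "common endpoint" claim, which is just K\"onig's theorem on a $2\times 3$ bipartite graph and can be done by inspection. The crucial input from the algorithm side is that Steps~9 and~10 have already been ruled out, which is precisely what Lemma~\ref{structure-lemma} packages for us, so the proof reduces to elementary bipartite-graph bookkeeping combined with the degree-$2$ bound.
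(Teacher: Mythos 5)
Your proof is correct, but it takes a genuinely different route from the paper. The paper argues by contradiction: assuming at most one vertex of degree $0$, it deletes the isolated vertex (if any) and observes that the remaining bipartite graph on four or five vertices, with every degree in $\{1,2\}$, must contain a matching of size $2$, contradicting Lemma~\ref{structure-lemma}. You instead argue constructively via K\"onig duality on this tiny graph: since the matching number is at most $1$, the minimum vertex cover has size at most $1$, so all edges pass through a single vertex $v$; then the degree bound $\deg(v)\le 2$ plus the sizes $|X|=2$ and $|Y|=3$ immediately produce two isolated vertices regardless of which side $v$ lies on. Your approach isolates the structural fact (a common cover vertex) more explicitly, whereas the paper's is slightly more terse and relies on checking that a near-regular $2\times 3$ (or $2\times 2$) bipartite graph has a $2$-matching. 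Both are short and correct; neither has an advantage beyond taste, though your version makes it clearer exactly which vertices end up isolated.
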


\begin{proof}
    If there is at most one vertex of degree $0$, then after deleting the degree-0 vertices, we get a graph with $4$ vertices such that each vertex has a degree of at least $1$ and at most $2$.
    For any case, this graph has a matching of size $2$, which contradicts Lemma~\ref{structure-lemma}.
\end{proof}

\subsection{Step~11}\label{step-11}
\textbf{Step 11.} If there is a $5$-literal $x$ contained in at least one $4^+$-clause, return SAT($\mathcal{F}_{x=1}$)$\vee$ SAT($\mathcal{F}_{x=0}$).

In this step, we branch on a $5$-literal $x$ contained in at least one $4^+$-clause. The two sub-branches are: $S_1=\{x\}$; $S_2=\{\overline{x}\}$.
The branching vector of this step leads to one of the worst branching factors.
But we will prove that the shift $\sigma > 0$ saved in Step~6 (Section \ref{step-6}) can be used in this step to get an improvement.
We have the following result:
\begin{lemma}
    The branching vector generated by Step~11 is covered by
    \begin{equation}
        [w_5+4\delta_5, w_5+w_3+6\delta_5]\text{~or~}[w_5+4\delta_5, w_5+7\delta_5 + \sigma].
    \end{equation}
\end{lemma}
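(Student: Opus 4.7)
The plan is to split the analysis into two cases based on the local structure of the $4^+$-clause containing $x$. In the easy case we will extract the first vector directly from a measure count via Lemma~\ref{dec-sum-5d3c}; in the hard case we will use the shift $\sigma$ saved from Step~6 to obtain the second vector.

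First I would invoke Lemma~\ref{prop-aft-s7} to secure $\min(\Delta_{S_1},\Delta_{S_2})\ge w_5+4\delta_5$ and $\sum_{3\le i\le 5}(n_i(x)+n_i(\overline{x}))\ge 10$; since $x$ sits in some $4^+$-clause, this sum sharpens to at least $11$. In Case~1, suppose that either $x$ lies in some $5^+$-clause, or in at least two $4^+$-clauses, or some literal in $N(x,\mathcal{F})\cup N(\overline{x},\mathcal{F})$ has degree at most~$4$. In the first two situations the neighbor total is at least~$12$, and Lemma~\ref{dec-sum-5d3c} with $g=12,\,h=0$ together with the inequality~(\ref{ineq-2d5>w3}) yields $\xi^{(1)}_{S_1}+\xi^{(1)}_{S_2}\ge 2w_5+12\delta_5\ge 2w_5+w_3+10\delta_5$. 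In the third situation $g\ge 11$ and $h\ge 1$, and Lemma~\ref{dec-sum-5d3c} gives $\xi^{(1)}_{S_1}+\xi^{(1)}_{S_2}\ge 2w_5+11\delta_5+(w_3-\delta_5)=2w_5+w_3+10\delta_5$. With the $\min$ estimate, Lemma~\ref{lemma-bvec}(ii) produces the first stated vector $[w_5+4\delta_5,w_5+w_3+6\delta_5]$.

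In Case~2 none of the above holds: $x$ lies in exactly one $4^+$-clause, itself a $4$-clause, the other four clauses on $x,\overline{x}$ are $3$-clauses, and every neighbor has degree~$5$. Without loss of generality let $x$ be a $(2,3)$-literal, $xC_1$ the $4$-clause with $|C_1|=3$, and $xC_2,\overline{x}D_1,\overline{x}D_2,\overline{x}D_3$ the four $3$-clauses with $|C_2|=|D_j|=2$. Lemma~\ref{dec-assign2} immediately yields $\Delta_{S_1}\ge\xi^{(1)}_{S_1}\ge w_5+5\delta_5$ and $\Delta_{S_2}\ge\xi^{(1)}_{S_2}\ge w_5+6\delta_5$. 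The key claim is that Step~6 will apply in $\mathcal{F}'_{S_2=1}$, which allows the shift $\sigma$ from Step~6 to be credited to the second branch of Step~11. To prove the claim I would appeal to Lemma~\ref{structure-lemma}: the incidence graph $G_x$ has no matching of size~$2$, so its edges form a star (or $G_x$ has no edges at all). Writing $C_2=\{y_1,y_2\}$, whenever the star is not centered at $C_2$, neither $y_1$ nor $y_2$ (nor their negations) occurs in any $D_j$; then in $\mathcal{F}_{x=0}$ the clause $xC_2$ becomes the $2$-clause $\{y_1,y_2\}$ with both literals still of degree~$5$, and this $2$-clause survives R-Rules~1--10 (using Lemma~\ref{reduce-measure} to track any cascading reductions), so Step~6 applies. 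When the star is centered at $C_2$, I would run a symmetric argument on $C_1$: its three literals cannot share variables with any $D_j$, and I would show that either two of them end up in a $2$-clause generated by a subsequent R-Rule~9 or R-Rule~10 (again triggering Step~6), or an extra $w_3-1\ge\delta_5$ of measure is released through Lemma~\ref{reduce-measure}, which brings us back into the Case~1 bound. The resulting effective branching vector in Case~2 is $[w_5+5\delta_5,\,w_5+6\delta_5+\sigma]$, which has the same sum as $[w_5+4\delta_5,\,w_5+7\delta_5+\sigma]$ but is more balanced, and is therefore covered by it.

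The main obstacle is the structural analysis in Case~2, in particular the sub-configuration where $G_x$ is a star through $C_2$: one has to verify for every admissible layout of the $D_j$'s that either a $2$-clause of two $5$-literals persists in $\mathcal{F}'_{S_2=1}$ (so that Step~6 fires and the shift is valid), or enough extra measure is released by reductions to fall back into Case~1. Lemmas~\ref{structure-lemma}, \ref{2cls-structure}, and~\ref{reduce-measure} are exactly the tools tailored for this situation; after exhausting the star configurations the conclusion follows from Lemma~\ref{lemma-bvec}.
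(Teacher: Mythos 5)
Your overall strategy matches the paper's (split on easy structural conditions that directly give $2w_5 + w_3 + 10\delta_5$, versus a hard residual case where you show that Step~6 fires after the branching and the shift $\sigma$ can be credited), but the structural analysis in your Case~2 has a genuine gap.

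Your classification of $G_x$ as a star (``no matching of size~$2$ $\Rightarrow$ all edges share a vertex'') is fine, but the ensuing claim that ``whenever the star is not centered at $C_2$, neither $y_1$ nor $y_2$ (nor their negations) occurs in any $D_j$'' is false. Take $G_x$ with edges $\{C_1{-}D_1,\ C_2{-}D_1\}$: this is a star centered at $D_1$, not at $C_2$, yet $C_2$ has an incident edge, so some literal of $var(y_1)$ or $var(y_2)$ does appear in $D_1$. In that configuration, after setting $x=0$ the clause $C_2$ becomes a $2$-clause whose literals may no longer both have degree~$5$ (one of them loses an occurrence because $\overline{x}D_1$ is removed). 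The correct observation in this sub-case is that $D_2$ and $D_3$ are degree-$0$ vertices of $G_x$, and under the Case~2 hypotheses they survive as $2$-clauses of two $5$-literals in $\mathcal{F}'_{x=1}$, so Step~6 applies in the \emph{first} branch rather than the second. In other words, you cannot always target $\mathcal{F}'_{S_2=1}$: you must look for the persisting $2$-clause in $\mathcal{F}'_{x=1}$ or $\mathcal{F}'_{x=0}$ depending on where the degree-$0$ vertex sits. Your treatment of the ``star centered at $C_2$'' sub-case is also hand-waved and does not work as stated: $C_1$ becomes a $3$-clause after $x=0$, so it never directly gives a $2$-clause, and R-Rule~9 or~10 do not manufacture a $2$-clause of two $5$-literals out of it. What actually saves this sub-case is that $C_2$ has at most two incident edges, hence at least one $D_j$ has degree~$0$ and yields the desired $2$-clause in $\mathcal{F}'_{x=1}$; the paper packages this (together with the single-edge ambiguity above) via Lemma~\ref{2cls-structure}, which you should invoke here instead.

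Two further points. First, your Case~1 never accounts for the sub-case where some variable $y$ has both $y$ and $\overline{y}$ in $N(x,\mathcal{F})\cup N(\overline{x},\mathcal{F})$ (i.e.\ $t_{5,2}(x)+t_{5,2}(\overline{x})\ge 1$). That configuration therefore falls into your Case~2, but the paper explicitly excludes it from the structural analysis because it already gives the term $(\delta_4-\delta_5)=w_3-\delta_5$ in Lemma~\ref{dec-sum-5d3c} and hence the first branching vector directly; if you leave it in Case~2 you would at minimum need to re-justify the degree bookkeeping used there. Second, the inequality you assert, ``$w_3-1\ge\delta_5$,'' contradicts the paper's weight assumption $w_3-\delta_5<1$; what is actually used (and suffices) is $\delta_5\ge 1$, which makes $2w_5+11\delta_5+(w_3-1)\ge 2w_5+10\delta_5+w_3$.

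Finally, you need to make explicit the dichotomy the paper draws between ``some reduction rule applies on $\mathcal{F}_{x=1}$ or $\mathcal{F}_{x=0}$'' (use Lemma~\ref{reduce-measure}) and ``both are already reduced'' (use the $G_x$ argument to invoke Step~6). Your parenthetical ``using Lemma~\ref{reduce-measure} to track any cascading reductions'' conflates the two regimes rather than separating them; the shift $\sigma$ is only legitimately claimed in the second regime, where the identified $2$-clause demonstrably survives into the reduced child formula.
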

\begin{proof}
    Let $m_4\geq 1$ be the number of $4^+$-clauses containing literal $x$ or $\overline{x}$, then the number of $3$-clauses containing $x$ or $\overline{x}$ is $5-m_4$. We have
    \begin{align*}
        \sum_{3\leq i\leq 5}(n_i(x)+n'_i(x))\geq 2(5-m_4)+3m_4\geq 10+m_4.
    \end{align*}
    Next, we consider several cases. By Lemma~\ref{prop-aft-s7}, we have $\min(\Delta_{S_1}, \Delta_{S_2})\geq w_5 + 4\delta_5$ for all the following cases.
    So we focus on analyzing $\Delta_{S_1}+\Delta_{S_2}$.

    \textbf{Case 1.} There is a variable $y$ such that both $y$ and $\overline{y}$ appear in $N(x, \mathcal{F})$ or $N(\overline{x}, \mathcal{F})$.

    Note that after Step~8, there is at most one $4^-$-literal in $N(x, \mathcal{F})$ or $N(\overline{x}, \mathcal{F})$, and so we have $t_{5, 2}(x)+t_{5, 2}(\overline{x})\geq 1$.
    By Lemma~\ref{dec-sum-5d3c} with $g=10+m_4$ and $h=1$, we have
    \begin{align*}
        \xi_{S_1}^{(1)}+\xi_{S_2}^{(1)}
        \geq 2w_5+(10+m_4)\delta_5 + 1\cdot(\delta_4-\delta_5)
        \geq    2w_5+w_3+10\delta_5.
    \end{align*}
    By Lemma~\ref{lemma-bvec}, the branching vector of this case is covered by
    \[
        [w_5+4\delta_5, w_5+w_3+6\delta_5].
    \]

    \textbf{Case 2.} There are $4^-$-literals in $N(x, \mathcal{F})$ or $N(\overline{x}, \mathcal{F})$, i.e., $\sum_{3\leq i\leq 4}(n_i(x) + n_i(\overline{x}))\geq 1$. By Lemma~\ref{dec-sum-5d3c} with $g=10+m_4$ and $h=1$, we have
    \begin{align*}
        \Delta_{S_1}+\Delta_{S_2}
        \geq \xi_{S_1}^{(1)}+\xi_{S_2}^{(1)}
         & \geq 2w_5 + (10+m_4)\delta_5 + 1\cdot(w_3-\delta_5)  \\
         & \geq 2w_5 + w_3 + 10\delta_5.
    \end{align*}
    By Lemma~\ref{lemma-bvec}, the branching vector of this case is covered by
    \[
        [w_5+4\delta_5, w_5+w_3+6\delta_5].
    \]

    \textbf{Case 3.} There are no $4^-$-literals in $N(x, \mathcal{F})$ or $N(\overline{x}, \mathcal{F})$ and no variables $y$ such that both $y$ and $\overline{y}$ appear in $N(x, \mathcal{F})$ or $N(\overline{x}, \mathcal{F})$, i.e., $\sum_{3\leq i\leq 4}(n_i(x) + n_i(\overline{x}))= 0$ and $t_{5, 2}(x)+t_{5, 2}(\overline{x})= 0$.

    \textbf{Case 3.1.} There are at least two $4^+$-clauses containing $x$ or $\overline{x}$, i.e., $m_4\geq 2$. By Lemma~\ref{dec-sum-5d3c} with $g=10+m_4$, we have
    \begin{align*}
        \Delta_{S_1}+\Delta_{S_2}
        \geq \xi_{S_1}^{(1)}+\xi_{S_2}^{(1)}
         \geq 2w_5+(10+m_4)\delta_5
         \geq 2w_5+12\delta_5.
    \end{align*}
    Since $w_3<2\delta_5$, the branching vector of this case is covered by that of Case 1.

    \textbf{Case 3.2.} There is only one $4^+$-clause containing variable $x$, i.e., $m_4=1$. Similar to case 2.1, by Lemma~\ref{dec-sum-5d3c} with $g=10+m_4$, we have
    \begin{align*}
        \xi_{S_1}^{(1)}+\xi_{S_2}^{(1)} \geq 2w_5+(10+m_4)\delta_5 = 2w_5+11\delta_5.
    \end{align*}

    Next, we consider $\mathcal{F}_{x=1}'$ and $\mathcal{F}_{x=0}'$.

    \textbf{Case 3.2.1.} Some reduction rules can be applied on $\mathcal{F}_{x=1}$ or $\mathcal{F}_{x=0}$.

    By Lemma~\ref{reduce-measure} the total measure will further decreases by at least $w_3-1$ and we have $\Delta_{S_1}+\Delta_{S_2} \geq  2w_5+11\delta_5 + w_3-1$.
    Since $\delta_5\geq 1$, the branching vector of this case is covered by that of Case 1.

    \textbf{Case 3.2.2.} $\mathcal{F}_{x=1}'=\mathcal{F}_{x=1}$ and $\mathcal{F}_{x=0}'=\mathcal{F}_{x=0}$.

    We show that we can apply Step~6 (Section \ref{step-6}) on either $\mathcal{F}_{x=1}'$ or $\mathcal{F}_{x=0}'$ to use the saved shift $\sigma$.
    Look at $G_{x}$ and consider the following two cases.

    (1) There is a vertex of degree at least 3 in $G_{x}$.
    Recall that the five clauses containing literal $x$ are $xC_1, xC_2, \overline{x}D_1, \overline{x}D_2$, and $\overline{x}D_3$.
    If there is a vertex in $G_{x}$ with degree at least $3$, the corresponding clause of this vertex must be $C_1$ or $C_2$, w.l.o.g, let us assume that the clause is $C_1$.
    By the condition of Case 3 and $\mathcal{F}$ being reduced, for a literal $y\in C_1$, at most one of $y$ and $\overline{y}$ will appear in $D_1,D_2,D_3$.
    So if $C_1$ is a degree-3 vertex in $G_{x}$, it must contain at least three different literals.
    By the condition of Case 3.2, we know $xC_1$ is the unique $4^+$-clause containing variable $x$, and so $|C_2|=2$.
    By Lemma~\ref{structure-lemma}, in $G_x$ the degree of vertex $C_2$ is $0$.
    There are no $4^-$-literals in $C_2$ by the condition of Case 3. Thus, in $\mathcal{F}_{x=0}'$, clause $C_2$ contains two $5$-variables, and so we can apply Step~6 on $\mathcal{F}_{x=0}'$.

    (2) All vertices in $G_{x}$ have a degree of at most $2$.
    By Lemma~\ref{2cls-structure} there are at least two vertices in $G_{x}$ with degree $0$.
    By the condition of Case 3, there are no $4^-$-literals in those clauses, and  then we will get some $2$-clause containing two $5$-literals in $\mathcal{F}_{x=1}'$ or $\mathcal{F}_{x=0}'$ and we can further apply Step~6 on $\mathcal{F}_{x=1}'$ or $\mathcal{F}_{x=0}'$.

    Thus, we further get
    \begin{align*}
        \Delta_{S_1}+\Delta_{S_2} \geq \xi_{S_1}^{(1)}+\xi_{S_2}^{(1)} + \sigma = 2w_5+11\delta_5 + \sigma.
    \end{align*}

    Note that $\min(\Delta_{S_1}, \Delta_{S_2})\geq w_5+4\delta_5$. By Lemma~\ref{lemma-bvec}, the branching vector of this case is covered by
    \[
        [w_5+4\delta_5, w_5+7\delta_5 + \sigma].
    \]

    In summary, the branching vector is covered by
    \begin{equation*}
        [w_5+4\delta_5, w_5+w_3+6\delta_5]\text{~or~}[w_5+4\delta_5, w_5+7\delta_5 + \sigma].
    \end{equation*}
\end{proof}

\subsection{Step~12}\label{step-12}
\textbf{Step 12.} If there is a clause containing both a $5$-literal $x$ and a $4^-$-literal, return SAT($\mathcal{F}_{x=1}$)$\vee$SAT($\mathcal{F}_{x=0}$).

After Step~11, all clauses containing $x$ or $\overline{x}$ are $3$-clauses.
In this step, we branch on a $5$-literal $x$ such that there is one $4^-$-literal in $N(x, \mathcal{F})$. The two sub-branches are: $S_1=\{x\}$; $S_2=\{\overline{x}\}$.
Similar to Step~11, the shift $\sigma > 0$ saved in Step~6 (Section \ref{step-6}) will be used in this step.
We have the following result:
\begin{lemma}
    The branching vector generated by Step~12 is covered by
    \begin{equation}
        [w_5+4\delta_5, w_5+4\delta_5+2w_3] \text{~or~} [w_5+4\delta_5, w_5+w_3+5\delta_5 + \sigma].
    \end{equation}
\end{lemma}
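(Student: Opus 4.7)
The analysis parallels that of Step~11 but in a cleaner setting: after Step~11 every clause containing $x$ or $\overline{x}$ is a $3$-clause, so $\sum_{3\leq i\leq 5}(n_i(x)+n_i(\overline{x}))=10$, and Step~8 together with the hypothesis of Step~12 forces exactly one of these neighbours to be a $4^-$-literal, giving $h=\sum_{3\leq i\leq 4}(n_i(x)+n_i(\overline{x}))=1$. By Lemma~\ref{prop-aft-s7} we already have $\min(\Delta_{S_1},\Delta_{S_2})\geq w_5+4\delta_5$, so I only need to lower bound $\Delta_{S_1}+\Delta_{S_2}$ and apply Lemma~\ref{lemma-bvec}.

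The plan splits on whether some variable has both polarities in $N(x,\mathcal{F})$ or $N(\overline{x},\mathcal{F})$. In Case~1, where $t_{5,2}(x)+t_{5,2}(\overline{x})\geq 1$, Lemma~\ref{dec-sum-5d3c} applied with $g=10$ and $h=1$ and the non-vanishing $t_{5,2}$ term directly yields
\[
\xi_{S_1}^{(1)}+\xi_{S_2}^{(1)}\geq 2w_5+10\delta_5+(w_3-\delta_5)+(\delta_4-\delta_5)=2w_5+8\delta_5+2w_3,
\]
using $\delta_4=w_3$; together with Lemma~\ref{lemma-bvec} this gives the first branching vector $[w_5+4\delta_5,\,w_5+4\delta_5+2w_3]$. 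In Case~2, where $t_{5,2}(x)+t_{5,2}(\overline{x})=0$, the same lemma gives only $\xi_{S_1}^{(1)}+\xi_{S_2}^{(1)}\geq 2w_5+9\delta_5+w_3$, which must be supplemented.

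Case~2 is then split on whether a reduction rule fires on $\mathcal{F}_{x=1}$ or $\mathcal{F}_{x=0}$. In sub-case~(2a), where some reduction does fire, Lemma~\ref{reduce-measure} adds $w_3-1$ to the measure drop, and since $\delta_5\geq 1$ the resulting bound is already covered by the Case~1 branching vector. In sub-case~(2b), where neither side triggers a reduction, the plan is to argue that Step~6 applies on $\mathcal{F}_{x=1}'$ or $\mathcal{F}_{x=0}'$, so the shift $\sigma$ saved in Step~6 can be credited to this step, giving the second branching vector $[w_5+4\delta_5,\,w_5+w_3+5\delta_5+\sigma]$.

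The main obstacle is sub-case~(2b). Writing the five clauses containing $x$ or $\overline{x}$ as $xC_1,xC_2,\overline{x}D_1,\overline{x}D_2,\overline{x}D_3$ (all of length $3$), I plan to inspect the bipartite clause-clause incidence graph $G_x$ on $\{C_1,C_2\}\cup\{D_1,D_2,D_3\}$. Lemma~\ref{structure-lemma} forbids a matching of size $2$, so either some $C_i$ has graph-degree $3$ (and then the other $C_i$ is isolated in $G_x$), or all graph-degrees are at most $2$ and Lemma~\ref{2cls-structure} supplies at least two isolated vertices. A clause whose vertex is isolated in $G_x$ shares no variable with clauses on the opposite side, so its two literals retain their full degrees under the assignment that turns it into a $2$-clause. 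Since only one of the ten neighbours is a $4^-$-literal, an isolated clause free of that literal can be selected, and on the branch producing it we obtain a $2$-clause of two $5$-literals, so Step~6 fires on that side; the delicate piece is the configuration where the $4^-$-literal is forced to occupy the sole isolated clause, but a direct case check then shows either $t_{5,2}(x)+t_{5,2}(\overline{x})\geq 1$ (so we are in Case~1) or some reduction rule must apply (so we are in sub-case~(2a)), closing the argument.
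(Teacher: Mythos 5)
Your proposal follows the paper's Step~12 analysis essentially step for step: the same split on whether some $5$-variable has both polarities among the neighbours of $x$ or $\overline{x}$ (driving the $t_{5,2}$ term in Lemma~\ref{dec-sum-5d3c}), the same application of Lemma~\ref{reduce-measure} when a reduction fires, and the same appeal to $G_x$ and Lemma~\ref{2cls-structure} to manufacture a $2$-clause of two $5$-literals so the shift $\sigma$ from Step~6 can be credited. The intermediate bounds $2w_5+8\delta_5+2w_3$ and $2w_5+w_3+9\delta_5$ agree term for term with the paper's.

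The one loose end is in sub-case~(2b). You import the ``some $C_i$ has graph-degree $3$'' alternative from the Step~11 analysis, and when the lone isolated vertex would host the $4^-$-literal you defer to ``a direct case check.'' That check is never actually supplied, and it is the load-bearing step for that branch. In fact, under the Case~2 hypothesis the degree-$3$ alternative cannot arise at all, and the paper observes this up front: after Step~11 every relevant clause is a $3$-clause, so $C_i$ has only two non-$x$ literals, and if one of their variables appeared in two distinct $D_j$ then Lemma~\ref{prop4}(\ref{prop4-1}) would force both polarities of that variable into $N(\overline{x},\mathcal{F})$, i.e.\ $t_{5,2}(\overline{x})\geq 1$ (it must be a $5$-variable by Step~8), contradicting $t_{5,2}(x)+t_{5,2}(\overline{x})=0$. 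Hence every vertex of $G_x$ has degree at most $2$, Lemma~\ref{2cls-structure} yields two isolated clauses, at most one can contain the unique $4^-$-literal, and the argument closes without the detour. Your unproved ``direct case check'' would, if carried out, verify exactly this (a degree-$3$ vertex forces Case~1), so the overall proof is salvageable and follows the paper's route; but as written it leaves an assertion unsubstantiated precisely where the paper gives a one-line structural argument.
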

\begin{proof}
    By the condition of this step, we have $\sum_{3\leq i\leq 4}(n_i(x)+n_i(\overline{x}))\geq 1$.
    Next, we consider two cases. For all of the following cases, we have $\min(\Delta_{S_1}, \Delta_{S_2})\geq w_5 + 4\delta_5$ by Lemma~\ref{prop-aft-s7}. So we focus on $\Delta_{S_1}+\Delta_{S_2}$.

    \textbf{Case 1.} There is a variable $y$ such that both $y$ and $\overline{y}$ appear in $N(x, \mathcal{F})$ or $N(\overline{x}, \mathcal{F})$.

    Note that after Step~8, there is at most one $4^-$-literal in $N(x, \mathcal{F})$ or $N(\overline{x}, \mathcal{F})$, and so we have $t_{5, 2}(x)+t_{5, 2}(\overline{x})\geq 1$. By Lemma~\ref{dec-sum-5d3c} and $\delta_4=w_3$, we have
    \begin{align*}
        \xi_{S_1}^{(1)}+\xi_{S_2}^{(1)}
        \geq 2w_5+10\delta_5 + 1\cdot(w_3-\delta_5) + 1\cdot(\delta_4-\delta_5)
        =    2w_5+8\delta_5+2w_3.
    \end{align*}
    By Lemma~\ref{lemma-bvec}, the branching vector of this case is covered by
    \[
        [w_5+4\delta_5, w_5+4\delta_5+2w_3].
    \]

    \textbf{Case 2.} There are no variables $y$ such that both $y$ and $\overline{y}$ appear in $N(x, \mathcal{F})$ or $N(\overline{x}, \mathcal{F})$.

    By Lemma~\ref{dec-sum-5d3c}, we first have
    \begin{align*}
        \xi_{S_1}^{(1)}+\xi_{S_2}^{(1)}
        \geq 2w_5+10\delta_5 + 1\cdot(w_3-\delta_5)
        =    2w_5+w_3+9\delta_5.
    \end{align*}

    Similar to Step~11, we consider $\mathcal{F}_{x=1}'$ and $\mathcal{F}_{x=0}'$.

    \textbf{Case 2.1.}  Some reduction rules can be applied on $\mathcal{F}_{x=1}$ or $\mathcal{F}_{x=0}$.

    By Lemma~\ref{reduce-measure} the total measure will further decreases by at least $w_3-1$ and we have $\Delta_{S_1}+\Delta_{S_2} \geq  2w_5+9\delta_5 + 2w_3-1$.
    By Lemma~\ref{lemma-bvec}, the branching vector of this case is covered by
    \[
        [w_5+4\delta_5, w_5+5\delta_5+2w_3-1].
    \]
    Since $\delta_5\geq 1$, this branching vector is covered by that of Case 1.

    \textbf{Case 2.2.} $\mathcal{F}_{x=1}'=\mathcal{F}_{x=1}$ and $\mathcal{F}_{x=0}'=\mathcal{F}_{x=0}$.

    We look at the auxiliary graph $G_x$.
    In $G_{x}$, all the vertices have a degree of at most $2$ since all clauses containing $x$ or $\overline{x}$ are $3$-clauses and no variables $y$ such that both $y$ and $\overline{y}$ appear in $N(x, \mathcal{F})$ or $N(\overline{x}, \mathcal{F})$.
    With Lemma~\ref{2cls-structure}, there are at least two vertices with degree $0$ in $G_x$.
    Let the set of corresponding clauses of them be $E$.
    Since there may be at most one clause in $E$ that contains a $4^-$-literal, there must exist a $2$-clause containing two $5$-literals in $\mathcal{F}_{x=1}'$ or $\mathcal{F}_{x=0}'$, and thus we can apply Step~6 on $\mathcal{F}_{x=1}'$ or $\mathcal{F}_{x=0}'$. So we further get
    \begin{align*}
        \Delta_{S_1}+\Delta_{S_2}\geq 2w_5+w_3+9\delta_5+\sigma.
    \end{align*}
    By Lemma~\ref{lemma-bvec}, the branching vector of this case is covered by
    \[
        [w_5+4\delta_5, w_5+w_3+5\delta_5 + \sigma].
    \]

    In summary, the branching vector of this step is covered by
    \begin{equation*}
        [w_5+4\delta_5, w_5+4\delta_5+2w_3] \text{~or~} [w_5+4\delta_5, w_5+w_3+5\delta_5 + \sigma].
    \end{equation*}

\end{proof}

\subsection{Step~13}\label{step-13}
\textbf{Step 13.} If there are still some 5-literals, then $\mathcal{F}=\mathcal{F}_{5} \wedge \mathcal{F}_{\leq4}$, where $\mathcal{F}_{5}$ is a 3-CNF containing only 5-literals and $\mathcal{F}_{\leq 4}$ contains only 3/4-literals.

In this step, the literals of all $5$-variables form a $3$-SAT instance $\mathcal{F}_{5}$. We apply the $O^*(1.3279^n)$-time algorithm in~\cite{DBLP:conf/icalp/Liu18} for 3-SAT to solve our problem, where $n$ is the number of variables in the instance. Since $w_5=5$, we have that $n=\mu(\mathcal{F}_5)/w_5=\mu(\mathcal{F}_5)/5$. So the running time for this part will be  $$O^*(1.3279^{\mu(\mathcal{F}_5)/w_5})=O^*(1.0584^{\mu(\mathcal{F}_{5})}).$$

\subsection{Step~14}\label{step-14}
\textbf{Step 14.} If there is a $(1,3)$-literal $x$ (assume that $xC$ is the unique clause containing $x$), return SAT($\mathcal{F}_{x=1\And C=0}$)$\vee$SAT($\mathcal{F}_{x=0}$).

After Step~13, all literals in $\mathcal{F}$ are $4^-$-literals. In this step, we branch on a $(1,3)$-literal $x$. The two sub-branches are: $S_1=\{x\}$; $S_2=\{\overline{x}\}$. We have the following result:
\begin{lemma}
    The branching vector generated by Step~14 is covered by
    \begin{equation}
        [w_4+2w_3, w_4+6\delta_4].
    \end{equation}
\end{lemma}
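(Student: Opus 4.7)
The plan is to invoke Lemma~\ref{dec-1ltr} directly with $d = 4$, since Step~14 matches its hypothesis exactly: $x$ is a $(1,3)$-literal, $xC$ is the unique clause containing $x$, and the two sub-branches are $S_1 = \{x\} \cup \overline{C}$ and $S_2 = \{\overline{x}\}$. After Step~13 all remaining literals are $4^-$-literals, so $\mathcal{F}$ is a reduced CNF-formula of degree at most $4$, and the lemma applies.

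Specializing Lemma~\ref{dec-1ltr} to $d = 4$ gives
\[
\Delta_{S_1} + \Delta_{S_2} \geq 2w_4 + 3w_3 + 5\delta_4
\qquad \text{and} \qquad
\min(\Delta_{S_1}, \Delta_{S_2}) \geq w_4 + \min(2w_3, 3\delta_4).
\]
Now I would use the weight relations (\ref{weight1}) and (\ref{eq-w3-d3-d4}), namely $w_4 = 2w_3$ and $\delta_4 = w_3$, to compare the two quantities inside the $\min$: we have $3\delta_4 = 3w_3 > 2w_3$, so the minimum is attained at $2w_3$, giving
\[
\min(\Delta_{S_1}, \Delta_{S_2}) \geq w_4 + 2w_3.
\]

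Finally, I would apply Lemma~\ref{lemma-bvec}(ii) with $a = w_4 + 2w_3$ and $b = 2w_4 + 3w_3 + 5\delta_4$. The resulting branching vector is covered by
\[
[\,a,\; b - a\,] \;=\; [\,w_4 + 2w_3,\; w_4 + w_3 + 5\delta_4\,].
\]
Using $\delta_4 = w_3$ once more rewrites the second entry as $w_4 + 6\delta_4$, yielding exactly the claimed bound $[w_4 + 2w_3, w_4 + 6\delta_4]$. There is essentially no obstacle here: the entire argument is a mechanical specialization of the already-proved general Lemma~\ref{dec-1ltr}, and the only subtlety is the routine check via (\ref{weight1})--(\ref{eq-w3-d3-d4}) that $2w_3 \leq 3\delta_4$ so that the $\min$ resolves the way we want.
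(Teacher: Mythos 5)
Your proof is correct and follows essentially the same route as the paper: specialize Lemma~\ref{dec-1ltr} to $d=4$ and then combine the two bounds via Lemma~\ref{lemma-bvec}. The only difference is that you spell out the check $2w_3 \leq 3\delta_4$ and the algebra $b-a = w_4 + w_3 + 5\delta_4 = w_4 + 6\delta_4$ explicitly, which the paper leaves implicit.
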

\begin{proof}
    By Lemma~\ref{dec-1ltr}, we have
    \begin{align*}
        \Delta_{S_1} + \Delta_{S_2} \geq 2w_4+3w_3+5\delta_4 = 2w_4+8w_3 \text{~~and~~}
        \min(\Delta_{S_1}, \Delta_{S_2})\geq w_4+2w_3.
    \end{align*}
    By Lemma~\ref{lemma-bvec}, the branching vector of this step is covered by
    \begin{equation*}
        [w_4+2w_3, w_4+6\delta_4].
    \end{equation*}
\end{proof}

\subsection{Step~15}\label{step-15}
\textbf{Step 15.} If there is a $(2,2)$-literal $x$, return SAT($\mathcal{F}_{x=1}$)$\vee$SAT($\mathcal{F}_{x=0}$).

In this step, the two sub-branches are: $S_1=\{x\}$; $S_2=\{\overline{x}\}$. We have the following result:
\begin{lemma}
    The branching vector generated by Step~15 is covered by
    \begin{equation}
        [w_4+2\delta_4, w_4+6\delta_4].
    \end{equation}
\end{lemma}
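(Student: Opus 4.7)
The plan is to mirror the proof of Step~14, but here the single-literal assignment is genuinely weaker because $x$ is a $(2,2)$-literal rather than a $(1,3)$-literal, so the stronger branching of Lemma~\ref{dec-1ltr} does not apply. Instead, I would invoke Lemma~\ref{coro-assign2} with $d=4$ and $j=2$ to get $\Delta_{S_1}\geq \xi^{(1)}_{S_1}\geq w_4+2\delta_4$, and symmetrically $\Delta_{S_2}\geq w_4+2\delta_4$. This already supplies the first coordinate of the claimed branching vector, namely $\min(\Delta_{S_1},\Delta_{S_2})\geq w_4+2\delta_4$.

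For the sum, I would apply Lemma~\ref{dec-sum} with $d=4$ to bound $\Delta_{S_1}^*+\Delta_{S_2}^*$ from below by
\[
2w_4+8\delta_4+(n'_3(x)+n'_3(\overline{x}))(2w_3-2\delta_4)+(n'_4(x)+n'_4(\overline{x}))(w_4-2\delta_4).
\]
The key observation is that both correction coefficients vanish identically under our weight constraints: (\ref{eq-w3-d3-d4}) gives $w_3=\delta_4$, so $2w_3-2\delta_4=0$, and (\ref{weight1}) gives $w_4=2w_3=2\delta_4$, so $w_4-2\delta_4=0$. Hence the right-hand side collapses cleanly to $2w_4+8\delta_4$, independent of how many $2$-clauses happen to be incident to $x$ or $\overline{x}$.

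Finally, combining $\min(\Delta_{S_1},\Delta_{S_2})\geq w_4+2\delta_4$ with $\Delta_{S_1}+\Delta_{S_2}\geq 2w_4+8\delta_4$ and invoking part (ii) of Lemma~\ref{lemma-bvec} with $a=w_4+2\delta_4$ and $b=2w_4+8\delta_4$ yields the covering branching vector $[w_4+2\delta_4,\; w_4+6\delta_4]$. There is no real obstacle: everything is a direct substitution into the general lower bounds already established, and the simultaneous vanishing of both correction terms at $d=4$ is what allows this final step to close with such a short argument, without any need to examine the local clause structure around $x$.
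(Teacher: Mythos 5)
Your proof is correct and follows essentially the same route as the paper: Lemma~\ref{coro-assign2} for the per-branch bound, Lemma~\ref{dec-sum} at $d=4$ for the sum, and Lemma~\ref{lemma-bvec}(ii) to assemble the branching vector. The paper simply writes the final equality $=2w_4+8\delta_4$ without spelling out that both correction coefficients vanish under $w_3=\delta_4$ and $w_4=2\delta_4$; your making this explicit is a helpful clarification, not a deviation.
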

\begin{proof}
    Since both $x$ and $\overline{x}$ are $(2, 2)$-literals, by Lemma~\ref{coro-assign2} we have
    \begin{align*}
        \Delta_{S_1}\geq \xi_{S_1}^{(1)} =w_4+2\delta_4 \text{~and~} \Delta_{S_2}\geq \xi_{S_2}^{(1)}\geq w_4+2\delta_4.
    \end{align*}

    By Lemma~\ref{dec-sum}, we have
    \begin{align*}
        \Delta_{S_1}+\Delta_{S_2}
        \geq \Delta_{S_1}^*+\Delta_{S_2}^*
        & \geq 2w_4 + 2\cdot 4\delta_4 + (n'_3(x)+n'_3(\overline{x}))(2w_3-2\delta_4)       \\
        & \quad + (n_4'(x)+n_4'(\overline{x}))(w_4-2\delta_4)                      \\
        & = 2w_4+8\delta_4.
    \end{align*}
    By Lemma~\ref{lemma-bvec}, we know that the branching vector is covered by
    \begin{equation*}
        [w_4+2\delta_4, w_4+6\delta_4].
    \end{equation*}
\end{proof}

\subsection{Step~16}\label{step-16}
\textbf{Step 16.} Apply the algorithm by Wahlstr{\"{o}}m~\cite{DBLP:conf/sat/Wahlstrom05} to solve the instance.

All variables are $3$-variables in this step. We apply the $O^*(1.1279^n)$-time algorithm by Wahlstr{\"{o}}m~\cite{DBLP:conf/sat/Wahlstrom05} to solve this special case, where $n$ is the number of variables. For this case,
%As all variables are $3$-variables,
we have that $n=\mu(\mathcal{F})/w_3$. So the running time of this part is
$$O^*((1.1279^{1/w_3})^{\mu(\mathcal{F})}).$$

\section{The Final Result}
Each of the above branching vectors above will generate a constraint in our quasiconvex program to solve the best value for $w_3$, $w_4$, and $\sigma$.
Let $\alpha_i$ denote the branching factor for branching vector ($i$) where $10\leq i\leq 21$.
We want to find the minimum value $\alpha$ such that $\alpha \geq \alpha_i$ and $\alpha \geq  1.1279^{1/w_3}$ (generated by Step~16)
under the assumptions (\ref{weight1}) and (\ref{weight2}).
By solving this quasiconvex program, we get that $\alpha=1.0638$ by letting $w_3=1.94719$, $w_4=2w_3=3.89438$, and $\sigma=0.86108$.
Note that $\alpha= 1.0638$ is greater than $1.0584$, which is the branching factor generated in Step~13. So $1.0638$ is the worst branching factor in the whole algorithm. By (\ref{measure-leq-L}), we get the following result.
\begin{theorem}
    Algorithm \ref{main-algorithm} solves the SAT problem in $O^*(1.0638^L)$ time.
\end{theorem}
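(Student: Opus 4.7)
The plan is to combine all per-step branching analyses from Sections~\ref{step-3}--\ref{step-15} with the two black-box subroutine bounds from Sections~\ref{step-13} and~\ref{step-16}, and solve the resulting quasiconvex program in $w_3$, $w_4 = 2w_3$, and the shift $\sigma \ge 0$. For each branching step the preceding subsection has already furnished a branching vector $[a_1,a_2]$, and the corresponding branching factor $\tau(a_1,a_2)$ is the largest root of $1 = x^{-a_1} + x^{-a_2}$ as recalled before Lemma~\ref{lemma-bvec}. By Lemma~\ref{Rpolytime} the reduction phase at every node of the search tree takes polynomial time, and every branching step inspects only local information around one chosen literal, so the total running time is $O^*(\alpha^{\mu(\mathcal{F})})$ where $\alpha$ is the maximum branching factor across all branching steps of the algorithm.

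Two further constraints need to be folded in. Step~\ref{step-13} invokes Liu's 3-SAT algorithm~\cite{DBLP:conf/icalp/Liu18} on a subformula $\mathcal{F}_5$ whose variables all have weight $w_5 = 5$, contributing $O^*(1.3279^{n_5}) = O^*(1.0584^{\mu(\mathcal{F}_5)})$ and hence forcing $\alpha \ge 1.0584$. Step~\ref{step-16} invokes Wahlstr{\"o}m's algorithm~\cite{DBLP:conf/sat/Wahlstrom05} on the residual 3-regular instance, contributing $O^*((1.1279^{1/w_3})^{\mu(\mathcal{F})})$ and hence forcing $\alpha \ge 1.1279^{1/w_3}$. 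The shift $\sigma$ must be accounted for consistently: Step~\ref{step-6} is analysed with $\sigma$ \emph{subtracted} from both components of its vector, while Steps~\ref{step-11} and~\ref{step-12} include $\sigma$ \emph{added} to the bottleneck second component; the structural arguments given in Sections~\ref{step-11} and~\ref{step-12} (based on Lemmas~\ref{structure-lemma} and~\ref{2cls-structure}, and on the auxiliary graph $G_x$) guarantee that whenever one of those bad branches actually occurs, the sub-instance produced admits Step~\ref{step-6} next, so the amortisation closes.

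Collecting every branching vector as a constraint $\alpha \ge \tau(a_1,a_2)$ together with the two subroutine constraints and the weight assumptions (\ref{weight1})--(\ref{weight2}), I would numerically minimise $\alpha$. Setting $w_3 = 1.94719$, $w_4 = 3.89438$, and $\sigma = 0.86108$, a direct check shows that every recurrence gives a branching factor at most $\alpha = 1.0638$, that $1.1279^{1/w_3} \le 1.0638$, and that $1.3279^{1/5} \approx 1.0584 < 1.0638$, so the bottleneck sits among the branching vectors rather than among the base-case calls. Applying $\mu(\mathcal{F}) \le L(\mathcal{F})$ from (\ref{measure-leq-L}) then yields a total running time of $O^*(1.0638^{\mu(\mathcal{F})}) = O^*(1.0638^{L})$. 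The main obstacle is the numerical optimisation: because Steps~\ref{step-11} and~\ref{step-12} each split into two alternative branching vectors (one without and one with the shift $\sigma$), the optimiser must choose $\sigma$ so as to simultaneously balance the Step~\ref{step-6} vector against both "with-shift" variants, while still respecting the structural constraints (\ref{weight1})--(\ref{weight2}) on the weights; the certificate $(w_3,w_4,\sigma) = (1.94719,\,3.89438,\,0.86108)$ can then be verified by plugging into each individual branching factor.
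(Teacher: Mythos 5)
Your proposal is correct and follows essentially the same route as the paper: collect all branching vectors (with the shift $\sigma$ subtracted in Step~\ref{step-6} and added in Steps~\ref{step-11} and~\ref{step-12}), add the two subroutine constraints from Steps~\ref{step-13} and~\ref{step-16}, solve the quasiconvex program under assumptions (\ref{weight1})--(\ref{weight2}) to obtain $\alpha = 1.0638$ at $(w_3,w_4,\sigma) = (1.94719,\,3.89438,\,0.86108)$, and conclude via $\mu(\mathcal{F}) \le L(\mathcal{F})$. The only small imprecision is your remark that the bottleneck ``sits among the branching vectors rather than among the base-case calls'': Step~\ref{step-16} is in fact a tight bottleneck, with $1.1279^{1/w_3} \approx 1.0638$, as Table~\ref{tb-bv} indicates.
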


We also show the whole weight setting in Table \ref{tb-measure} and the branching vector of each step under the setting in Table \ref{tb-bv}.
\begin{table}[h]
    \begin{center}
        \caption{The weight setting}\label{tb-measure}
        \begin{tabular}{l|l}
            \hline
            $w_1=w_2=0$  ~~~ & ~~~ $\sigma=0.86108$                      \\
            $w_3=1.94719$ ~~~ & ~~~ $\delta_3=1.94719$                    \\
            $w_4=3.89438$ ~~~ & ~~~ $\delta_4=1.94719$                    \\
            $w_5=5$ ~~~                  & ~~~ $\delta_5=1.10562$        \\
            $w_i=i(i\geq 6)$ ~~~         & ~~~ $\delta_i=1(i\geq 6)$    \\
            \hline
        \end{tabular}
    \end{center}
\end{table}

\begin{table}[h]
    \renewcommand\arraystretch{1.2}
    \begin{center}
        \caption{The branching vector and factor for each step}\label{tb-bv}
        \begin{tabular}{l|c|l}
            \hline
            \textbf{Steps} & \textbf{Branching vectors}                             & \textbf{Factors}           \\
            \hline
            Step 3~~(Section~\ref{step-3}) & $[w_6+\delta_6, w_6+11\delta_6]$                       & $1.0636$                   \\
            \hline
            Step 4~~(Section~\ref{step-4}) & $[w_5+2w_3, w_5+w_3+7\delta_5]$                        & $1.0620$                        \\
            \hline
            Step 5~~(Section~\ref{step-5}) & $[w_5+2\delta_5, w_5+4w_3+4\delta_5]$                  & $1.0624$                   \\
            \hline
            Step 6~~(Section~\ref{step-6}) &$[w_5+3\delta_5-\sigma, 2w_5+2w_3+3\delta_5-\sigma]$    & $1.0633$                        \\
            \hline
            Step 7~~(Section~\ref{step-7}) & $[w_5+w_3+2\delta_5, w_5+w_3+6\delta_5]$               & $1.0638$~*                   \\
            \hline
            Step 8~~(Section~\ref{step-8}) & $[w_5+4\delta_5, w_5+2w_3+4\delta_5]$                  & $1.0636$                   \\
            \hline
            Step 9~~(Section~\ref{step-9}) & $[w_5+4\delta_5, w_5+\delta_4+6\delta_5]$                   & $1.0629$                   \\
            \hline
            Step 10~(Section~\ref{step-10}) & $[w_5+4\delta_5, w_5+w_4+6\delta_5]$                   & $1.0584$                   \\
            \hline
            Step 11~(Section~\ref{step-11}) & \makecell[c]{$[w_5+4\delta_5, w_5+w_3+6\delta_5]$\\$[w_5+4\delta_5, w_5+7\delta_5 + \sigma]$}              & \makecell[l]{$1.0629$\\$1.0629$}                   \\
            \hline
            Step 12~(Section~\ref{step-12}) & \makecell[c]{ $[w_5+4\delta_5, w_5+4\delta_5+2w_3]$\\$[w_5+4\delta_5, w_5+w_3+5\delta_5 + \sigma]$}          & \makecell[l]{$1.0636$\\$1.0635$}                        \\
            \hline
            Step 13~(Section~\ref{step-13}) & $O^*((1.3279^{1/w_5})^\mu)$                            & $1.0584$                   \\
            \hline
            Step 14~(Section~\ref{step-14}) & $[w_4+2w_3, w_4+6\delta_4]$                            & $1.0638$~*                   \\
            \hline
            Step 15~(Section~\ref{step-15}) & $[w_4+2\delta_4, w_4+6\delta_4]$                       & $1.0638$~*                   \\
            \hline
            Step 16~(Section~\ref{step-16}) & $O^*((1.1279^{1/w_3})^\mu)$                            & $1.0638$~*                   \\
            \hline
        \end{tabular}
    \end{center}
\end{table}

From Table \ref{tb-bv}, we can see that we have four bottlenecks (marked by *): Steps~7, 14, 15, and 16.
In fact, Steps~14, 15, and 16 have the same branching vector $[4w_3, 8w_3]$ under the assumption that $w_4=2w_3$ (for Step~14, the worst branching vector in~\cite{DBLP:conf/sat/Wahlstrom05} is $[4, 8]$).
The branching factor for these three steps will decrease if the value of $w_3$ increases.
On the other hand, the branching factor for Step~7 will decrease if the value of $w_3$ decreases.
We set the best value of $w_3$ to balance them.
If we can either improve Step~7 or improve Steps~9, 10, and 11 together, then we may get a further improvement.
However, the improvement is very limited, and several other bottlenecks will appear.

\section{Concluding Remarks}
In this paper, we show that the SAT problem can be solved in $O^*(1.0638^L)$ time, improving the previous bound in terms of the input length obtained more than ten years ago.
Nowadays, improvement becomes harder and harder. However, SAT is one of the most important problems in exact and parameterized algorithms, and the state-of-the-art algorithms are frequently mentioned in the literature.
For the techniques, although our algorithm, as well as most previous algorithms, is based on case analyses, we introduce a general analysis framework to get a neat and clear analysis. This framework can even be used to simplify the analysis for other similar algorithms based on the measure-and-conquer method.

\section*{Acknowledgements}
This work was supported by the National Natural Science Foundation of China (Grant No. 61972070).
An initial version of this paper was presented at the 24th international conference on theory and applications of Satisfiability testing (SAT 2021)~\cite{DBLP:conf/sat/PengX21}.

\bibliography{SATL}

\begin{thebibliography}{10}
\expandafter\ifx\csname url\endcsname\relax
  \def\url#1{\texttt{#1}}\fi
\expandafter\ifx\csname urlprefix\endcsname\relax\def\urlprefix{URL }\fi
\expandafter\ifx\csname href\endcsname\relax
  \def\href#1#2{#2} \def\path#1{#1}\fi

\bibitem{DBLP:conf/dimacs/CookM96}
S.~A. Cook, D.~G. Mitchell, Finding hard instances of the satisfiability
  problem: {A} survey, in: D.~Du, J.~Gu, P.~M. Pardalos (Eds.), Satisfiability
  Problem: Theory and Applications, Proceedings of a {DIMACS} Workshop,
  Piscataway, New Jersey, USA, March 11-13, 1996, Vol.~35 of {DIMACS} Series in
  Discrete Mathematics and Theoretical Computer Science, {DIMACS/AMS}, 1996,
  pp. 1--17.
\newblock \href {https://doi.org/10.1090/dimacs/035/01}
  {\path{doi:10.1090/dimacs/035/01}}.

\bibitem{DBLP:conf/stoc/Cook71}
S.~A. Cook, The complexity of theorem-proving procedures, in: M.~A. Harrison,
  R.~B. Banerji, J.~D. Ullman (Eds.), Proceedings of the 3rd Annual {ACM}
  Symposium on Theory of Computing, May 3-5, 1971, Shaker Heights, Ohio, {USA},
  {ACM}, 1971, pp. 151--158.
\newblock \href {https://doi.org/10.1145/800157.805047}
  {\path{doi:10.1145/800157.805047}}.

\bibitem{DBLP:series/faia/2009-185}
A.~Biere, M.~Heule, H.~van Maaren, T.~Walsh (Eds.), Handbook of Satisfiability,
  Vol. 185 of Frontiers in Artificial Intelligence and Applications, {IOS}
  Press, 2009.

\bibitem{DBLP:journals/jcss/ImpagliazzoP01}
R.~Impagliazzo, R.~Paturi, On the complexity of k-sat, J. Comput. Syst. Sci.
  62~(2) (2001) 367--375.
\newblock \href {https://doi.org/10.1006/jcss.2000.1727}
  {\path{doi:10.1006/jcss.2000.1727}}.

\bibitem{DBLP:conf/focs/Schoning99}
U.~Sch{\"{o}}ning, A probabilistic algorithm for k-sat and constraint
  satisfaction problems, in: 40th Annual Symposium on Foundations of Computer
  Science, {FOCS} '99, 17-18 October, 1999, New York, NY, {USA}, {IEEE}
  Computer Society, 1999, pp. 410--414.
\newblock \href {https://doi.org/10.1109/SFFCS.1999.814612}
  {\path{doi:10.1109/SFFCS.1999.814612}}.

\bibitem{DBLP:conf/icalp/Liu18}
S.~Liu, Chain, generalization of covering code, and deterministic algorithm for
  k-sat, in: I.~Chatzigiannakis, C.~Kaklamanis, D.~Marx, D.~Sannella (Eds.),
  45th International Colloquium on Automata, Languages, and Programming,
  {ICALP} 2018, July 9-13, 2018, Prague, Czech Republic, Vol. 107 of LIPIcs,
  Schloss Dagstuhl - Leibniz-Zentrum f{\"{u}}r Informatik, 2018, pp.
  88:1--88:13.
\newblock \href {https://doi.org/10.4230/LIPIcs.ICALP.2018.88}
  {\path{doi:10.4230/LIPIcs.ICALP.2018.88}}.

\bibitem{DBLP:journals/cjtcs/PaturiPZ99}
R.~Paturi, P.~Pudl{\'{a}}k, F.~Zane,
  \href{http://cjtcs.cs.uchicago.edu/articles/1999/11/contents.html}{Satisfiability
  coding lemma}, Chic. J. Theor. Comput. Sci. 1999 (1999).
\newline\urlprefix\url{http://cjtcs.cs.uchicago.edu/articles/1999/11/contents.html}

\bibitem{DBLP:journals/jacm/PaturiPSZ05}
R.~Paturi, P.~Pudl{\'{a}}k, M.~E. Saks, F.~Zane, An improved exponential-time
  algorithm for \emph{k}-sat, J. {ACM} 52~(3) (2005) 337--364.
\newblock \href {https://doi.org/10.1145/1066100.1066101}
  {\path{doi:10.1145/1066100.1066101}}.

\bibitem{DBLP:conf/focs/Scheder21}
D.~Scheder, {PPSZ} is better than you think, in: 62nd {IEEE} Annual Symposium
  on Foundations of Computer Science, {FOCS} 2021, Denver, CO, USA, February
  7-10, 2022, {IEEE}, 2021, pp. 205--216.
\newblock \href {https://doi.org/10.1109/FOCS52979.2021.00028}
  {\path{doi:10.1109/FOCS52979.2021.00028}}.

\bibitem{monien1981upper}
B.~Monien, E.~Speckenmeyer, O.~Vornberger, Upper bounds for covering problems,
  Methods of operations research 43 (1981) 419--431.

\bibitem{DBLP:conf/soda/Hirsch98}
E.~A. Hirsch, \href{http://dl.acm.org/citation.cfm?id=314613.314838}{Two new
  upper bounds for {SAT}}, in: H.~J. Karloff (Ed.), Proceedings of the Ninth
  Annual {ACM-SIAM} Symposium on Discrete Algorithms, 25-27 January 1998, San
  Francisco, California, {USA}, {ACM/SIAM}, 1998, pp. 521--530.
\newline\urlprefix\url{http://dl.acm.org/citation.cfm?id=314613.314838}

\bibitem{DBLP:conf/isaac/Yamamoto05}
M.~Yamamoto, An improved {O}(1.234\({}^{\mbox{m}}\))-time deterministic
  algorithm for {SAT}, in: X.~Deng, D.~Du (Eds.), Algorithms and Computation,
  16th International Symposium, {ISAAC} 2005, Sanya, Hainan, China, December
  19-21, 2005, Proceedings, Vol. 3827 of Lecture Notes in Computer Science,
  Springer, 2005, pp. 644--653.
\newblock \href {https://doi.org/10.1007/11602613\_65}
  {\path{doi:10.1007/11602613\_65}}.

\bibitem{DBLP:journals/tcs/ChuXZ21}
H.~Chu, M.~Xiao, Z.~Zhang, An improved upper bound for {SAT}, Theor. Comput.
  Sci. 887 (2021) 51--62.
\newblock \href {https://doi.org/10.1016/j.tcs.2021.06.045}
  {\path{doi:10.1016/j.tcs.2021.06.045}}.

\bibitem{DBLP:journals/iandc/Gelder88}
A.~{Van Gelder}, A satisfiability tester for non-clausal propositional
  calculus, Inf. Comput. 79~(1) (1988) 1--21.
\newblock \href {https://doi.org/10.1016/0890-5401(88)90014-4}
  {\path{doi:10.1016/0890-5401(88)90014-4}}.

\bibitem{kullmann1997deciding}
O.~Kullmann, H.~Luckhardt, Deciding propositional tautologies: Algorithms and
  their complexity, preprint 82 (1997).

\bibitem{DBLP:journals/jar/Hirsch00a}
E.~A. Hirsch, New worst-case upper bounds for {SAT}, J. Autom. Reason. 24~(4)
  (2000) 397--420.
\newblock \href {https://doi.org/10.1023/A:1006340920104}
  {\path{doi:10.1023/A:1006340920104}}.

\bibitem{DBLP:conf/esa/Wahlstrom05}
M.~Wahlstr{\"{o}}m, An algorithm for the {SAT} problem for formulae of linear
  length, in: G.~S. Brodal, S.~Leonardi (Eds.), Algorithms - {ESA} 2005, 13th
  Annual European Symposium, Palma de Mallorca, Spain, October 3-6, 2005,
  Proceedings, Vol. 3669 of Lecture Notes in Computer Science, Springer, 2005,
  pp. 107--118.
\newblock \href {https://doi.org/10.1007/11561071\_12}
  {\path{doi:10.1007/11561071\_12}}.

\bibitem{DBLP:conf/wads/ChenL09}
J.~Chen, Y.~Liu, An improved {SAT} algorithm in terms of formula length, in:
  F.~K. H.~A. Dehne, M.~L. Gavrilova, J.~Sack, C.~D. T{\'{o}}th (Eds.),
  Algorithms and Data Structures, 11th International Symposium, {WADS} 2009,
  Banff, Canada, August 21-23, 2009. Proceedings, Vol. 5664 of Lecture Notes in
  Computer Science, Springer, 2009, pp. 144--155.
\newblock \href {https://doi.org/10.1007/978-3-642-03367-4\_13}
  {\path{doi:10.1007/978-3-642-03367-4\_13}}.

\bibitem{DBLP:journals/tcs/ChenXW17}
J.~Chen, C.~Xu, J.~Wang, Dealing with 4-variables by resolution: An improved
  maxsat algorithm, Theor. Comput. Sci. 670 (2017) 33--44.
\newblock \href {https://doi.org/10.1016/j.tcs.2017.01.020}
  {\path{doi:10.1016/j.tcs.2017.01.020}}.

\bibitem{DBLP:conf/ijcai/Xiao22}
M.~Xiao, An exact maxsat algorithm: Further observations and further
  improvements, in: L.~D. Raedt (Ed.), Proceedings of the Thirty-First
  International Joint Conference on Artificial Intelligence, {IJCAI} 2022,
  Vienna, Austria, 23-29 July 2022, ijcai.org, 2022, pp. 1887--1893.
\newblock \href {https://doi.org/10.24963/ijcai.2022/262}
  {\path{doi:10.24963/ijcai.2022/262}}.

\bibitem{DBLP:conf/aaai/AlferovB21}
V.~Alferov, I.~Bliznets,
  \href{https://ojs.aaai.org/index.php/AAAI/article/view/16479}{New length
  dependent algorithm for maximum satisfiability problem}, in: Thirty-Fifth
  {AAAI} Conference on Artificial Intelligence, {AAAI} 2021, Virtual Event,
  February 2-9, 2021, {AAAI} Press, 2021, pp. 3634--3641.
\newline\urlprefix\url{https://ojs.aaai.org/index.php/AAAI/article/view/16479}

\bibitem{DBLP:conf/sat/PengX21}
J.~Peng, M.~Xiao, A fast algorithm for {SAT} in terms of formula length, in:
  C.~Li, F.~Many{\`{a}} (Eds.), Theory and Applications of Satisfiability
  Testing - {SAT} 2021 - 24th International Conference, Barcelona, Spain, July
  5-9, 2021, Proceedings, Vol. 12831 of Lecture Notes in Computer Science,
  Springer, 2021, pp. 436--452.
\newblock \href {https://doi.org/10.1007/978-3-030-80223-3\_30}
  {\path{doi:10.1007/978-3-030-80223-3\_30}}.

\bibitem{DBLP:series/txtcs/FominK10}
F.~V. Fomin, D.~Kratsch, Exact Exponential Algorithms, Texts in Theoretical
  Computer Science. An {EATCS} Series, Springer, 2010.
\newblock \href {https://doi.org/10.1007/978-3-642-16533-7}
  {\path{doi:10.1007/978-3-642-16533-7}}.

\bibitem{DBLP:conf/iwpec/Iwata11}
Y.~Iwata, A faster algorithm for dominating set analyzed by the potential
  method, in: D.~Marx, P.~Rossmanith (Eds.), Parameterized and Exact
  Computation - 6th International Symposium, {IPEC} 2011, Saarbr{\"{u}}cken,
  Germany, September 6-8, 2011. Revised Selected Papers, Vol. 7112 of Lecture
  Notes in Computer Science, Springer, 2011, pp. 41--54.
\newblock \href {https://doi.org/10.1007/978-3-642-28050-4\_4}
  {\path{doi:10.1007/978-3-642-28050-4\_4}}.

\bibitem{DBLP:journals/algorithmica/ChenKX05}
J.~Chen, I.~A. Kanj, G.~Xia, Labeled search trees and amortized analysis:
  Improved upper bounds for np-hard problems, Algorithmica 43~(4) (2005)
  245--273.
\newblock \href {https://doi.org/10.1007/s00453-004-1145-7}
  {\path{doi:10.1007/s00453-004-1145-7}}.

\bibitem{DBLP:books/daglib/Gasper2010}
S.~Gaspers, Exponential Time Algorithms - Structures, Measures, and Bounds,
  {VDM}, 2010.

\bibitem{DBLP:journals/iandc/XiaoN17}
M.~Xiao, H.~Nagamochi, Exact algorithms for maximum independent set, Inf.
  Comput. 255 (2017) 126--146.
\newblock \href {https://doi.org/10.1016/j.ic.2017.06.001}
  {\path{doi:10.1016/j.ic.2017.06.001}}.

\bibitem{DBLP:journals/jacm/FominGK09}
F.~V. Fomin, F.~Grandoni, D.~Kratsch, A measure {\&} conquer approach for the
  analysis of exact algorithms, J. {ACM} 56~(5) (2009) 25:1--25:32.
\newblock \href {https://doi.org/10.1145/1552285.1552286}
  {\path{doi:10.1145/1552285.1552286}}.

\bibitem{DBLP:journals/algorithmica/XiaoN16}
M.~Xiao, H.~Nagamochi, An exact algorithm for {TSP} in degree-3 graphs via
  circuit procedure and amortization on connectivity structure, Algorithmica
  74~(2) (2016) 713--741.
\newblock \href {https://doi.org/10.1007/s00453-015-9970-4}
  {\path{doi:10.1007/s00453-015-9970-4}}.

\bibitem{DBLP:journals/dam/RooijB11}
J.~M.~M. van Rooij, H.~L. Bodlaender, Exact algorithms for dominating set,
  Discret. Appl. Math. 159~(17) (2011) 2147--2164.
\newblock \href {https://doi.org/10.1016/j.dam.2011.07.001}
  {\path{doi:10.1016/j.dam.2011.07.001}}.

\bibitem{DBLP:journals/jacm/DavisP60}
M.~Davis, H.~Putnam, A computing procedure for quantification theory, J. {ACM}
  7~(3) (1960) 201--215.
\newblock \href {https://doi.org/10.1145/321033.321034}
  {\path{doi:10.1145/321033.321034}}.

\bibitem{DBLP:conf/sat/Wahlstrom05}
M.~Wahlstr{\"{o}}m, Faster exact solving of {SAT} formulae with a low number of
  occurrences per variable, in: F.~Bacchus, T.~Walsh (Eds.), Theory and
  Applications of Satisfiability Testing, 8th International Conference, {SAT}
  2005, St. Andrews, UK, June 19-23, 2005, Proceedings, Vol. 3569 of Lecture
  Notes in Computer Science, Springer, 2005, pp. 309--323.
\newblock \href {https://doi.org/10.1007/11499107\_23}
  {\path{doi:10.1007/11499107\_23}}.

\end{thebibliography}

\end{document}